\documentclass{IEEEtran}
\usepackage{amsmath,amssymb,amsfonts}
\usepackage{amsthm}
\usepackage{algorithm}
\usepackage{algorithmic}
\usepackage{graphicx}
\usepackage{textcomp}
\usepackage{nicefrac}

\usepackage{booktabs}
\usepackage{array}
\usepackage{xcolor}
\usepackage{pifont}
\usepackage{caption}
\usepackage{graphicx}
\usepackage{amsmath}
\usepackage{amsfonts}
\usepackage{amssymb}
\usepackage{cases}
\usepackage{amsthm}
\usepackage[colorlinks=true,
            linkcolor=blue,
            citecolor=blue,
            urlcolor=blue]{hyperref}

\newcommand{\lf}{\left}
\newcommand{\rg}{\right}
\DeclareMathOperator*{\argmax}{\arg\max}

\newtheorem{theorem}{Theorem}
\newtheorem{observation}{Observation}
\newtheorem{lemma}{Lemma}
\newtheorem{corollary}{Corollary}

\theoremstyle{definition}
\newtheorem{definition}{Definition}

\newtheorem{assumption}{Assumption}
\newtheorem{example}{Counter-Example}

\theoremstyle{remark}
\newtheorem{remark}{Remark}

\renewcommand{\phi}{\varphi}
\renewcommand{\epsilon}{\varepsilon}

\newcommand{\R}{\mathbb{R}}
\newcommand{\extended}{\bar{\R}}
\newcommand{\N}{\mathbb{N}}
\newcommand{\Rn}{\R^n}
\newcommand{\Ru}{\R^{p_1}}
\newcommand{\Rd}{\R^{p_2}}

\newcommand{\Rm}{\R^m}

\newcommand{\sbs}{\subseteq}
\newcommand{\ssbs}{\subset}
\newcommand{\sps}{\supseteq}

\newcommand{\tms}{\times}
\newcommand{\ii}{\infty}

\newcommand{\dynamics}{f}

\newcommand{\tVal}{t}
\newcommand{\tNot}{\tVal_0}
\newcommand{\tDum}{s}
\newcommand{\tDumDum}{\tau}
\newcommand{\tInit}{S}
\newcommand{\tFin}{T}
\newcommand{\tFinDum}{S}

\newcommand{\xVal}{x}
\newcommand{\xNot}{\xVal_0}

\newcommand{\uVal}{u}

\newcommand{\dVal}{d}

\newcommand{\uVals}{\mathcal{U}}
\newcommand{\dVals}{\mathcal{D}}

\newcommand{\uSig}{\mathrm{u}}
\newcommand{\dSig}{\mathrm{d}}

\newcommand{\uSigs}{\mathrm{U}}
\newcommand{\dSigs}{\mathrm{D}}

\newcommand{\dStrat}{\delta}
\newcommand{\dStrats}{\Delta}

\newcommand{\xSig}{\mathrm{x}}
\newcommand{\xSigs}{\mathrm{X}}

\newcommand{\traj}[2]{\xSig_{#1}^{#2}}
\newcommand{\naughtTraj}{\traj{x_0,t_0}{\uSig,\dSig}}
\newcommand{\naughtTrajDerivative}{\dot{\xSig}_{x_0,t_0}^{\uSig,\dSig}}
\newcommand{\standardTraj}{\traj{x,t}{\uSig,\dSig}}
\newcommand{\stratTraj}{\traj{x,t}{\uSig,\dStrat(\uSig)}}

\newcommand{\evalTimes}{\mathcal{T}}
\newcommand{\rob}{J}

\newcommand{\supu}{\sup_{\uSig \in \uSigs}}
\newcommand{\supd}{\sup_{\dSig \in \dSigs}}
\newcommand{\infd}{\inf_{\dStrat \in \dStrats}}

\newcommand{\val}{V}

\newcommand{\tarFn}{r}
\newcommand{\conFn}{q}
\newcommand{\terFn}{\ell}

\newcommand{\tarSet}{\mathcal{R}}
\newcommand{\conSet}{\mathcal{Q}}
\newcommand{\terSet}{\mathcal{L}}

\newcommand{\tarEval}[2]{\tarFn \lf( #1(#2), #2 \rg)}
\newcommand{\sconeval}[2]{\conFn \lf( #1(#2), #2 \rg)}
\newcommand{\terEval}[2]{\terFn \lf( #1(#2)\rg)}

\newcommand{\graData}{\tarFn, \conFn, \terFn; \tFin}

\newcommand{\robGRA}{\rob_\mathrm{GRA}}
\newcommand{\robRS}{\rob_\mathrm{T}}
\newcommand{\robRT}{\rob_\mathrm{R}}
\newcommand{\robAT}{\rob_\mathrm{A}}
\newcommand{\robRSAT}{\rob_\mathrm{CT}}
\newcommand{\robRTAT}{\rob_\mathrm{RA}}

\newcommand{\valGRA}{\val_\mathrm{GRA}}

\newcommand{\valAT}{\val_\mathrm{A}}
\newcommand{\valRSAT}{\val_\mathrm{CT}}
\newcommand{\valRTAT}{\val_\mathrm{RA}}

\newcommand{\hjDep}{v}
\newcommand{\hjInd}{z}
\newcommand{\hjSol}{v}
\newcommand{\hjFn}{F}
\newcommand{\hjData}{g}
\newcommand{\hjDomain}{\Omega}
\newcommand{\hjTest}{\varphi}
\newcommand{\costate}{\lambda}
\newcommand{\tcostate}{p}
\newcommand{\hamiltonian}{H}
\newcommand{\hjInit}{S}
\newcommand{\hjFin}{T}

\newcommand{\ppt}{\partial_\tVal}
\newcommand{\pps}{\partial_\tDum}
\newcommand{\compact}{\mathcal{K}}
\newcommand{\RleT}{\R_{\le \tFin}}
\newcommand{\RleS}{\R_{\le \tFinDum}}

\newcommand{\xDum}{y}
\newcommand{\xDumDum}{z}
\newcommand{\coDum}{\mu}

\newcommand{\blackbox}{\hfill\rule{6pt}{6pt}}

\newcommand{\homeo}{\sigma} 
\DeclareMathOperator{\lip}{Lip}

\newcommand{\cmark}{\textcolor{green!60!black}{\ding{51}}} 
\newcommand{\xmark}{\textcolor{red}{\ding{55}}}             

\newcommand{\comment}{$//~$}

\title{Extending and Unifying the Fundamental Tasks of Hamilton-Jacobi Reachability Analysis
\author{Dylan Hirsch, William Sharpless, Donggun Lee, and Sylvia Herbert}
\thanks{Research reported in this publication was supported by the National Institutes of Health under award number T32EB009380. The content is solely the responsibility of the authors and does not necessarily represent the official views of the National Institutes of Health.}
\thanks{Dylan Hirsch (corresponding author), William Sharpless, and Sylvia Herbert are with the Department of Mechanical and Aerospace Engineering, University of California, San Diego, 9500 Gilman Drive, La Jolla, CA 92093.
Donggun Lee is with the Department of Mechanical and Aerospace Engineering, North Carolina State University, 1840 Entrepreneur Drive, Raleigh, NC 27695.
        {\tt\small dhirsch@ucsd.edu,
        whsarpless@ucsd.edu,
        dlee48@ncsu.edu,
        sherbert@ucsd.edu.}}%
}

\begin{document}
\maketitle

\begin{abstract}
In this work, we introduce the generalized reach-avoid (GRA) task, which both extends and unifies the canonical tasks of Hamilton-Jacobi Reachability (HJR).
We show that the GRA not only serves as a common primitive in this class of fundamental tasks, but also strictly extends the fundamental tasks that can be solved with HJR.
Moreover, the GRA formulation enables one to compute the value functions of certain composite tasks, including ones from timed temporal logic, by decomposing the value function of the composite task into value functions of GRA tasks.
We additionally show that the GRA is also a natural primitive to consider from a PDE perspective, as it can be used to represent all sufficiently regular solutions of the HJ-PDE that is canonical to HJR.
Collectively, the results in this work show the theoretical and practical utility of this task within the increasingly important framework of HJR.
\end{abstract}

\section{Introduction}
Hamilton-Jacobi reachability analysis (HJR) is a mathematical and algorithmic framework that is central to safety-critical control \cite{HJR-Survey}.
The theory behind this framework is rooted in the classical theory of differential games \cite{Isaacs-Differential-Games,Friedman-Differential-Games,evans-1984}, but the standard integrated-running-cost functional is replaced in HJR by performance functionals that are natural for obstacle-avoidance and target-reaching tasks.
Building on the canonical HJR works \cite{mitchell-2005,lygeros-2011,fisac-chen-2015}, recent works have elucidated deep connections between HJR and control barrier functions \cite{choi-2021,viscosity-cbfs}, control Lyapunov functions \cite{reach-and-stabilize-avoid},
state-constrained optimal control \cite{Bokanowski_2010,Altarovici_2012,Lee_2020,Lee_2023,efficient-state-constrained,Gammoudi_2023}, and safe reinforcement learning \cite{Reachability-RL,hsu-2021,so-2024, Ganai_2023,Ganai_2024,sharpless2026dual,sharpless2026bellman}.
Moreover, this framework has been applied to solve problems in transportation \cite{Jiang-2020,Jiang-2024}, aerospace \cite{Chen_2018}, and robotics \cite{Fisac_robotics,Pandya_2025,FasTrack,pmlr-v305-tonkens25a}.

In the HJR framework, one first chooses a performance functional that scores how well a given trajectory satisfies some task (e.g. distance to a goal).
One next computes the value function associated with this performance functional under the best-case control input and worst-case disturbance strategy.
The sign of this value function at a given initial state and time indicates whether one can control the system to satisfy the task despite the worst-case disturbance.
Finally, one synthesizes a closed-loop controller using this value function.

\begin{figure}[t]
    \centering
    \includegraphics[trim={17cm 0cm 14.75cm 0cm}, clip, width=1\linewidth]{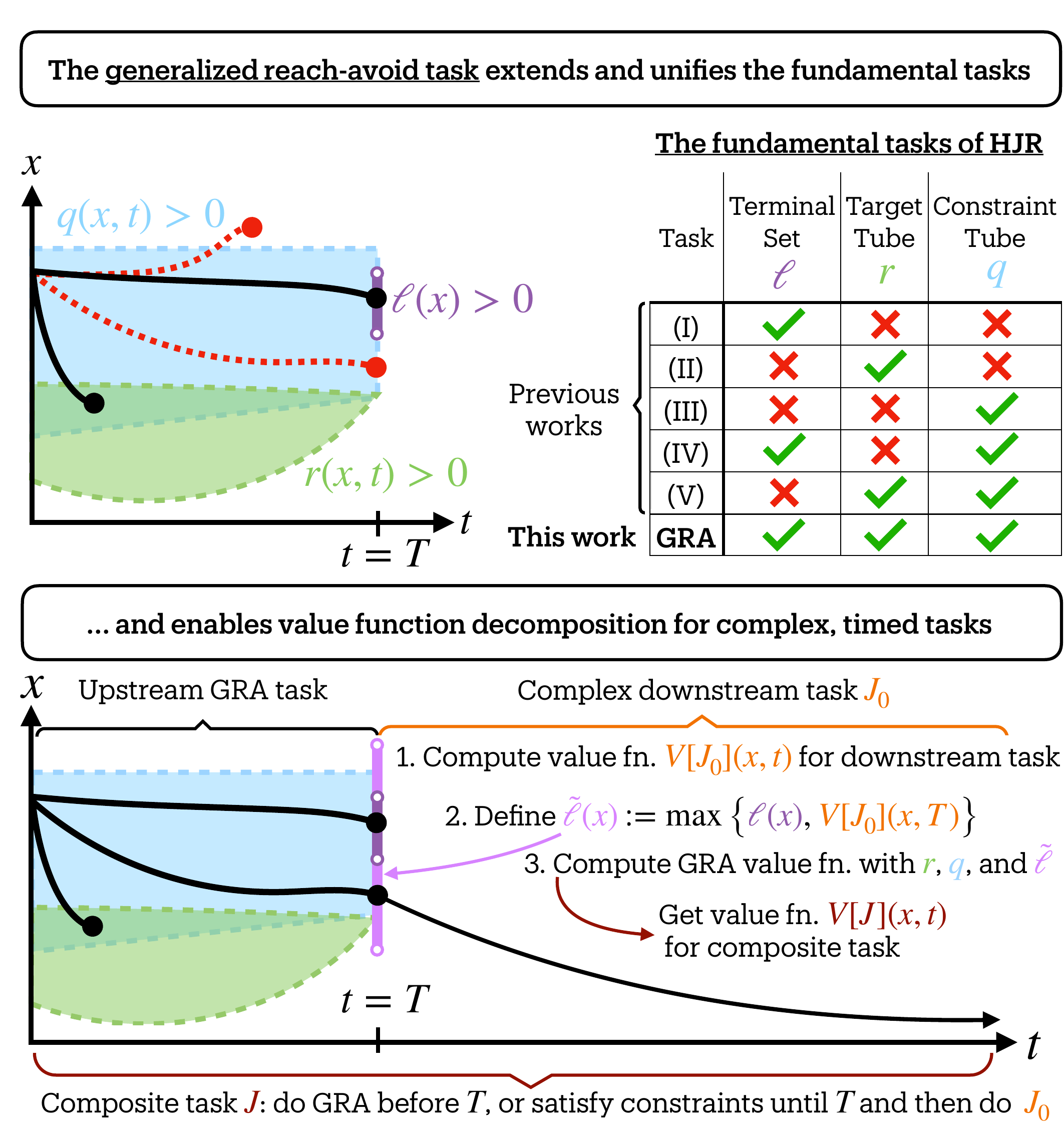}
    \caption{Graphical abstract.
    We introduce the generalized reach avoid task, a fundamental task that requires a system to either be in a terminal set (purple) \textit{at} the final time or reach a target tube (green) \textit{by} the final time, while remaining in a constraint tube (light blue) \textit{until} achieving either objective.
    Satisfactory trajectories are represented by solid, black curves and unsatisfactory ones by dotted, red curves. 
    Unlike the tubes, which are open regions in space-time, the terminal set exists only at the task's terminal time.
    By simultaneously considering these three elements, we extend the fundamental tasks that can be addressed with Hamilton-Jacobi reachability, unify the previous fundamental tasks into a single primitive, and provide new approaches to compute value functions for composite tasks.}
    \label{fig:graphical-abstract}
\end{figure}

Together with the classical theory of terminal-time tasks \cite{evans-1984}, the original HJR works \cite{mitchell-2005,lygeros-2011,fisac-chen-2015} established five canonical finite-horizon tasks that could be analyzed via the aforementioned pipeline (Fig. \ref{fig:graphical-abstract}, top):
\begin{enumerate}
    \item [(I)] \textbf{Terminal-set task:} be in a terminal set (e.g. a rendezvous area) \textit{at} a final time \cite{evans-1984},
    \item[(II)] \textbf{Reach task:} reach a target tube (e.g. a moving goal) \textit{by} a final time \cite{mitchell-2005},
    \item[(III)] \textbf{Avoid task:} remain within a constraint tube (e.g. the complement of a moving obstacle) \textit{until} a final time \cite{mitchell-2005},
    \item[(IV)] \textbf{Constrained terminal-set task:} be in a terminal set \textit{at} a final time, while remaining within a constraint tube \textit{until} that time \cite{lygeros-2011},
    \item[(V)] \textbf{Reach-avoid task:} reach a target tube \textit{by} a final time, while remaining within a constraint tube up \textit{until} reaching the target \cite{lygeros-2011,fisac-chen-2015}.
\end{enumerate}

In terms of their mathematical formulation within the HJR framework, no one of these five tasks can be considered most general.
For example, neither the terminal-set task (I) nor the avoid task (III) can be formulated as a reach-avoid task (V).

A major focus of recent works involves computing the value functions for complex tasks, such as those specified using temporal logic \cite{stl-meets-reachability,sharpless2026dual,sharpless2026bellman,Jiang-2020,Jiang-2024,Xiang-CDC-2025}.
To do so, dynamic programming approaches can be used to decompose the value function for the complex task into value functions for these canonical tasks \cite{sharpless2026dual,sharpless2026bellman,Xiang-CDC-2025,hirsch-2206}.
Computationally, to solve the value function for the complex task, one then iteratively solves the canonical tasks in a manner determined by the structure of the decomposition.
We can thus think of the canonical tasks as a set of fundamental (a.k.a. ``primitive'', ``atomic'') tasks and the complex tasks as composite tasks.
In particular, the fundamental tasks whose value function one can compute also determine the composite tasks one can analyze.

For the five canonical tasks listed above, each value function can be computed by solving an associated Hamilton-Jacobi terminal value problem (HJ-TVP), which consists of a Hamilton-Jacobi partial differential equation (HJ-PDE) together with boundary conditions at the task's terminal time.
In particular, the HJ-PDEs for the canonical tasks can all be written in the same form, i.e. with the same Hamiltonian, domain, and variational structure.
It is thus natural to ask the following questions:

\begin{itemize}
    \item Are there additional fundamental tasks whose value functions can be computed using HJR's canonical HJ-PDE?
    \item If so, do these new fundamental tasks enable one to handle new composite tasks?
    \item In terms of the underlying HJ-PDE, does this generalization ``complete'' the previous tasks in some mathematical sense?
\end{itemize}

In this work, we show the answer to all of these questions is yes.
In particular, we introduce the generalized reach-avoid (GRA) task, which requires either being in a terminal set \textit{at} a final time, or reaching a target tube \textit{by} this final time, while remaining in a constraint tube \textit{until} doing so (Fig. \ref{fig:graphical-abstract}, top).

We show that the value function for the GRA task can be solved via an HJ-TVP that strictly generalizes the HJ-TVPs of the canonical tasks.
Specifically, the underlying HJ-PDE remains unchanged, but the terminal condition is relaxed.
Through this relaxation, the GRA formulation not only extends the five canonical tasks in HJR, but in fact unifies them within a single HJ-TVP.

With regard to the second question, in addition to handling new fundamental tasks, one can use the GRA formulation to compute the value functions for certain composite tasks that depend on specific times, as in timed temporal logic, or for systems with dynamics that are piecewise-continuous in time.
These utilities are enabled by a key property of the GRA, a particular value function decomposition that specifically leverages the terminal set (Fig. \ref{fig:graphical-abstract}, bottom).

As for the third question, we show that every sufficiently regular solution of HJR's canonical HJ-PDE can be represented by the value function for a GRA task with an appropriate terminal set function.
(By contrast, the value function for e.g. a reach-avoid task (V) can represent only one solution to this PDE.)

An auxiliary utility of this work is that we establish the key theoretical results of HJR under relatively weak assumptions.
Previous works established the theory for the canonical tasks separately under different technical assumptions.
For example, to our knowledge, problem (IV) has only been formally studied within the context of static constraints \cite{lygeros-2011}, and problem (V) has only been formally studied for bounded, uniformly continuous target and obstacle functions when the target is time-varying \cite{fisac-chen-2015}.
Through unifying the tasks via the GRA, we thus establish general, uniform results for the five standard tasks as well.
In particular, we consider dynamic targets and constraints, and we only assume continuity of their representative functions.
~~\\
\\
\noindent\textbf{Primary contributions:}
\begin{itemize}
    \item We introduce the GRA task, which both extends and unifies the five canonical fundamental tasks of HJR.
    \item We establish a value-function decomposition (Theorem \ref{thm:value-decomposition}) for a certain composite task involving an upstream GRA task and a separate downstream task, which need not itself be a GRA task (Fig. \ref{fig:graphical-abstract}, bottom).
    We show that the value function of this composite task can be represented exactly as a single GRA value function by modifying the terminal set function of the upstream GRA task.
    \item We characterize the GRA value function as the unique continuous viscosity solution of an HJ-TVP (Theorem \ref{thm:dpe}).
    We then show a critical theoretical property of the GRA value function, namely that every sufficiently regular solution of the canonical HJ-PDE in HJR can be represented as the value function of a GRA task using an appropriate terminal-set function (Corollary \ref{cor:parameterization}).
    \item Through this unification of the previous tasks, we establish the key theory for fundamental tasks in HJR under a uniform set of weak technical assumptions.
    \item We show examples of how our results can be used to compute the value functions for a range of new problems.
    This includes a ``reach-avoid or always-avoid'' problem, a problem in timed temporal logic, and a problem where the system dynamics are piecewise-continuous.
\end{itemize}
~~\\
\noindent\textbf{Content of each section:}
It will first be convenient to abstract HJR's usual pipeline for constructing value functions.
This abstraction will allow us to simultaneously consider the five canonical tasks of HJR, the GRA task, and composite tasks that can be built from these fundamental tasks.
We develop this pipeline in Section \ref{sec:background}.
We then formally define the GRA task in Section \ref{sec:gra} and here connect it to the canonical fundamental tasks.
The main theoretical results are in Section \ref{sec:results}.
Finally, we explore three examples that demonstrate the utility of the GRA formulation in Section \ref{sec:examples}.
~\\
~\\
\noindent\textbf{Notation:}
We denote the extended real line by $\extended := [-\ii,\ii]$, endowed with its usual topology.
We say a map from some set into $\extended$ is real-valued if it nowhere attains $\pm \ii$.
Given $g: A \to Y$ and $B \sbs A$, we let $g|_B: B \to Y$ be the restriction of $g$ to $B$.
Given a continuous function $w: C \sbs \Rm \to \R$, we say that $w$ continuously extends to $D \sbs \partial C$ if there exists a continuous $\bar{w}: C \cup D \to \R$ such that $\bar{w}|_C = w$ (note that the extension must be real-valued).
For each $\tFin \in \R$ we let $\RleT := (-\ii,\tFin]$.
Given $z \in \Rm$ and $R > 0$, we let $B(z; R) \sbs \Rm$ be the open ball of radius $R$ with center $z$.
Given a map $h: \Rm \to \extended$, we define the strict zero super-level set of $h$ to be $\{z \in \Rm \mid h(z) > 0\}$.

\section{Setup and Background}\label{sec:background}
\subsection{Dynamics}
We consider a system with dynamics
\begin{equation}\label{eqn:dynamics}
\dot{\xSig}(\tVal) = \dynamics(\xSig(\tVal), \uSig(\tVal), \dSig(\tVal), \tVal),
\end{equation}
where $\dynamics:\Rn \tms \uVals \tms \dVals \tms \R \to \Rn$, with $\uVals \ssbs \Ru$ and $\dVals \ssbs \Rd$.
Here, $\xSig$ is the state, $\uSig$ is the control, and $\dSig$ is the disturbance.
We will assume the following throughout the sequel, which guarantee global existence and uniqueness of trajectories for this system.

\begin{assumption}\label{assumption:compactness}
    The sets $\uVals$ and $\dVals$ are compact.
\end{assumption}
\begin{assumption}\label{assumption:regularity}
~
\begin{itemize}
    \item For all $\tVal \in \R$, the map $(\xVal, \uVal, \dVal) \mapsto \dynamics(\xVal, \uVal, \dVal, \tVal)$ is continuous.
    \item For all $\xVal \in \Rn$, $\uVal \in \uVals$, and $\dVal \in \dVals$, the map $\tVal \mapsto \dynamics(\xVal, \uVal, \dVal, \tVal)$ is measurable.
    \item There exists some $K > 0$ such that $\|\dynamics(\xVal,\uVal,\dVal,\tVal)\| \le K(1 + \|\xVal\|)$ for all $\uVal \in \uVals$, $\dVal \in \dVals$, and $\tVal \in \R$.
    \item For each compact set $\compact \ssbs \Rn$ there exists an $L > 0$ such that $\|\dynamics(\xVal_1,\uVal,\dVal,\tVal) - \dynamics(\xVal_2,\uVal,\dVal,\tVal)\| \le L \|\xVal_1 - \xVal_2\|$ for all $\xVal_1,\xVal_2 \in \compact$, $\uVal \in \uVals$, $\dVal \in \dVals$, and $\tVal \in \R$.
\end{itemize}
\end{assumption}

\subsection{Signals and trajectories}
We denote by $\uSigs$ the set of all measurable control signals $\uSig: \R \to \uVals$ and by $\dSigs$ the set of all measurable disturbance signals $\dSig: \R \to \dVals$.
We now precisely define the state trajectory $\naughtTraj: \R \to \Rn$ that results from a control signal $\uSig$ and disturbance signal $\dSig$, given that the system is in state $\xNot$ at time $\tNot$.

For each $\xNot \in \Rn$, $\tNot \in \R$, $\uSig \in \uSigs$, and $\dSig \in \dSigs$,
we let $\naughtTraj$ be the Carath\'{e}odory solution of \eqref{eqn:dynamics} under the condition $\xSig(\tNot) = \xNot$.
More explicitly, $\naughtTraj$ is defined to be the unique locally absolutely continuous map from $\R$ to $\Rn$ for which $\naughtTraj(\tNot) = \xNot$ and $\naughtTrajDerivative(\tVal) = \dynamics( \naughtTraj(\tVal), \uSig(\tVal), \dSig(\tVal), \tVal)$ for a.e. $\tVal \in \R$ (existence and uniqueness of this solution follow from Assumptions \ref{assumption:compactness} and \ref{assumption:regularity}; see Theorem 1.2.1 in \cite{Friedman-Differential-Games}). 

\subsection{Performance functionals}

We will denote by $\xSigs$ the set of all plausible trajectories, i.e. continuous maps $\xSig: \R \to \Rn$.
We endow $\xSigs$ with the topology of uniform convergence on compact sets.
In this topology, a sequence $(\xSig_{i})_{i \in \N}$ in $\xSigs$ converges to $\xSig \in \xSigs$ iff $\max_{\tVal \in \compact} \| \xSig_i(\tVal) - \xSig(\tVal) \| \to 0$ for all non-empty, compact $\compact \ssbs \R$.

We can think of any map $\rob: \xSigs \tms \evalTimes \to \extended$, where $\evalTimes \sbs \R$, as a performance functional for some task, where $\rob$ assigns a score $\rob(\xSig, \tVal)$ to a trajectory $\xSig$ based on the time $\tVal$ at which the task begins.
Here, $\evalTimes$ represents the set of times at which it is reasonable for the task to begin (e.g. for the task ``remain inside until noon today," we may define $\evalTimes$ to be times no later than noon).

In HJR, the performance functional $\rob$ for each task is chosen such that $\rob(\xSig, \tVal) > 0$ iff the trajectory $\xSig$ satisfies the task starting at time $\tVal$.\footnote{More generally, the robustness metric for any specification in temporal logic is a performance functional whose sign corresponds to qualitative satisfaction of the specification \cite{donze-robustness-metric}. Examples of the robustness metric for various task specifications will be provided in Section \ref{sec:examples}.}

\subsection{Differential games and value functions}
We now consider a differential game in which the controller player would like to maximize a given performance functional and the disturbance player would like to minimize it.

We first specify the information pattern for the game.
Conceptually, we can think of any map $\dStrat:\uSigs \to \dSigs$ as representing a strategy for a disturbance player that selects the disturbance input $\dSig = \dStrat(\uSig)$ based upon the control input $\uSig$.
In HJR, we make the restriction that the disturbance's strategy cannot use future information to inform current decisions via the following definition.

\begin{definition}[Non-Anticipativity]
The map $\dStrat :\uSigs \to \dSigs$ is \textbf{non-anticipative} if for all $\tVal \in \R$ and $\uSig_1,\uSig_2 \in \uSigs$ such that $\uSig_1(\tDum) = \uSig_2(\tDum)$ for a.e. $\tDum \le \tVal$, we also have $\dStrat[\uSig_1](\tDum) = \dStrat[\uSig_2](\tDum)$ for a.e. $\tDum \le \tVal$.
\end{definition}
\noindent We denote by $\dStrats$ the set of all non-anticipative $\dStrat: \uSigs \to \dSigs$.

Adapting the notation used in \cite{sharpless2026bellman}, given a performance functional $\rob: \xSigs \tms \evalTimes \to \extended$, we define the value function $\val[\rob]: \Rn \tms \evalTimes \to \extended$ by
\begin{equation}\label{eqn:value-function-definition}
    \val[\rob](\xVal, \tVal) = \infd \supu \rob(\stratTraj, \tVal).
\end{equation}

The value function $\val[\rob](\xVal, \tVal)$ conceptually represents the performance score that will result from both players acting optimally under the non-anticipative information structure, given that the system is initialized in state $\xVal$ at time $\tVal$.
We have the following useful fact.
\begin{lemma} \label{lem:continuous-evaluations-have-continuous-values}
Let $\rob: \xSigs \tms \evalTimes \to \extended$ for some $\evalTimes \sbs \R$.
If $\rob$ is continuous, then $\val[\rob]$ is also continuous.
If we in addition have $\rob < \ii$ (resp. $\rob > -\ii$), then $\val[\rob] < \ii$ (resp. $\val[\rob] > -\ii$). 
\begin{proof}
See Section \ref{sec:appendix-basic-proofs} in the appendix.
\end{proof}

\end{lemma}

In HJR, the sign of the value $\val[\rob](\xVal, \tVal)$ is used to determine whether a task can be satisfied starting from the state $\xVal$ at time $\tVal$.
Formally, we have the following result:
\begin{lemma}\label{lem:value-and-satisfiability}
Let $\evalTimes \sbs \R$, $\rob: \xSigs \tms \evalTimes \to \extended$, $\xVal \in \Rn$, and $\tVal \in \evalTimes$.
If $\val[\rob](\xVal, \tVal) > 0$, then for each $\dStrat \in \dStrats$ there is a $\uSig \in \uSigs$ such that $\rob(\stratTraj, \tVal) > 0$.
Similarly, if $\val[\rob](\xVal, \tVal) < 0$, then there is some $\dStrat \in \dStrats$ such that $\rob(\stratTraj, \tVal) < 0$ for all $\uSig \in \uSigs$.
\end{lemma}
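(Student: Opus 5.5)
The plan is to unwind the definition \eqref{eqn:value-function-definition} of $\val[\rob]$ as an infimum over $\dStrats$ of suprema over $\uSigs$. The two claims then follow from the elementary characterization of strict inequalities against infima and suprema in $\extended$. No continuity or regularity of $\rob$ is needed, so Lemma \ref{lem:continuous-evaluations-have-continuous-values} will not be invoked.

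For the first claim, suppose $\val[\rob](\xVal,\tVal) > 0$. Since $\val[\rob](\xVal,\tVal) = \infd \supu \rob(\stratTraj,\tVal)$ is a lower bound for each inner supremum, every $\dStrat \in \dStrats$ satisfies
\[
\supu \rob(\stratTraj,\tVal) \ge \val[\rob](\xVal,\tVal) > 0 .
\]
Fix such a $\dStrat$. A supremum in $\extended$ that is strictly positive cannot have every element $\le 0$, since otherwise $0$ would be an upper bound and the supremum would be $\le 0$. Hence some $\uSig \in \uSigs$ satisfies $\rob(\stratTraj,\tVal) > 0$. This argument goes through even when the supremum equals $+\ii$.

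For the second claim, suppose $\val[\rob](\xVal,\tVal) < 0$. By the same reasoning applied to the infimum, not every $\dStrat$ can have $\supu \rob(\stratTraj,\tVal) \ge 0$, because $0$ would then be a lower bound and the infimum would be $\ge 0$. So there is a $\dStrat \in \dStrats$ with $\supu \rob(\stratTraj,\tVal) < 0$. Every element of a set is bounded above by its supremum, so $\rob(\stratTraj,\tVal) < 0$ for all $\uSig \in \uSigs$.

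The only point requiring care is that values lie in $\extended$, including possibly $\pm\ii$. The sup/inf characterizations used above are purely order-theoretic and hold in any complete linear order, so no real obstacle is expected. The proof is essentially a direct verification.
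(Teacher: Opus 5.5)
Your proof is correct and takes the same approach as the paper, which simply states that the lemma follows directly from the definition \eqref{eqn:value-function-definition}. You have written out the order-theoretic details the paper leaves implicit, including the cases where values are $\pm\infty$.
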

\begin{proof}
    Follows directly from \eqref{eqn:value-function-definition}. 
\end{proof}

\section{The generalized reach-avoid task}\label{sec:gra}

\subsection{The GRA performance functional and its underlying task}

\begin{definition}
    A \textbf{GRA instance} is a tuple $(\graData)$, where
    \begin{itemize}
        \item $\tarFn: \Rn \tms \R \to [-\ii,\ii)$ is continuous,
        \item $\conFn: \Rn \tms \R \to (-\ii,\ii]$ is continuous,
        \item $\terFn: \Rn \to \R$ is
        continuous,
        \item $\tFin \in \R$.
    \end{itemize}
\end{definition}
In this GRA instance, we refer to $\tarFn$ as the target tube function, $\conFn$ as the constraint tube function, $\terFn$ as the terminal set function, and $\tFin$ as the terminal time of the task.
A GRA instance is visualized in the top panel of Fig. $\ref{fig:graphical-abstract}$.

For each GRA instance $(\graData)$, we define the GRA performance functional $\robGRA[\graData]: \xSigs \tms \RleT \to \R$ by
\begin{align}\label{eqn:gra-definition}
    \robGRA[\graData](\xSig, & \tVal) :=  \nonumber \\
    \max\biggl\{
    \max_{\tDum \in [\tVal, \tFin]} &\min\Bigl\{ \tarEval{\xSig}{\tDum}, \min_{\tDumDum \in [\tVal, \tDum]} \sconeval{\xSig}{\tDumDum} \Bigr\}, \nonumber\\
    &\min\Bigl\{ \terEval{\xSig}{\tFin}, \min_{\tDumDum \in [\tVal, \tFin]} \sconeval{\xSig}{\tDumDum} \Bigr\}
    \biggr\}.
\end{align}

Let us now make explicit the qualitative task that conceptually underlies this performance functional.
Consider a target tube $\tarSet \sbs \Rn \tms \R$, a constraint tube $\conSet \sbs \Rn \tms \R$, a terminal set $\terSet \sbs \Rn$, and a terminal time $\tFin \in \R$.\footnote{We use the term \textit{tube} to emphasize that $\tarSet$ and $\conSet$ are open subsets of space-time $(\Rn \tms \R)$, which in practice allows one to represent moving targets or obstacles.
By contrast, the terminal \textit{set} $\terSet$ is an open subset of only space $(\Rn)$.
Conceptually, it is useful to think of $\terSet$ as being embedded within the slice of space-time at time $\tFin$, i.e. $\Rn \tms \{\tFin\}$, as visualized in Figure \ref{fig:graphical-abstract}.}
We assume $\tarSet$, $\conSet$, and $\terSet$ are all open sets.
The GRA task consists of either reaching $\tarSet$ \textit{by} time $\tFin$ while remaining in $\conSet$ \textit{until} the reach time, or being in $\terSet$ \textit{at} time $\tFin$ while remaining in $\conSet$ \textit{until} then.
More formally:
\begin{definition}
Let $\tarSet \sbs \Rn \tms \R$, $\conSet \sbs \Rn \tms \R$, and $\terSet \sbs \Rn$ all be open, and let $\tFin \in \R$.
A pair $(\xSig, \tVal) \in \xSigs \tms \RleT$ \textbf{satisfies the GRA task} w.r.t. $\tarSet$, $\conSet$, $\terSet$, and $\tFin$ if either of the two following conditions holds:
\begin{itemize}
    \item[(\textit{i})] there is some $\tDum \in [\tVal, \tFin]$ such that $(\xSig(\tDum), \tDum) \in \tarSet$ and $(\xSig(\tDumDum), \tDumDum) \in \conSet$ for all $\tDumDum \in [\tVal, \tDum]$,
    \item[(\textit{ii})] $\xSig(\tFin) \in \terSet$ and $(\xSig(\tDumDum), \tDumDum) \in \conSet$ for all $\tDumDum \in [\tVal, \tFin]$.
\end{itemize}
\end{definition}

\begin{remark}\label{rem:why-care?}
In the above definition, (\textit{i}) is known as a reach-avoid task, and
(\textit{ii}) is known as a constrained terminal-set task.
Thus the GRA combines both of these tasks into a single task.
It is thus natural to ask: \textit{why would we not simply treat the two parts of the GRA task separately?}
This question will be addressed in Section \ref{sec:necessity-of-the-gra}. \blackbox
\end{remark}

The following observation makes explicit the natural correspondence between the GRA performance functional and the GRA task: 
\begin{observation}
    Let $(\graData)$ be a GRA instance.
    Define $\tarSet$, $\conSet$, and $\terSet$ to be the strict zero super-level sets of $\tarFn$, $\conFn$, and $\terFn$, respectively.
    A pair $(\xSig, \tVal) \in \xSigs \tms \RleT$ satisfies the GRA task w.r.t. $\tarSet$, $\conSet$, $\terSet$, and $\tFin$ iff $\robGRA[\graData](\xSig, \tVal) > 0$.
\end{observation}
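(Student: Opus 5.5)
The plan is to unwind the definition of $\robGRA[\graData]$ in \eqref{eqn:gra-definition} using two facts: the inner minima and maxima are attained, and a finite max/min is positive iff some/all of its arguments are positive.

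\textbf{Attainment.} Fix $(\xSig,\tVal)$ with $\tVal \le \tFin$. Since $\xSig$ is continuous and $\conFn$ is continuous into $(-\ii,\ii]$, the map $\tDumDum \mapsto \sconeval{\xSig}{\tDumDum}$ is continuous into the compact space $\extended$. Hence for each $\tDum \in [\tVal,\tFin]$ the minimum
\[
m(\tDum) := \min_{\tDumDum \in [\tVal,\tDum]} \sconeval{\xSig}{\tDumDum}
\]
is attained. Moreover $m$ is continuous in $\tDum$, being a running minimum of a continuous function on a compact interval. Therefore $\tDum \mapsto \min\{\tarEval{\xSig}{\tDum}, m(\tDum)\}$ is continuous into $\extended$, and its maximum over $[\tVal,\tFin]$ is attained as well. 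So all the $\max$/$\min$ operations in \eqref{eqn:gra-definition} are legitimate and attained, which is the only mildly technical point.

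\textbf{Equivalence.} The equivalence is then a chain of iff's. First, $\robGRA > 0$ iff at least one of the two outer arguments is positive.
\begin{itemize}
\item The first argument is positive iff there is a maximizer $\tDum$ (equivalently, some $\tDum$) with $\tarEval{\xSig}{\tDum} > 0$ and $m(\tDum) > 0$. Because the minimum is attained, $m(\tDum) > 0$ iff $\sconeval{\xSig}{\tDumDum} > 0$ for all $\tDumDum \in [\tVal,\tDum]$. By definition of the strict zero super-level sets, this says $(\xSig(\tDum),\tDum) \in \tarSet$ and $(\xSig(\tDumDum),\tDumDum) \in \conSet$ for all such $\tDumDum$, which is condition (\textit{i}).
\item Likewise, the second argument is positive iff $\terEval{\xSig}{\tFin} > 0$ and $m(\tFin) > 0$, i.e. $\xSig(\tFin) \in \terSet$ and the constraint holds on $[\tVal,\tFin]$, which is condition (\textit{ii}).
\end{itemize}

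\textbf{Where attainment matters.} The direction "sup positive implies some argument positive" is trivial for a max but would need care for a sup. Similarly, "all $\tDumDum$ positive implies inf positive" fails for an infimum over non-compact or discontinuous data. Attainment on compact intervals resolves both. The values $\pm\ii$ cause no trouble, since comparisons with $0$ in $\extended$ behave as usual.

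Openness of the sets, as required by the GRA task definition, follows from continuity of $\tarFn$, $\conFn$, $\terFn$. This part is immediate.
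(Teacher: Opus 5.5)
Your proof is correct and matches the paper's approach: the paper states this as an Observation with no written proof, treating it as an immediate unwinding of the definition of $\robGRA$, which is exactly what you carry out. One correction to a side remark: for a supremum, strict positivity already holds iff some argument is positive, so attainment is genuinely needed only for the inner minima over $\tDumDum$ (and for the $\max$/$\min$ in the definition to be well defined).
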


For convenience, we will define 
$$\valGRA[\graData] := \val\lf[ \robGRA[\graData] \rg]$$
for each GRA instance $(\graData)$.
Recall that if one can compute this value function, then Lemma \ref{lem:value-and-satisfiability} can be used to determine whether the GRA task is satisfiable from an initial state $\xVal$ at time $\tVal$ based upon the sign of $\valGRA[\graData](\xVal, \tVal)$ (with positive signifying satisfiability and negative signifying the opposite).

\subsection{Relationship with the 5 canonical problems}
In this sub-section, we will fix a GRA instance $(\graData)$ and shall omit the bracket notation that is used elsewhere.

We now define the performance functionals $\robRS, \robRT, \robAT, \robRSAT, \robRTAT: \xSigs \tms \RleT \to \R$ for tasks (I)-(V) described in the introduction:
\begin{enumerate}
    \item [(I)] \textbf{Terminal-set task:}
    $$\robRS(\xSig, \tVal) := \terEval{\xSig}{\tFin},$$
    \item[(II)] \textbf{Reach task:}
    $$\robRT(\xSig, \tVal) := \max_{\tDum \in [\tVal, \tFin]} \tarEval{\xSig}{\tDum},$$
    \item[(III)] \textbf{Avoid task:}
    $$\robAT(\xSig, \tVal) := \min_{\tDumDum \in [\tVal, \tFin]} \sconeval{\xSig}{\tDumDum},$$
    \item[(IV)] \textbf{Constrained terminal-set task:}
    $$\robRSAT(\xSig, \tVal) := \min\Bigl\{ \terEval{\xSig}{\tFin}, \min_{\tDumDum \in [\tVal, \tFin]} \sconeval{\xSig}{\tDumDum} \Bigr\},$$
    \item[(V)] \textbf{Reach-avoid task:}
    $$\robRTAT(\xSig, \tVal) := \max_{\tDum \in [\tVal, \tFin]} \min\Bigl\{ \tarEval{\xSig}{\tDum}, \min_{\tDumDum \in [\tVal, \tDum]} \sconeval{\xSig}{\tDumDum} \Bigr\}.$$
\end{enumerate}

These performance functionals, and thus also their value functions, are related to those of the GRA as follows:
\begin{observation}\label{lem:equivalence}
~
    \begin{itemize}
        \item[(\textit{i})] If $\tarFn = -\ii$ and $\conFn = \ii$, then $\robGRA = \robRS$.
        \item[(\textit{ii})] If $\terFn(\cdot) \le \tarFn(\cdot, \tFin)$ and $\conFn = \ii$, then $\robGRA = \robRT$.
        \item[(\textit{iii})] If $\terFn(\cdot) \ge \conFn(\cdot, \tFin)$ and $\tarFn = -\ii$, then $\robGRA = \robAT$.
        \item[(\textit{iv})] If $\tarFn = -\ii$, then $\robGRA = \robRSAT$.
        \item[(\textit{v})] If $\terFn(\cdot) \le \tarFn(\cdot, \tFin)$, then $\robGRA = \robRTAT$.
    \end{itemize}
\end{observation}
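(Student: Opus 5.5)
The plan is to verify each of the five identities pointwise in $(\xSig,\tVal) \in \xSigs \tms \RleT$, directly from \eqref{eqn:gra-definition}. Equality of the value functions then follows immediately from \eqref{eqn:value-function-definition}. Throughout, write
\begin{equation*}
A(\xSig,\tVal) := \max_{\tDum \in [\tVal, \tFin]} \min\Bigl\{ \tarEval{\xSig}{\tDum}, \min_{\tDumDum \in [\tVal, \tDum]} \sconeval{\xSig}{\tDumDum} \Bigr\},
\end{equation*}
so that $A = \robRTAT$. Also write
\begin{equation*}
B(\xSig,\tVal) := \min\Bigl\{ \terEval{\xSig}{\tFin}, \min_{\tDumDum \in [\tVal, \tFin]} \sconeval{\xSig}{\tDumDum} \Bigr\},
\end{equation*}
so that $B = \robRSAT$ and $\robGRA = \max\{A,B\}$. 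The only arithmetic facts needed are $\min\{a,\ii\} = a$, $\min\{-\ii,a\} = -\ii$ and $\max\{-\ii,a\} = a$ in $\extended$. The inner quantities are finite because $\terFn$ is real-valued, so $\max\{-\ii, B\} = B$ with no ambiguity.

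\textbf{Items (\textit{i}) and (\textit{iv}).} If $\tarFn = -\ii$, every inner minimum in $A$ equals $-\ii$, so $A = -\ii$ and $\robGRA = B = \robRSAT$. This proves (\textit{iv}). If in addition $\conFn = \ii$, then $B = \terEval{\xSig}{\tFin} = \robRS$, which proves (\textit{i}).

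\textbf{Item (\textit{v}).} Assume $\terFn(\cdot) \le \tarFn(\cdot,\tFin)$. Taking $\tDum = \tFin$ in $A$ gives
\begin{equation*}
A \ge \min\Bigl\{ \tarEval{\xSig}{\tFin}, \min_{\tDumDum \in [\tVal,\tFin]} \sconeval{\xSig}{\tDumDum} \Bigr\} \ge B,
\end{equation*}
by monotonicity of $\min$. Hence $\robGRA = \max\{A,B\} = A = \robRTAT$. Item (\textit{ii}) follows by specializing to $\conFn = \ii$: there $A = \max_{\tDum} \tarEval{\xSig}{\tDum} = \robRT$.

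\textbf{Item (\textit{iii}).} Assume $\tarFn = -\ii$ and $\terFn(\cdot) \ge \conFn(\cdot,\tFin)$. As before, $A = -\ii$, so $\robGRA = B$. Since $\tFin \in [\tVal,\tFin]$,
\begin{equation*}
\min_{\tDumDum \in [\tVal,\tFin]} \sconeval{\xSig}{\tDumDum} \le \sconeval{\xSig}{\tFin} \le \terEval{\xSig}{\tFin}.
\end{equation*}
The minimum in $B$ is therefore attained by the constraint term, giving $B = \robAT$.

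There is no real obstacle here. The only point requiring care is that the maxima and minima over compact intervals are attained. This holds because $\tarFn$ and $\conFn$ are continuous into $[-\ii,\ii)$ and $(-\ii,\ii]$, and $\xSig$ is continuous, so the compositions are continuous into $\extended$ and extended-valued arithmetic behaves as used above.
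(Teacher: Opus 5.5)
Your proposal is correct and is essentially the argument the paper intends: the paper states this as an observation with no written proof, relying on the decomposition $\robGRA = \max\{\robRTAT, \robRSAT\}$ and then substituting each hypothesis directly, exactly as you do. One cosmetic point: $\max\{-\ii, B\} = B$ holds in $\extended$ whether or not $B$ is finite, so the appeal to $\terFn$ being real-valued is unnecessary.
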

In other words, we can think of the other fundamental tasks of HJR and their value functions as specific cases of the GRA task and its value function.
We summarize the relationship between the tasks in Table \ref{tab:hierarchy}.
\begin{table}[ht]
\centering
\begin{tabular}{>{\bfseries}r l l l }
\toprule
 & Terminal set $(\terFn)$ & Target tube $(\tarFn)$ & Constraint tube $(\conFn)$ \\
\midrule\midrule
(I) T & \cmark & \xmark~($\tarFn = -\infty$) & \xmark~ ($\conFn = +\infty$) \\
\midrule
(II) R & \xmark~($\terFn(\cdot) \le \tarFn(\cdot, \tFin)$) & \cmark &  \xmark~ ($\conFn = +\infty$) \\
\midrule
(III) A & \xmark~($\terFn(\cdot) \ge \conFn(\cdot, \tFin)$) & \xmark~($\tarFn = -\infty$) & \cmark \\
\midrule
(IV) CT & \cmark & \xmark~($\tarFn = -\infty$) & \cmark \\
\midrule
(V) RA & \xmark~($\terFn(\cdot) \le \tarFn(\cdot, \tFin)$) & \cmark & \cmark \\
\midrule
GRA & \cmark & \cmark & \cmark \\
\bottomrule
\end{tabular}
\caption{Summary of the fundamental problems in HJR, consisting of the 5 canonical tasks and the new task (GRA) introduced in this work.
In parentheses, we specify how to recover the previous tasks from the GRA (see Observation \ref{lem:equivalence}).}
\label{tab:hierarchy}
\end{table}

\subsection{Motivation for an HJ-TVP approach to compute the GRA value function}\label{sec:necessity-of-the-gra}

In Section \ref{sec:gra-tvp}, we will provide an HJ-TVP that can be used to compute the value function for a GRA task.
Before doing so, let us address why one cannot simply derive the GRA value function from the value functions for the canonical tasks.

As mentioned in Remark \ref{rem:why-care?}, the GRA task consists of either completing a reach-avoid task or a constrained terminal-set task.
Correspondingly, we have the identity
\begin{equation*}
    \robGRA[\graData] = \max\{\robRTAT[\tarFn, \conFn; \tFin], \robRSAT[\conFn, \terFn; \tFin]\}.
\end{equation*}
At first glance, it may seem that there is then a simple approach to compute the GRA value function: first compute the value function $\valRTAT[\tarFn, \conFn; \tFin]$ for the reach-avoid task, then do so for the constrained terminal-set task $\valRSAT[\conFn, \terFn; \tFin]$, and finally take the point-wise maximum $\max\{\valRTAT[\tarFn, \conFn; \tFin], \valRSAT[\conFn, \terFn; \tFin]\}$ of these two value functions.
However, the above algebraic relationship between the performance functionals does not extend to the value functions.
In particular, we only have the inequality 
    $$\valGRA[\graData] \ge \max\{\valRTAT[\tarFn, \conFn; \tFin], \valRSAT[\conFn, \terFn; \tFin]\}$$
(which follows from the standard minimax inequality).
The following counter-example, which is visualized in Fig. \ref{fig:remark-2-fig}, demonstrates that this inequality can indeed be strict.

\begin{example}\label{rem:counter-example}
Let $\dot{\xSig} = \uSig + \dSig$ be a system with a scalar state, where the control and disturbance sets are $\uVals = [0,1]$ and $\dVals = [-1,1]$, respectively.
Let the task's terminal time be $\tFin = 4$, the target tube be $\tarSet = (-1, +1) \tms \R$, the constraint tube be $\conSet = \R \tms \R$, and the terminal set be $\terSet = (-\ii, +1)$ (Fig. \ref{fig:remark-2-fig}).
From the initial state $\xVal_0 = -2$ and initial time $\tVal_0 = 0$, the disturbance player can ensure that the system never enters $\tarSet$ via $\dSig(\cdot) = -1$ (Fig. \ref{fig:remark-2-fig}, top-left), or it can ensure the system state is not in $\terSet$ at the final time $\tFin$ via $\dSig(\cdot) = 1$ (Fig. \ref{fig:remark-2-fig}, top-right).
Thus the controller cannot ensure successful completion of either the reach-avoid task or the constrained terminal-set task.
However, for any signals $\uSig$ and $\dSig$, the resulting trajectory will nevertheless satisfy the GRA task (Fig \ref{fig:remark-2-fig}, bottom-left).
In particular, if we let $\tarFn$, $\conFn$, and $\terFn$ be the signed-distance functions to $\tarSet$, $\conSet$, and $\terSet$ respectively, we have $\valGRA[\graData](\xVal_0, \tVal_0) > 0$, despite $\valRTAT[\tarFn, \conFn; \tFin](\xVal_0, \tVal_0) < 0$ and $\valRSAT[\conFn, \terFn](\xVal_0, \tVal_0) < 0$. \blackbox
\end{example}
\noindent In Example 1 in Section \ref{sec:example-1}, we will further emphasize this idea for a more complex problem, with numerical results.

\begin{figure}[!ht]
    \centering
    \includegraphics[trim={6.5cm 0cm 10cm 0cm}, clip, width=1.0 \linewidth]{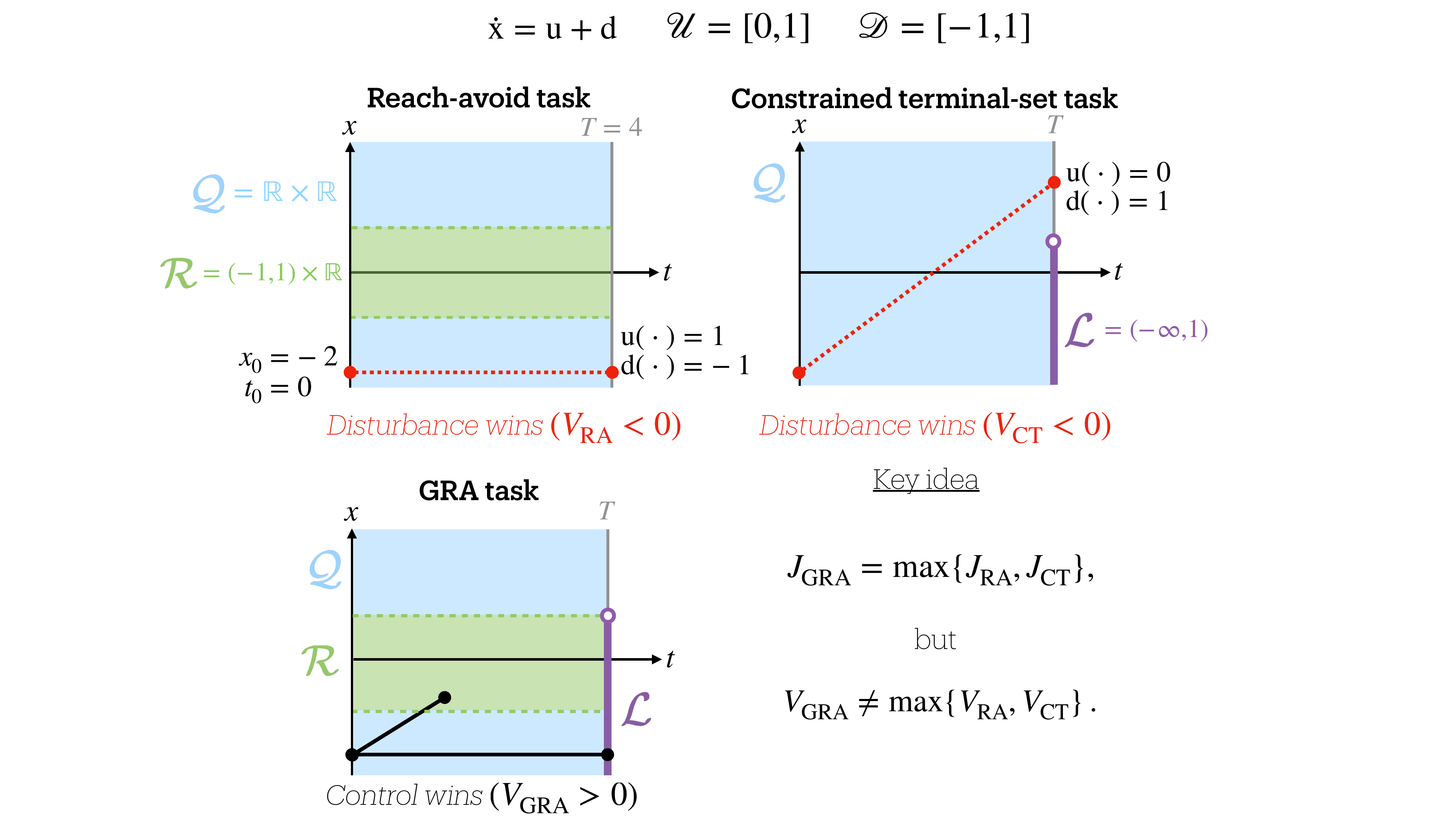}
    \caption{Visualization of Counter-Example \ref{rem:counter-example}.
    The GRA task consists of either completing an underlying reach-avoid task (top left) or an underlying constrained terminal-set task (top right).
    In the counter-example, the disturbance player can separately prevent either of these tasks from being completed.
    When the tasks are considered together, however, the controller wins (bottom left).
    As such, we need a method to compute the value function for the GRA task that is different from simply computing $\max\{\valRTAT, \valRSAT\}$.
    }
    \label{fig:remark-2-fig}
\end{figure}

To summarize, we cannot generally use the reach-avoid and constrained terminal-set value functions to compute the GRA value function.
Instead, a separate approach is needed.
Theorem \ref{thm:dpe} in Section \ref{sec:gra-tvp} will demonstrate how to compute the GRA value function by way of an HJ-TVP that generalizes the HJ-TVPs for the canonical tasks.

\section{Properties of the GRA value function}\label{sec:results}
\subsection{Basic properties}
We now explore basic properties of the GRA value function.

\begin{definition}
Let $\evalTimes \sbs \R$.
A performance functional $\rob: \xSigs \tms \evalTimes \to \extended$ is \textbf{past-independent} if for all $\tVal \in \evalTimes$ and all $\xSig_1, \xSig_2 \in \xSigs$, if $\xSig_1(\tDum) = \xSig_2(\tDum)$ for every $\tDum \ge \tVal$, then $\rob(\xSig_1, \tVal) = \rob(\xSig_2, \tVal)$.
\end{definition}

\begin{lemma}\label{lem:gra-evaluation-is-continuous}
    Let $(\graData)$ be a GRA instance.
    Then $\robGRA[\graData]$ is continuous, real-valued, and past-independent. 
    (By Lemma \ref{lem:continuous-evaluations-have-continuous-values}, $\valGRA[\graData]$ is also then continuous and real-valued.)
\end{lemma}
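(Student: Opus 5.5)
Past-independence is immediate from \eqref{eqn:gra-definition}, since only the values $\xSig(\tDum)$ with $\tDum \in [\tVal, \tFin]$ enter the formula. I would handle real-valuedness and continuity together by writing $\robGRA = \max\{A, B\}$, where $A$ is the reach-avoid term and $B$ is the constrained terminal-set term.

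\emph{Real-valuedness.} $B \le \terFn(\xSig(\tFin)) < \ii$. Also $B > -\ii$: $\terFn$ is real-valued, and $\conFn > -\ii$ is continuous, so it attains a minimum on the compact set $\{(\xSig(\tDumDum),\tDumDum) : \tDumDum \in [\tVal,\tFin]\}$. Hence $B$ is real. For $A$, the inner quantity is at most $\tarFn(\xSig(\tDum),\tDum) < \ii$, and its maximum over the compact interval $[\tVal,\tFin]$ is attained once continuity in $\tDum$ is established. So $A < \ii$, while $A = -\ii$ is permitted. Therefore $\max\{A,B\}$ is real-valued because $B$ is.

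\emph{Continuity.} Since $\max$ and $\min$ are continuous on $\extended$, it suffices to show $A$ and $B$ are continuous. The key auxiliary fact is this: if $g: \Rn \tms \R \to \extended$ is continuous, then
\[
G(\xSig, a, b) := \min_{\tDumDum \in [a,b]} g(\xSig(\tDumDum),\tDumDum)
\]
is continuous on $\{(\xSig,a,b) : a \le b\}$, and similarly for $\max$. I would prove this by sequences. Let $\xSig_i \to \xSig$ uniformly on compacts, $a_i \to a$, and $b_i \to b$. Write $\tDumDum \in [a_i,b_i]$ as $\tDumDum = a_i + \theta(b_i - a_i)$ with $\theta \in [0,1]$. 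The map $(\theta, i) \mapsto (\xSig_i(a_i+\theta(b_i-a_i)), a_i+\theta(b_i-a_i))$ converges uniformly in $\theta$ to its limit: use uniform convergence on a compact interval containing all the $[a_i,b_i]$, together with uniform continuity of $\xSig$ there. Uniform continuity of $g$ on a compact neighborhood, in the metric of $\extended$, then gives uniform convergence of $g$ along these paths. Hence the minima over $\theta$ converge.

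Applying this fact gives continuity of $B$: it is $\min\{\terFn(\xSig(\tFin)), G(\xSig,\tVal,\tFin)\}$, with $\tFin$ fixed. For $A$, set
\[
h(\xSig,\tVal,\tDum) := \min\{\tarFn(\xSig(\tDum),\tDum), G(\xSig,\tVal,\tDum)\},
\]
which is continuous on $\{\tVal \le \tDum\}$. Then $A(\xSig,\tVal) = \max_{\theta\in[0,1]} h(\xSig,\tVal,\tVal+\theta(\tFin-\tVal))$. This is a maximum, over a fixed compact parameter set, of a jointly continuous $\extended$-valued function, so it is continuous in $(\xSig,\tVal)$ by the standard argument. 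That argument uses compactness of $[0,1]$ and the fact that $\xSigs$ is metrizable (uniform convergence on compacts is a Fréchet-type metric), which makes sequential continuity sufficient.

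The main obstacle is the moving endpoint $\tVal$ in the nested min–max. The reparametrization by $\theta$ removes it. One must also take care with the extended-valued functions $\tarFn, \conFn$: working with the metric of $\extended$ (for example via $\arctan$) keeps the uniform-continuity argument valid even where $\tarFn = -\ii$ or $\conFn = \ii$. Finally, the parenthetical claim about $\valGRA$ follows by applying Lemma~\ref{lem:continuous-evaluations-have-continuous-values}, using that $-\ii < \robGRA < \ii$.
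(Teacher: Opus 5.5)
Your proof is correct, but it gets joint continuity by a different route than the paper.

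The paper first composes $\tarFn$, $\conFn$, $\terFn$ with the homeomorphism $\homeo$ so that all data are bounded. It then propagates the elementary inequality $|\max f - \max g| \le \max |f-g|$ (and its $\min$ analogue) through the nested min--max. This bounds $|\robGRA[\graData](\xSig_1,\tVal) - \robGRA[\graData](\xSig_2,\tVal)|$ by the largest discrepancy of $\homeo\circ\tarFn$, $\homeo\circ\conFn$, $\homeo\circ\terFn$ along the two trajectories over $[\tNot,\tFin]$, uniformly in $\tVal \in [\tNot,\tFin]$. That gives continuity in $\xSig$ locally uniformly in $\tVal$. The paper combines this with continuity in $\tVal$ for each fixed $\xSig$, which it says follows directly from the definition.

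You never separate the two variables. Reparametrizing $[\tVal,\tFin]$ by the fixed interval $[0,1]$ turns the moving endpoint into an ordinary parameter. Everything then reduces to your running-minimum lemma for $G(\xSig,a,b)$ and the standard fact that a maximum over a fixed compact parameter set of a jointly continuous function is continuous.

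The paper's route gives a quantitative estimate in the trajectory variable that is nonexpansive in the transformed data and uniform in the initial time. Yours is purely qualitative, but it handles the moving endpoint in the nested $\min$--$\max$ explicitly instead of leaving it to the reader. It also supplies the real-valuedness argument ($B$ finite, $A<\ii$ though possibly $-\ii$), which the paper states without proof.
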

\begin{proof}
    See Section \ref{sec:appendix-basic-proofs} in the appendix.
\end{proof}

\begin{remark}
Consider two performance functionals $\rob_1$ and $\rob_2$ corresponding to two tasks, task 1 and task 2.
There is a standard algebra for constructing performance functionals for various composite tasks built from tasks 1 and 2 \cite{donze-robustness-metric}.
For example, the performance functional for satisfying task 1 \textit{and} task 2 is given by $\min\{\rob_1, \rob_2\}$, while satisfying task 1 \textit{or} task 2 is given by $\max\{\rob_1, \rob_2\}$.
Moreover, the performance functional for \textit{not} satisfying task 1 (resp. task 2) is given by $-\rob_1$ (resp. $-\rob_2$).
As in Counter-Example \ref{rem:counter-example}, none of these algebraic formulae apply to the value functions (e.g. $\val[-\rob_1] \ne -\val[\rob_1]$) due to the underlying differential game.
As such, other approaches must be used to compute these value functions.
The following lemma provides one such approach for the negation of a GRA task. \blackbox
\end{remark}

\begin{lemma}\label{lem:negated-gra}
    Let $(\graData)$ be a GRA instance.
    Define $\tilde{\tarFn} = -\conFn$, $\tilde{\conFn} = -\min\{\conFn, \tarFn\}$, and $\tilde{\terFn}(\cdot) = -\min\{\conFn(\cdot, \tFin), \max\{\tarFn(\cdot, \tFin), \terFn(\cdot) \}\}$.
    Then
    \begin{equation*}
        \val[-\robGRA[\graData]] = \valGRA[\tilde{\tarFn}, \tilde{\conFn}, \tilde{\terFn}; \tFin].
    \end{equation*}
\end{lemma}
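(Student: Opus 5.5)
The plan is to prove the stronger, trajectory-wise identity
\begin{equation*}
-\robGRA[\graData](\xSig, \tVal) = \robGRA[\tilde{\tarFn}, \tilde{\conFn}, \tilde{\terFn}; \tFin](\xSig, \tVal) \quad \text{for all } (\xSig, \tVal) \in \xSigs \tms \RleT .
\end{equation*}
The lemma then follows at once from \eqref{eqn:value-function-definition}, because both sides define the same performance functional.

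First I will check that $(\tilde{\tarFn}, \tilde{\conFn}, \tilde{\terFn}; \tFin)$ is a GRA instance. Continuity is clear, since $\min$, $\max$ and negation are continuous on $\extended$. For the ranges:
\begin{itemize}
\item $\tilde{\tarFn} = -\conFn$ takes values in $[-\ii,\ii)$ because $\conFn > -\ii$.
\item $\tilde{\conFn} = -\min\{\conFn,\tarFn\}$ takes values in $(-\ii,\ii]$ because $\tarFn < \ii$.
\item $\tilde{\terFn}$ is real-valued. Indeed, $\max\{\tarFn(\cdot,\tFin),\terFn(\cdot)\}$ is real because $\terFn$ is real and $\tarFn<\ii$. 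Taking the minimum with $\conFn(\cdot,\tFin) > -\ii$ keeps it real.
\end{itemize}

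For the identity, fix $\xSig$ and $\tVal$, and write $r(\tDum) = \tarEval{\xSig}{\tDum}$, $q(\tDum) = \sconeval{\xSig}{\tDum}$ and $\ell = \terEval{\xSig}{\tFin}$. Both sides are real numbers by Lemma \ref{lem:gra-evaluation-is-continuous} applied to each instance. It therefore suffices to show that, for every $a \in \R$,
\begin{equation*}
\robGRA[\graData](\xSig,\tVal) < a \iff \robGRA[\tilde{\tarFn}, \tilde{\conFn}, \tilde{\terFn}; \tFin](\xSig,\tVal) > -a .
\end{equation*}
Equality then follows by letting $a$ range over $\R$.

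Unwinding the left side, $\robGRA < a$ means two things hold. (A) For every $\tDum\in[\tVal,\tFin]$, either $r(\tDum)<a$ or $q(\tDumDum)<a$ for some $\tDumDum\in[\tVal,\tDum]$. (B) Either $\ell<a$ or $q(\tDumDum)<a$ for some $\tDumDum\in[\tVal,\tFin]$. Both rely on the maxima and minima being attained on compact intervals.

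Unwinding the right side, $\robGRA[\tilde{\tarFn}, \tilde{\conFn}, \tilde{\terFn}; \tFin] > -a$ means one of two things. (A') Some $\tDum$ has $q(\tDum)<a$ and $\min\{q,r\}(\tDumDum)<a$ for all $\tDumDum\in[\tVal,\tDum]$. (B') We have $\min\{q(\tFin),\max\{r(\tFin),\ell\}\}<a$ and $\min\{q,r\}<a$ on all of $[\tVal,\tFin]$.

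I will prove the equivalence by cases on the set $\{\tDumDum \in [\tVal,\tFin] : q(\tDumDum) < a\}$.

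\emph{Case 1: the set is empty.} Then (A) reduces to $r<a$ on $[\tVal,\tFin]$, and (B) reduces to $\ell<a$. On the other side, (A') fails. In (B'), the condition $\min\{q,r\}<a$ becomes $r<a$ everywhere, and the terminal condition becomes $\max\{r(\tFin),\ell\}<a$. These match (A) and (B).

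\emph{Case 2: the set is nonempty.} Let $\sigma$ be its infimum. Then (B) holds automatically. Condition (A) is equivalent to $r<a$ on $[\tVal,\sigma)$, and in addition $r(\sigma)<a$ when $\sigma$ is not itself in the set. On the right, (A') can be witnessed at $\sigma$ itself, or at points just after it, and (B') reduces to a condition implying (A').

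The main obstacle is this boundary bookkeeping at $\sigma$, because $\{q<a\}$ is open relative to $[\tVal,\tFin]$ and the infimum need not be attained. I will handle it with continuity of $q$ and $r$. If $\sigma\notin\{q<a\}$, then $q(\sigma)=a$ and there are points $\tDum_k\downarrow\sigma$ with $q(\tDum_k)<a$. If in addition $r(\sigma)<a$, continuity keeps $r<a$ on $[\sigma,\tDum_k]$ for large $k$, so $\tDum_k$ witnesses (A'). Conversely, any witness $\tDum$ of (A') satisfies $\tDum\ge\sigma$. Hence $\min\{q,r\}<a$ on $[\tVal,\sigma]$, which forces $r<a$ there, since $q \geq a$ on $[\tVal,\sigma)$ and at $\sigma$ when $\sigma$ is not in the set. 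This is exactly (A).

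If this case analysis becomes too fiddly, I would fall back on a direct algebraic manipulation instead. That route uses $-\max\min=\min\max(-\cdot)$ and the exchange of the order of extrema over $\tVal\le\tDumDum\le\tDum\le\tFin$.
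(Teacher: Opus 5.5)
Your proof is correct. Its reduction, that the lemma follows from the trajectory-wise identity $-\robGRA[\graData]=\robGRA[\tilde{\tarFn},\tilde{\conFn},\tilde{\terFn};\tFin]$, is the same as the paper's. What differs is how you verify that identity.

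\textbf{The paper's route.} The paper first normalizes the data. It notes that $\robGRA$ is unchanged if $\tarFn$ is replaced by $\min\{\tarFn,\conFn\}$ and $\terFn$ by $\min\{\conFn(\cdot,\tFin),\max\{\tarFn(\cdot,\tFin),\terFn\}\}$. Under the normalization $\tarFn\le\conFn$ and $\tarFn(\cdot,\tFin)\le\terFn\le\conFn(\cdot,\tFin)$, the dual data become simply $(-\conFn,-\tarFn,-\terFn)$. The map $(\tarFn,\conFn,\terFn)\mapsto(-\conFn,-\tarFn,-\terFn)$ is an involution that sends this normalized class to itself. Hence only one inequality needs proof; the reverse follows by applying it to the dual instance.

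The paper proves that one inequality by a two-case argument at the level $c=\robGRA[\graData](\xSig,\tVal)$, using non-strict inequalities. The relevant first time at which $\conFn\le c$ then lies in a closed set and is attained. The normalization $\tarFn\le\conFn$ handles the endpoint, so no limiting argument is needed.

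\textbf{Your route.} You work with the raw data and prove both directions at once. You do this through the strict-threshold equivalence for every $a\in\R$, which is just the qualitative duality of the two tasks applied at each level. The price is the boundary analysis at $\sigma=\inf\{\conFn<a\}$, where the infimum need not be attained. You handle this correctly, including the point that (B') implies (A') once that set is nonempty, and the continuity argument with $\tDum_k\downarrow\sigma$.

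\textbf{Trade-offs.} Your approach avoids the normalization-invariance computation, which the paper asserts without detail. It also explicitly checks that $(\tilde{\tarFn},\tilde{\conFn},\tilde{\terFn};\tFin)$ is a GRA instance, which the paper leaves implicit. The paper's approach is shorter: symmetry halves the work, and non-strict inequalities remove the open-set bookkeeping. Your fallback via exchanging extrema is not needed.
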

\begin{proof}
    See Section \ref{sec:appendix-basic-proofs} in the appendix.
\end{proof}
This lemma signifies that to compute the value function associated with the negation of a GRA task (i.e. the controller attempts to prevent the GRA task from being satisfied), one can define new target tube, constraint tube, and terminal set functions using the old ones, and then solve for the GRA value function for this new GRA instance.
In other words, the GRA task is in some sense ``closed'' under negation.
Note from Table \ref{tab:hierarchy} that this is not the case for any of the previous fundamental tasks except for (I).

\subsection{Value function decomposition}

A primary advantage of considering the GRA task over the reach-avoid task is that this extension allows one to decompose value functions of certain composite tasks via a dynamic programming approach.
A key value function decomposition involving the GRA task is specified in the upcoming theorem, which is graphically depicted in the bottom panel of Figure \ref{fig:graphical-abstract}. 

Importantly, HJR decomposition results for the two player setting typically begin with a particular composite task and express its value function in terms of value functions for pre-defined fundamental tasks.
For example, in \cite{hirsch-2206} a ``reach-always-avoid'' task is decomposed into an upstream reach-avoid task and a downstream avoid task.

Here, we consider a broader kind of result: we define a composite task that involves an upstream GRA task and a downstream task that is arbitrary, aside from mild technical assumptions.
In particular, the downstream task need not be a GRA task.\footnote{For other decomposition results that are similarly algebraic in nature, but assume a \textit{one-player}, discrete-time setting, we refer the reader to \cite{sharpless2026bellman}.}

Specifically, we consider a composite task in which the system has two paths to success: either \textit{(i)} complete a GRA task before time $\tFin$ or \textit{(ii)} remain within the constraints until time $\tFin$ and then complete the downstream task.

The following theorem shows that the value function of this newly defined composite task can be represented exactly as a single GRA value function.
This result thus expands the class of tasks whose value functions can be computed using the GRA formulation.

In the following theorem, $\rob_0$ represents the performance functional for the downstream task and $\rob$ represents the performance functional for the composite task described above.
\begin{theorem}[Value decomposition]\label{thm:value-decomposition}
Let $\rob_0: \xSigs \tms \evalTimes \to \R$ be continuous and past-independent, with $\evalTimes \sbs \R$.
Let $(\tarFn, \conFn, \terFn; \tFin)$ be a GRA instance, with $\tFin \in \evalTimes$.
Define $\rob: \xSigs \tms \RleT \to \R$ by
\begin{align*}
    \rob(\xSig, \tVal) := 
    \max\Big\{&\robGRA[\tarFn, \conFn, \terFn; \tFin](\xSig, \tVal),\\
    &\min\big\{\rob_0(\xSig, \tFin), \min_{\tDumDum \in [\tVal, \tFin]} \conFn(\xSig(\tDumDum),\tDumDum) \big\} \Big\}.
\end{align*}
Then $\rob$ is continuous, past-independent and 
$$\val[\rob] = \valGRA[\tarFn, \conFn, \tilde{\terFn}; \tFin],$$
where $\tilde{\terFn}(\cdot) := \max\{\terFn(\cdot), \val[\rob_0](\cdot, \tFin)\}$.
\end{theorem}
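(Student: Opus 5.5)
Continuity and past-independence of $\rob$ come first. $\robGRA$ is continuous and past-independent by Lemma \ref{lem:gra-evaluation-is-continuous}. The map $(\xSig,\tVal)\mapsto \min_{\tDumDum\in[\tVal,\tFin]}\conFn(\xSig(\tDumDum),\tDumDum)$ is continuous by the same argument used in that lemma; on the compact interval $[\tVal,\tFin]$ it is a min of a continuous function, and it only reads $\xSig$ on $[\tVal,\tFin]$. The map $(\xSig,\tVal)\mapsto\rob_0(\xSig,\tFin)$ is constant in $\tVal$ and continuous in $\xSig$. For $\tVal\le\tFin$, past-independence of $\rob_0$ at time $\tFin$ shows it depends only on $\xSig|_{[\tFin,\ii)}$. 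Max and min preserve both properties. Real-valuedness holds because $\rob_0$ is real-valued and $\robGRA$ is real-valued.

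Next I rewrite $\rob$ algebraically. Set $m(\xSig,\tVal):=\min_{[\tVal,\tFin]}\conFn$. Then
\[
\rob=\max\bigl\{\robRTAT,\ \min\{\terFn(\xSig(\tFin)),m\},\ \min\{\rob_0(\xSig,\tFin),m\}\bigr\}=\max\bigl\{\robRTAT,\ \min\{\max\{\terFn(\xSig(\tFin)),\rob_0(\xSig,\tFin)\},m\}\bigr\}.
\]
So $\rob$ equals $\robGRA[\tarFn,\conFn,\hat{\terFn};\tFin]$, except that the terminal term $\hat\terFn(\xSig):=\max\{\terFn(\xSig(\tFin)),\rob_0(\xSig,\tFin)\}$ is trajectory-dependent through the future after $\tFin$. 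The claim is that the game value replaces $\rob_0(\xSig,\tFin)$ by $\val[\rob_0](\xSig(\tFin),\tFin)$. This is a dynamic programming principle at the fixed time $\tFin$.

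The proof is a two-sided inequality, $\val[\rob]\ge$ and $\le \valGRA[\tarFn,\conFn,\tilde\terFn;\tFin]$. The key tool is a concatenation argument at time $\tFin$, in the style of Evans--Souganidis. For any $\epsilon>0$ and any state $y$, there are near-optimal responses for the downstream game from $(y,\tFin)$. I need these to be uniform over $y$ in a compact set, which will come from continuity of $\val[\rob_0]$ (Lemma \ref{lem:continuous-evaluations-have-continuous-values}) and continuity of $\rob_0$, via a finite partition of the compact reachable set at time $\tFin$.

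For ``$\le$'' (disturbance side), fix $\dStrat_1$ that is $\epsilon$-optimal for the GRA game with terminal function $\tilde\terFn$. For each cell $y_i$ of the partition, take $\dStrat^{(i)}$ that is $\epsilon$-optimal for $\val[\rob_0](y_i,\tFin)$. Build the concatenated strategy: use $\dStrat_1$ on $(-\ii,\tFin]$, and after $\tFin$ use $\dStrat^{(i)}$ applied to the control restricted to $[\tFin,\ii)$, where $i$ is chosen by the cell containing $\xSig(\tFin)$. This remains non-anticipative because the cell index is determined at $\tFin$. Then for every $\uSig$, using $\min\{a,b\}$, $\max$-monotonicity and past-independence of $\rob_0$, we get $\rob\le \robGRA[\tilde\terFn]+O(\epsilon)$ along the trajectory. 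For ``$\ge$'' (control side), fix any $\dStrat$. Choose $\uSig$ on $(-\ii,\tFin]$ to be $\epsilon$-good against $\dStrat$ in the GRA game. Then, after $\tFin$, respond to the induced strategy $\uSig'\mapsto\dStrat(\uSig\oplus_{\tFin}\uSig')$, which is non-anticipative on $[\tFin,\ii)$, so as to achieve $\rob_0\ge\val[\rob_0](\xSig(\tFin),\tFin)-\epsilon$. Combining the two halves gives $\rob\ge\robGRA[\tilde\terFn]-\epsilon$. One subtlety: the GRA part of $\rob$ is unaffected by the control after $\tFin$ by past-independence and causality, but the ``$\sup_\uSig$'' for the first phase must be chosen knowing $\dStrat$ only. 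This is fine because $\inf_\dStrat\sup_\uSig$ lets the control depend on all of $\dStrat$.

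The main obstacle is the ``$\le$'' direction. The minimizing player must commit to a strategy, and the downstream $\epsilon$-optimal strategies must be glued measurably and non-anticipatively, uniformly over the terminal state. I would handle this through compactness of the reachable set from $(\xVal,\tVal)$ at time $\tFin$ (linear growth bound in Assumption \ref{assumption:regularity}), a finite Borel partition of that set, and uniform continuity of $\val[\rob_0](\cdot,\tFin)$ there. Transferring the $\epsilon$-optimality from $y_i$ to nearby $\xSig(\tFin)$ uses continuous dependence of trajectories on initial data together with continuity of $\rob_0$, made uniform on compacts of $\xSigs$ via Arzel\`a--Ascoli.
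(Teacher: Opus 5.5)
Your proof is correct. Your ``$\ge$'' direction is essentially the paper's: fix $\dStrat$, pick a first-phase control $\mathrm{v}$ that is $\epsilon$-good against $\dStrat$ in the GRA game with terminal function $\tilde{\terFn}$, then play against the induced strategy $\uSig\mapsto\dStrat(\langle \mathrm{v},\uSig\rangle)$ from $(\xSig(\tFin),\tFin)$.

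The ``$\le$'' direction is where you diverge, and the paper's route is considerably lighter. The paper uses no partition. It fixes, by choice, a strategy $\beta_y$ that is $\epsilon$-optimal for $\val[\rob_0](y,\tFin)$ for every $y\in\Rn$. It then sets $\dStrat(\uSig)=\alpha(\uSig)$ on $(-\ii,\tFin]$ and $\dStrat(\uSig)=\beta_{y_{\uSig}}(\uSig)$ on $(\tFin,\ii)$, where $y_{\uSig}$ is the state reached at $\tFin$ under $\alpha$.

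The measurability and uniformity issue you flag as the main obstacle does not arise here. Elements of $\dStrats$ only need to be non-anticipative maps $\uSigs\to\dSigs$; nothing requires measurable dependence on the switch state. For each fixed $\uSig$, $\dStrat(\uSig)$ is just the concatenation of two measurable signals. Non-anticipativity holds because $y_{\uSig}$ depends only on $\uSig$ up to $\tFin$. Past-independence of $\rob_0$ then gives $\rob_0\le\val[\rob_0](y_{\uSig},\tFin)+\epsilon$ directly, with no transfer error.

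Your Evans--Souganidis-style version is valid under the paper's assumptions. It uses a finite Borel partition of the compact reachable set, uniform continuity of $\val[\rob_0](\cdot,\tFin)$ there, and Arzel\`a--Ascoli plus Gronwall to move $\epsilon$-optimality from $y_i$ to nearby states. That is the version you would need if strategies had to depend measurably on the switch state (e.g.\ in stochastic settings). Here it is extra machinery. The paper's pointwise gluing needs no regularity of $\rob_0$ or $\val[\rob_0]$ beyond what makes $\tilde{\terFn}$ a valid continuous, real-valued terminal function.

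In your favor, you explicitly check that $\rob$ is continuous, real-valued, and past-independent. The paper's proof asserts this but leaves it unverified.
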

\begin{proof}
    This proof proceeds by the standard dynamic programming argument, splitting the task into a portion before time $\tFin$ and a portion after.
    The full details are in Section \ref{sec:appendix-decomposition-proof} of the appendix.
\end{proof}

The significance of Theorem \ref{thm:value-decomposition} is as follows.
Computing the value function $\val[\rob]$ for the composite task can be done by: (1) computing the value function $\val[\rob_0]$ for the downstream task, (2) forming the new terminal set function $\tilde{\terFn}$, and (3) computing the GRA value function for the GRA instance $(\tarFn, \conFn, \tilde{\terFn}; \tFin)$.

\textit{Critically, recall that the reach (II), avoid (III), and reach-avoid (V) tasks do not consider a terminal set function $\terFn$ (see Table \ref{tab:hierarchy}).
However, this value function decomposition hinges upon the use of the terminal set function in step (3).
Conceptually, the terminal set function $\tilde{\terFn}$ here serves as an ``interface'' or ``port'' through which the downstream value function is incorporated into the upstream problem when doing dynamic programming.
It is precisely this interface within the GRA task that makes the decomposition in Theorem \ref{thm:value-decomposition} possible.}

As a corollary of the above theorem, we also get a dynamic programming principle for the GRA value function itself, in a surprisingly simple form:
\begin{corollary}[Dynamic Programming Principle]\label{cor:dpp}
Let $(\graData)$ be a GRA instance, and let $\tFinDum \le \tFin$.
Define $\tilde{\terFn} = \valGRA[\graData](\cdot, \tFinDum)$.
Then
$$\valGRA[\graData](\xVal, \tVal) = \valGRA[ \tarFn, \conFn, \tilde{\terFn}; \tFinDum](\xVal, \tVal)$$
for all $\xVal \in \Rn$ and $\tVal \le \tFinDum$.
\end{corollary}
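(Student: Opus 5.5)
We derive the corollary from Theorem \ref{thm:value-decomposition}, applied with terminal time $\tFinDum$ and the downstream functional $\rob_0 := \robGRA[\graData]$ restricted to $\evalTimes = \RleT \ni \tFinDum$. By Lemma \ref{lem:gra-evaluation-is-continuous}, $\rob_0$ is continuous, real-valued and past-independent, so it satisfies the hypotheses of the theorem. We also need a GRA instance at time $\tFinDum$ that reproduces the upstream portion. A natural first choice is $(\tarFn, \conFn, \hat{\terFn}; \tFinDum)$ with a terminal function $\hat{\terFn}$ chosen so that $\max\{\hat{\terFn}, \val[\rob_0](\cdot,\tFinDum)\} = \val[\rob_0](\cdot,\tFinDum) = \tilde{\terFn}$. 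Since $\hat{\terFn}$ must be real-valued and continuous, we take $\hat{\terFn} := \tilde{\terFn}$ itself, which is continuous and real-valued by Lemma \ref{lem:gra-evaluation-is-continuous}. Then the theorem's modified terminal function is $\max\{\tilde{\terFn},\tilde{\terFn}\} = \tilde{\terFn}$, and the theorem gives $\val[\rob] = \valGRA[\tarFn,\conFn,\tilde{\terFn};\tFinDum]$.

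The remaining work is to show that the composite functional $\rob$ built in the theorem satisfies $\val[\rob] = \valGRA[\graData]$ on $\Rn \tms \RleS$. Here
$$\rob(\xSig,\tVal) = \max\Big\{\robGRA[\tarFn,\conFn,\tilde{\terFn};\tFinDum](\xSig,\tVal),\ \min\big\{\robGRA[\graData](\xSig,\tFinDum), \min_{\tDumDum\in[\tVal,\tFinDum]}\conFn(\xSig(\tDumDum),\tDumDum)\big\}\Big\}.$$
The inner term involves the terminal value $\tilde{\terFn}(\xSig(\tFinDum))$, which is a value and not a trajectory quantity, so $\rob$ does not simply equal $\robGRA[\graData]$. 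To avoid this, we instead apply the theorem a second time with the plain choice $\hat{\terFn} := -M$ for a large constant. More cleanly, we pick any continuous real $\hat{\terFn}\le \tarFn(\cdot,\tFinDum)$ where $\tarFn$ is finite; to stay safe, we take $\hat{\terFn}:=\min\{\tilde{\terFn},\,\cdot\}$ as needed.

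The key pathwise identity we aim for is
$$\robGRA[\graData](\xSig,\tVal) = \max\Big\{\robRTAT[\tarFn,\conFn;\tFinDum](\xSig,\tVal),\ \min\big\{\robGRA[\graData](\xSig,\tFinDum), \min_{\tDumDum\in[\tVal,\tFinDum]}\conFn(\xSig(\tDumDum),\tDumDum)\big\}\Big\}.$$
This follows by splitting the maximum over $\tDum\in[\tVal,\tFin]$ in \eqref{eqn:gra-definition} into $\tDum\le\tFinDum$ and $\tDum\ge\tFinDum$. We then factor $\min_{[\tVal,\tFinDum]}\conFn$ out of the second piece and out of the terminal branch, using the max–min distributivity $\max\{a,\min\{b,c\}\}$ on a common factor. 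Handling the overlap at $\tDum=\tFinDum$ is harmless.

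Given this identity, we use the upstream GRA instance $(\tarFn,\conFn,\hat{\terFn};\tFinDum)$ with $\hat{\terFn}$ dominated by $\tarFn(\cdot,\tFinDum)$, so that its functional is $\robRTAT$ by Observation \ref{lem:equivalence}(\textit{v}). The composite $\rob$ then equals $\robGRA[\graData]$ exactly, and Theorem \ref{thm:value-decomposition} yields
$$\valGRA[\graData] = \valGRA[\tarFn,\conFn,\max\{\hat{\terFn},\tilde{\terFn}\};\tFinDum].$$

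The main obstacle is that $\tarFn$ may equal $-\ii$, so no real continuous $\hat{\terFn}\le\tarFn(\cdot,\tFinDum)$ need exist. We would address this by taking $\hat{\terFn}:=\min\{\tilde{\terFn}, \max\{\tarFn(\cdot,\tFinDum),-\tilde{\terFn}... \}\}$ only if necessary. A better route is to take $\hat{\terFn}$ to be any continuous real function with $\hat{\terFn}\le\tilde{\terFn}$, and to check directly that the terminal branch $\min\{\hat{\terFn}(\xSig(\tFinDum)),\min\conFn\}$ is dominated by the composite's second branch. This holds because $\hat{\terFn}(\xSig(\tFinDum))\le\tilde{\terFn}(\xSig(\tFinDum))$ and the theorem's output only sees $\max\{\hat{\terFn},\tilde{\terFn}\}=\tilde{\terFn}$. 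The pathwise identity must then be proved with $\max\{\robRTAT,\ \min\{\hat{\terFn},\min\conFn\}\}$ in place of $\robRTAT$, and we need the extra branch to be absorbed at the level of value functions. The cleanest way to get this is to choose $\hat{\terFn} := \min\{\tilde{\terFn},\tarFn(\cdot,\tFinDum)\}$ where the latter is finite, and otherwise $-\ii$ clipped to a large negative continuous function. Whether this clipping is truly harmless is the step we expect to need the most care.
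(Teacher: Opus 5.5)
Your skeleton matches the paper's proof. You split $[\tVal,\tFin]$ at $\tFinDum$ to get a pathwise identity, which is exactly the paper's; its first term, $\robGRA[\tarFn,\conFn,\min\{\conFn(\cdot,\tFinDum),\tarFn(\cdot,\tFinDum)\};\tFinDum]$, equals $\robRTAT[\tarFn,\conFn;\tFinDum]$ by Observation~\ref{lem:equivalence}(\textit{v}). You then apply Theorem~\ref{thm:value-decomposition} with $\rob_0=\robGRA[\graData]$. The gap is that you never commit to an upstream terminal function $\hat{\terFn}$ that closes the argument, and you leave the last step open yourself. You need two things at once. First, $\hat{\terFn}\le\tarFn(\cdot,\tFinDum)$, so that the composite equals $\robGRA[\graData]$ pathwise. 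Second, a reason to replace the theorem's output terminal function $\max\{\hat{\terFn},\tilde{\terFn}\}$ by $\tilde{\terFn}$. Your fourth paragraph stops at $\max\{\hat{\terFn},\tilde{\terFn}\}$, and $\hat{\terFn}\le\tarFn(\cdot,\tFinDum)$ alone does not make this max equal $\tilde{\terFn}$. The paper takes $\hat{\terFn}=\min\{\conFn(\cdot,\tFinDum),\tarFn(\cdot,\tFinDum)\}$ and uses the bound $\tilde{\terFn}\ge\min\{\conFn(\cdot,\tFinDum),\tarFn(\cdot,\tFinDum)\}$. This holds because the $\tDum=\tFinDum$ term of the reach branch gives $\robGRA[\graData](\xSig,\tFinDum)\ge\min\{\tarFn(\xSig(\tFinDum),\tFinDum),\conFn(\xSig(\tFinDum),\tFinDum)\}$ for every $\xSig$. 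Alternatively, you could note that a GRA functional with terminal time $\tFinDum$ depends on its terminal function only through $\max\{\tarFn(\cdot,\tFinDum),\cdot\}$.

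Your ``better route'' (an arbitrary real continuous $\hat{\terFn}\le\tilde{\terFn}$) does not work. The bound $\hat{\terFn}\le\tilde{\terFn}$ compares against a value, not a trajectory quantity, so the extra branch $\min\{\hat{\terFn}(\xSig(\tFinDum)),\min_{\tDumDum\in[\tVal,\tFinDum]}\conFn(\xSig(\tDumDum),\tDumDum)\}$ is not absorbed pathwise. The composite then only satisfies $\rob\ge\robGRA[\graData]$, which gives only $\valGRA[\graData]\le\valGRA[\tarFn,\conFn,\tilde{\terFn};\tFinDum]$. Absorbing the branch ``at the level of value functions'' is the corollary itself, so that route is circular. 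A fixed clipping $\min\{\tilde{\terFn},\max\{\tarFn(\cdot,\tFinDum),-M\}\}$ breaks the pathwise identity wherever $\tarFn(\cdot,\tFinDum)<-M$, not only where it equals $-\ii$. It becomes harmless if, for each $(\xVal,\tVal)$, you take $M>-\inf_{\xSig}\robGRA[\graData](\xSig,\tFinDum)$, the infimum taken over the precompact set of trajectories from $(\xVal,\tVal)$; this infimum is finite by continuity. Your concern about $\tarFn=-\ii$ is legitimate: the paper's own choice $\min\{\conFn(\cdot,\tFinDum),\tarFn(\cdot,\tFinDum)\}$ can also equal $-\ii$, and the paper does not comment on this. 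The clean fix is to conjugate by $\homeo$. Since $\homeo\circ\valGRA[\tarFn,\conFn,\terFn;\tFin]=\valGRA[\homeo\circ\tarFn,\homeo\circ\conFn,\homeo\circ\terFn;\tFin]$ (as used in the proof of Theorem~\ref{thm:dpe}), you may assume $\tarFn\ge-1$. Then take $\hat{\terFn}\equiv-1$, which is real, continuous, at most $\tarFn(\cdot,\tFinDum)$, and strictly below the (transformed) $\tilde{\terFn}$. Your argument then goes through verbatim, and applying $\homeo^{-1}$ finishes. Equivalently, you can check that the proof of Theorem~\ref{thm:value-decomposition} never uses finiteness of the upstream terminal function, only of $\max\{\hat{\terFn},\val[\rob_0](\cdot,\tFinDum)\}$.
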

\begin{proof}
    See Section \ref{sec:appendix-decomposition-proof} in the appendix.
\end{proof}

Thus the GRA value function is also in some sense ``closed'' under dynamic programming.
Note that this is not the case for tasks (II), (III), or (V), which lack a terminal set function.

\subsection{Background on viscosity solutions to HJ-TVPs}\label{sec:pde-background}
In HJR, the value functions for the fundamental tasks are each characterized as the unique viscosity solution of a corresponding HJ-TVP.
We will now make precise what we mean by a viscosity solution. 
Recall that an HJ-TVP is a HJ-PDE together with a terminal boundary condition, i.e. a problem of the form 
\begin{equation}\label{eqn:hj-tvp}\tag{HJ-TVP}
    \begin{cases}
    \hjFn\Big(\xVal, \tVal, \hjDep, \nabla_\xVal \hjDep, \ppt \hjDep\Big) = 0, & \xVal \in \Rn,~ \tVal \in (\hjInit, \hjFin), \\
    \hjDep(\xVal, \hjFin) = \hjData(\xVal), & \xVal \in \Rn.
    \end{cases}
\end{equation}
In this problem, $\xVal$ (a vector) and $\tVal$ (a scalar) are the independent variables, $\hjDep$ (a scalar) is the dependent variable, and $\Rn \tms (\hjInit, \hjFin]$ is the problem's domain.
Note that throughout the main text, the dependent variable in each HJ-PDE and HJ-TVP will always be $\hjSol$, and the independent variables will always be $\xVal$ and $\tVal$ (in the appendix $w$ is occasionally used for the dependent variable and $\xDumDum$ for the independent variable).

We define a viscosity sub-solution, super-solution, and solution to an HJ-PDE itself (i.e. with no terminal condition) as in Definitions II.1.1 and V.1.1 of \cite{Bardi-Dolcetta-Optimal-Control} (reproduced in Section \ref{sec:appendix-viscosity} of the appendix for convenience).
For an HJ-TVP, the analogous definitions are similar to those for HJ-PDEs, but also include conditions at the terminal boundary:

\begin{definition}
    Let $\hjInit, \hjFin \in \R$ with $\hjInit < \hjFin$.
    Let $\hjFn: \Rn \tms (\hjInit,\hjFin) \tms \R \tms \Rn \tms \R \to \R$ and $\hjData: \Rn \to \R$ both be continuous.
    A function $\hjSol: \Rn \tms (\hjInit, \hjFin] \to \R$ is a
    \begin{itemize}
        \item \textbf{viscosity sub-solution} of \eqref{eqn:hj-tvp} if
        \begin{enumerate}
        \item [(\textit{i})]
        $\hjSol$ is upper semi-continuous,
        \item[(\textit{ii})]
        for each $\xNot \in \Rn$, $\tNot \in (\tInit, \tFin)$, and continuously differentiable $\hjTest:\Rn \tms (\tInit, \tFin) \to \R$,
        if $\hjSol - \hjTest$ has a local maximum at $(\xNot, \tNot)$ then
        $$\textstyle \hjFn\Big(\xNot, \tNot, \hjSol(\xNot, \tNot), \nabla_\xVal \hjTest(\xNot, \tNot), \ppt \hjTest(\xNot, \tNot) \Big) \le 0,$$
        \item[(\textit{iii})] $\hjSol(\xVal, \tFin) \le \hjData(\xVal)$ for all $\xVal \in \Rn$;
        \end{enumerate}
        
        \item \textbf{viscosity super-solution} of \eqref{eqn:hj-tvp} if
        \begin{enumerate}
        \item [(\textit{i})]
        $\hjSol$ is lower semi-continuous,
        \item[(\textit{ii})]
        for each $\xNot \in \Rn$, $\tNot \in (\tInit, \tFin)$, and continuously differentiable $\hjTest:\Rn \tms (\tInit, \tFin) \to \R$,
        if $\hjSol - \hjTest$ has a local minimum at $(\xNot, \tNot)$ then
        $$\textstyle \hjFn\Big(\xNot, \tNot, \hjSol(\xNot, \tNot), \nabla_\xVal \hjTest(\xNot, \tNot), \ppt \hjTest(\xNot, \tNot) \Big) \ge 0,$$
        \item[(\textit{iii})] $\hjSol(\xVal, \tFin) \ge \hjData(\xVal)$ for all $\xVal \in \Rn$;
        \end{enumerate}

        \item \textbf{viscosity solution} of \eqref{eqn:hj-tvp} if 
        $\hjSol$ is both a viscosity sub-solution and super-solution of \eqref{eqn:hj-tvp}.
    \end{itemize}
\end{definition}

\subsection{Characterization of the GRA value function as the unique viscosity solution to an HJ-TVP}\label{sec:gra-tvp}
We now introduce the generalized reach-avoid terminal-value-problem (GRA-TVP), which takes the form
\begin{equation}\label{eqn:gra-tvp}\tag{GRA-TVP}
    \begin{cases}
     -\min\big\{ \conFn(\xVal,\tVal) - \hjDep,
     \max\big\{ \tarFn(\xVal,\tVal) - \hjDep, \\
     \qquad\quad \ppt \hjDep + \hamiltonian(\xVal, \tVal, \nabla_x \hjDep) \big\} \big\} = 0, & \xVal \in \Rn,\tVal \in (\tInit, \tFin), \\
     \hjDep(\xVal,\tFin) = \min\big\{\conFn(\xVal,\tFin), \\
     \qquad \qquad \quad \max\{ \tarFn(\xVal,\tFin), \terFn(\xVal) \}\big\}, & \xVal \in \Rn.
    \end{cases}
\end{equation}
\begin{remark}
    We note that the underlying HJ-PDE in the above HJ-TVP is the same as the HJ-PDE for the reach-avoid task (V) presented in \cite{fisac-chen-2015}.
    It is instead the boundary condition that has been generalized, now including the terminal set function $\terFn$.
    We will clarify at the end of this sub-section how the HJ-PDE for every canonical task can in fact be written in this same form. \blackbox
\end{remark}

We refer to the function $\hamiltonian$ in \eqref{eqn:gra-tvp} as the Hamiltonian.
In the upcoming theorem, we examine the uniqueness properties of \eqref{eqn:gra-tvp} for a certain class of Hamiltonians, and in the subsequent theorem we characterize the GRA value function as the unique solution of this TVP when $\hamiltonian$ is the standard Hamiltonian in HJR.

\begin{theorem}[Comparison principle]\label{thm:comparison-principle}
Let $(\graData)$ be a GRA instance and let $\tInit < \tFin$.
Assume $\hamiltonian: \Rn \tms (\tInit, \tFin] \tms \Rn \to \R$ is continuous and satisfies the following:
\begin{enumerate}
\item[(\textit{i})] there is some $K > 0$ such that $|\hamiltonian(\xVal, \tVal, \costate) - \hamiltonian(\xVal, \tVal, \coDum)| \le K (\|\xVal\| + 1)\|\costate - \coDum\|$ for all $\xVal \in \Rn$, $\tVal \in (\tInit, \tFin)$, and $\costate, \coDum \in \Rn$,
\item[(\textit{ii})] for each compact $\compact \ssbs \Rn$, there is an $M > 0$ such that $|\hamiltonian(\xVal, \tVal, \costate) - \hamiltonian(\xDum, \tVal, \costate)| \le M\|\xVal - \xDum\|(1 + \|\costate\|)$ for all $\xVal, \xDum \in \compact$, $\tVal \in (\tInit, \tFin)$, and $\costate \in \Rn$.
\end{enumerate}
If $\hjSol_-: \Rn \tms (\tInit, \tFin] \to \R$ is a viscosity sub-solution of \eqref{eqn:gra-tvp} and $\hjSol_+: \Rn \tms (\tInit, \tFin] \to \R$ is a viscosity super-solution of \eqref{eqn:gra-tvp}, then $\hjSol_- \le \hjSol_+$.
\end{theorem}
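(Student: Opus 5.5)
The plan is the standard doubling-of-variables argument. It needs three adaptations: (a) the domain is unbounded and $\hjSol_\pm$ are not assumed bounded, so I will localize using the linear growth in (\textit{i}); (b) the PDE has the obstacle structure $\min\{\conFn-\hjDep,\max\{\tarFn-\hjDep,\cdot\}\}$; (c) the terminal condition is imposed pointwise at $\tVal=\tFin$, so a maximum point approaching $\tFin$ must be ruled out using the semicontinuity of $\hjSol_\pm$.

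\textbf{Setup.} Suppose for contradiction that $\hjSol_-(\xVal_0,\tVal_0)-\hjSol_+(\xVal_0,\tVal_0)=:4\delta>0$ for some $(\xVal_0,\tVal_0)$. For localization I use the cone of dependence. Choose $R>0$ large with $\|\xVal_0\|<R$, and set
\[
\rho(\xVal,\tVal):=\sqrt{1+\|\xVal\|^2}\,e^{K'(\tVal-\tVal_0)} ,
\]
with $K'>K$, so that a smooth nondecreasing cutoff $\psi(\rho)$ (equal to $0$ for $\rho\le R$, blowing up before $\rho=2R$) satisfies $\ppt\psi - K(\|\xVal\|+1)\|\nabla_\xVal\psi\|\ge 0$. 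Because $\rho$ grows in $\tVal$, the region $\{\rho<2R\}$ shrinks backward in time. I will verify this sign condition on $\psi$ directly; the choice of $\rho$ may need tuning. For $\tVal,\tDum\in[\tVal_0,\tFin]$ I consider
\[
\Phi(\xVal,\tVal,\xDum,\tDum)=\hjSol_-(\xVal,\tVal)-\hjSol_+(\xDum,\tDum)-\frac{\|\xVal-\xDum\|^2+|\tVal-\tDum|^2}{2\epsilon}-\psi(\rho(\xVal,\tVal))-\psi(\rho(\xDum,\tDum))-\eta(\tFin-\tVal).
\]
Here $\eta>0$ is small, the last term gives strictness in the time derivative, and the $\tVal\ge\tVal_0$ restriction keeps us away from $\tInit$. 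I handle the left endpoint $\tVal_0$ by instead working on $[\tVal_0-\sigma,\tFin]$ with a barrier $\eta/(\tVal-\tVal_0+\sigma)$. By upper semicontinuity and the cutoff, $\Phi$ attains its maximum on a compact set, with value $\ge 3\delta$ for small $\eta$. Standard arguments give $\|\xVal_\epsilon-\xDum_\epsilon\|^2/\epsilon,\ |\tVal_\epsilon-\tDum_\epsilon|^2/\epsilon\to0$ along a subsequence converging to some $(\bar\xVal,\bar\tVal)$.

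\textbf{Terminal case.} If $\bar\tVal=\tFin$, then upper semicontinuity of $\hjSol_-$ and lower semicontinuity of $\hjSol_+$ give $\hjSol_-(\bar\xVal,\tFin)-\hjSol_+(\bar\xVal,\tFin)\ge 3\delta$. This contradicts (\textit{iii}), which gives $\hjSol_-(\cdot,\tFin)\le \hjData\le\hjSol_+(\cdot,\tFin)$, where $\hjData=\min\{\conFn,\max\{\tarFn,\terFn\}\}(\cdot,\tFin)$. So the maximum points are interior for small $\epsilon$, and the viscosity inequalities apply with test gradients $\costate_\epsilon+\nabla\psi$ (sub-solution) and $\costate_\epsilon-\nabla\psi$ (super-solution).

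\textbf{Interior case.} The sub-solution condition reads $\min\{\conFn-\hjSol_-,\max\{\tarFn-\hjSol_-,\ppt\hjTest+\hamiltonian\}\}\ge0$ at $(\xVal_\epsilon,\tVal_\epsilon)$. The super-solution condition reads $\min\{\cdots\}\le 0$ at $(\xDum_\epsilon,\tDum_\epsilon)$. I split into cases.
\begin{itemize}
\item If $\conFn(\xDum_\epsilon,\tDum_\epsilon)\le \hjSol_+(\xDum_\epsilon,\tDum_\epsilon)$, then $\conFn$ is finite there. Since $\hjSol_-\le\conFn$ at $(\xVal_\epsilon,\tVal_\epsilon)$, the difference $\hjSol_--\hjSol_+$ is at most $\conFn(\xVal_\epsilon,\tVal_\epsilon)-\conFn(\xDum_\epsilon,\tDum_\epsilon)\to0$ by continuity of $\conFn$ (into $(-\ii,\ii]$ with its topology). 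This contradicts the value $\ge 3\delta$.
\item Otherwise $\tarFn\le\hjSol_+$ and $\ppt\hjTest+\hamiltonian\le0$ at the $(\xDum,\tDum)$ point. If also $\tarFn\ge\hjSol_-$ at the $(\xVal,\tVal)$ point, the symmetric continuity argument with $\tarFn$ gives a contradiction.
\item In the remaining case both Hamiltonian inequalities hold. Subtracting them, the $\tVal$-penalization cancels and $\eta$ remains. Using (\textit{i}) for the $\nabla\psi$ terms (absorbed by the sign condition on $\psi$) and (\textit{ii}) on the compact set $\{\rho\le 2R\}$, I get
\[
\eta\le M\|\xVal_\epsilon-\xDum_\epsilon\|\bigl(1+\|\xVal_\epsilon-\xDum_\epsilon\|/\epsilon\bigr)+o(1)\to0,
\]
which is again a contradiction.
\end{itemize}

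\textbf{Main obstacle.} I expect the hardest step to be the localization in the last case. One must choose $\psi$ and $\rho$ so that the growth $K(\|\xVal\|+1)$ in (\textit{i}) is dominated by $\ppt\psi$, while (\textit{ii}) is used only on a fixed compact set. The mismatch of the time arguments $\tVal_\epsilon\ne\tDum_\epsilon$ inside $\hamiltonian$ also needs care. Since (\textit{ii}) holds uniformly in $\tVal$, I plan to use continuity of $\hamiltonian$ on compacts for the time shift, combined with the bound $\|\costate_\epsilon\|\|\xVal_\epsilon-\xDum_\epsilon\|\to0$.
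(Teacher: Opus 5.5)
Your overall architecture is sound: a single global doubling with a growth-absorbing cutoff, exclusion of the terminal case by semicontinuity and (\textit{iii}), and an obstacle case split read off directly from the viscosity inequalities. The obstacle split is a correct and more direct version of the paper's Step~6. However, two steps fail as written.

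\textbf{The cutoff has the wrong time direction.} Subtract the supersolution inequality at $(y_\epsilon,s_\epsilon)$ (gradient $p_\epsilon-\nabla_x\psi$) from the subsolution inequality at $(x_\epsilon,t_\epsilon)$ (gradient $p_\epsilon+\nabla_x\psi$). The $(t-s)/\epsilon$ terms cancel, and (\textit{i}) gives, after dropping the nonpositive contribution of your left barrier,
\[
\begin{aligned}
0\le{}&-\eta+\bigl[\partial_t\psi+K(\|x\|+1)\|\nabla_x\psi\|\bigr](x_\epsilon,t_\epsilon)\\
&+\bigl[\partial_t\psi+K(\|x\|+1)\|\nabla_x\psi\|\bigr](y_\epsilon,s_\epsilon)\\
&+H(x_\epsilon,t_\epsilon,p_\epsilon)-H(y_\epsilon,s_\epsilon,p_\epsilon).
\end{aligned}
\]
So you need $\partial_t\psi+K(\|x\|+1)\|\nabla_x\psi\|\le0$. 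The cutoff must be nonincreasing in $t$, with its admissible region widening toward $T$, which is the domain of dependence of a terminal value problem. Your $\rho=\sqrt{1+\|x\|^2}\,e^{K'(t-t_0)}$ is increasing in $t$, and your stated condition $\partial_t\psi-K(\|x\|+1)\|\nabla_x\psi\|\ge0$ is the reverse. The brackets are then nonnegative, and since $\psi'$ is unbounded they cannot be absorbed. Your phrase ``shrinks backward in time'' describes the right geometry, but your formula produces the opposite. The paper's cutoff $h(\langle x\rangle-C(R_0-t_0+t))$ in Lemma~\ref{lem:cone-of-dependence} has exactly the property $\partial_t\psi+C\|\nabla_x\psi\|\le0$.

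With $\rho=\sqrt{1+\|x\|^2}\,e^{-K'(t-t_0)}$ you get $\partial_t\psi+K(\|x\|+1)\|\nabla_x\psi\|=\psi'(\rho)\rho\bigl(-K'+K\tfrac{(\|x\|+1)\|x\|}{1+\|x\|^2}\bigr)\le0$ for $K'\ge 2K$. Your route then becomes a genuine alternative to the paper's. The paper uses cones of fixed slope $C=K(\|x_0\|+1)/(1-KR)$, which forces $R<1/K$ and an induction over time slabs. The exponential weight instead absorbs the linear growth of (\textit{i}) on all of $[t_0-\sigma,T]$ at once.

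\textbf{The time shift in $H$ is not controlled with a single penalization parameter.} After (\textit{ii}) handles $H(x_\epsilon,t_\epsilon,p_\epsilon)-H(y_\epsilon,t_\epsilon,p_\epsilon)$, the term $H(y_\epsilon,t_\epsilon,p_\epsilon)-H(y_\epsilon,s_\epsilon,p_\epsilon)$ remains. Penalizing space and time with the same $\epsilon$, you only know $\|p_\epsilon\|=o(\epsilon^{-1/2})$ and $|t_\epsilon-s_\epsilon|=o(\epsilon^{1/2})$. The hypotheses give no modulus of continuity in $t$ that is uniform in the costate. Even for \eqref{eqn:hamiltonian}, the natural bound is $|H(y,t,\lambda)-H(y,s,\lambda)|\le\|\lambda\|\,\omega(|t-s|)$ for some modulus $\omega$, and $o(\epsilon^{-1/2})\,\omega(o(\epsilon^{1/2}))$ need not tend to $0$. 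Continuity of $H$ on compacts does not apply, since $p_\epsilon$ need not stay bounded. The fact that $\|p_\epsilon\|\|x_\epsilon-y_\epsilon\|\to0$ says nothing about the time shift.

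The fix is the paper's. Penalize time with an independent parameter, $|t-s|^2/(2\eta')$. Send $\eta'\to0$ first with $\epsilon$ fixed, so the costates stay in a compact set and uniform continuity of $H$ removes the time shift. Only then send $\epsilon\to0$. Your terminal-case exclusion and obstacle split carry over along joint sequences $(\epsilon_k,\eta'_k)\to0$, as in the paper's Step~6.
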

\begin{proof}
The rather technical proof of this theorem adapts the proof structure for the analogous result for an HJ-PDE of the form $-\ppt \hjSol - \hamiltonian(\xVal,\tVal, \nabla_x \hjSol) = 0$.
This proof was originally presented in \cite{Ishii-uniqueness-unbounded-1984}, and a simplified version can be found in \cite{Bardi-Dolcetta-Optimal-Control}.
In particular, via a ``cone of dependence argument,'' this proof does not require the relevant viscosity sub-solution and super-solution to be bounded and uniformly continuous.
This approach was extended in \cite{Fialho-Georgiou-TAC-Worst-Case-Analysis-1999} to prove a similar comparison principle for the ``obstacle equation'' $-\min\{\conFn(\xVal, \tVal) - \hjSol, \ppt \hjSol + \hamiltonian(\xVal,\tVal, \nabla_x \hjSol)\} = 0.$
For completeness, we provide the details of our adaptation for the HJ-PDE underlying \eqref{eqn:gra-tvp} in Section \ref{sec:appendix-comparison-principle} of the appendix, following as closely to the simplified proof in \cite{Bardi-Dolcetta-Optimal-Control} as our setting allows.
\end{proof}

\begin{theorem}[HJ-TVP for the GRA value] \label{thm:dpe}
Let $(\graData)$ be a GRA instance, let $\tInit < \tFin$, and assume that the dynamics function $\dynamics$ is continuous on $\Rn \tms \uVals \tms \dVals \tms (\tInit, \tFin]$.
Define the Hamiltonian $\hamiltonian:\Rn \tms (\tInit, \tFin] \tms \Rn \to \R$ by
\begin{equation}\label{eqn:hamiltonian}
    \hamiltonian(\xVal, \tVal, \costate) := \max_{\uVal \in \uVals} \min_{\dVal \in \dVals} \costate \cdot \dynamics(\xVal, \uVal, \dVal, \tVal).
\end{equation}
Then $\hjDep = \valGRA[\graData]|_{\Rn \tms (\tInit, \tFin]}$ is the unique viscosity solution of \eqref{eqn:gra-tvp}.
\end{theorem}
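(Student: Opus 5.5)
We will prove that $\hjDep = \valGRA[\graData]|_{\Rn \tms (\tInit, \tFin]}$ is a viscosity solution of \eqref{eqn:gra-tvp}. Uniqueness then follows from Theorem \ref{thm:comparison-principle}: if $w$ is any other solution, applying the comparison principle twice gives $w \le \hjDep$ and $\hjDep \le w$.

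\textbf{Hypotheses of the comparison principle.} First we check that the Hamiltonian \eqref{eqn:hamiltonian} satisfies (\textit{i})--(\textit{ii}) of Theorem \ref{thm:comparison-principle}.
\begin{itemize}
\item Continuity of $\hamiltonian$ follows from continuity of $\dynamics$ on $\Rn \tms \uVals \tms \dVals \tms (\tInit,\tFin]$ and compactness of $\uVals,\dVals$ (Assumption \ref{assumption:compactness}).
\item For (\textit{i}), a max-min of functions that are Lipschitz in $\costate$ is Lipschitz with the same constant. Since $\|\dynamics\| \le K(1+\|\xVal\|)$, we get the constant $K(1+\|\xVal\|)$.
\item For (\textit{ii}), the local Lipschitz bound on $\dynamics$ from Assumption \ref{assumption:regularity} gives $|\hamiltonian(\xVal,\tVal,\costate)-\hamiltonian(\xDum,\tVal,\costate)| \le L\|\costate\|\|\xVal-\xDum\|$ on each compact set.
\end{itemize}
Continuity of $\hjDep$ is Lemma \ref{lem:gra-evaluation-is-continuous}, so semicontinuity is automatic.

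\textbf{Terminal condition.} At $\tVal=\tFin$ the functional reduces pointwise to
$$\robGRA(\xSig,\tFin)=\max\{\min\{\tarFn(\xSig(\tFin),\tFin),\conFn(\xSig(\tFin),\tFin)\},\ \min\{\terFn(\xSig(\tFin)),\conFn(\xSig(\tFin),\tFin)\}\}.$$
This equals $\min\{\conFn,\max\{\tarFn,\terFn\}\}$ evaluated at $(\xSig(\tFin),\tFin)$, which depends only on $\xSig(\tFin) = \xVal$. The inf-sup is therefore trivial and the terminal condition holds with equality.

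\textbf{Interior: dynamic programming.} We use Corollary \ref{cor:dpp} with $\tFinDum = \tVal+h$, writing $\tilde\terFn = \hjDep(\cdot,\tVal+h)$. Expanding $\robGRA$ on $[\tVal,\tVal+h]$ gives the local relation
$$\hjDep(\xVal,\tVal)=\infd\supu \max\Big\{\max_{\tDum\in[\tVal,\tVal+h]}\min\{\tarFn(\xSig(\tDum),\tDum),\min_{[\tVal,\tDum]}\conFn\},\ \min\{\hjDep(\xSig(\tVal+h),\tVal+h),\min_{[\tVal,\tVal+h]}\conFn\}\Big\},$$
where $\xSig = \stratTraj$. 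Letting $h \to 0$ and using continuity of trajectories and of $\tarFn$ and $\conFn$ yields the pointwise inequalities
$$\tarFn(\xVal,\tVal)\le \hjDep(\xVal,\tVal)\le \conFn(\xVal,\tVal)$$
(the latter where $\conFn$ is finite). These give the easy halves of the sub- and super-solution tests.

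\textbf{Sub-solution test.} Let $\hjDep-\hjTest$ have a local maximum at $(\xNot,\tNot)$. The inequality $-\min\{\conFn-\hjDep,\cdot\}\le 0$ needs $\conFn-\hjDep\ge 0$, which holds, together with either $\tarFn-\hjDep\ge 0$ or $\ppt\hjTest+\hamiltonian\ge 0$. So we assume $\tarFn(\xNot,\tNot)<\hjDep(\xNot,\tNot)$ and prove $\ppt\hjTest+\hamiltonian(\xNot,\tNot,\nabla\hjTest)\ge 0$ by contradiction.

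Suppose $\ppt\hjTest+\hamiltonian\le-\theta<0$. By the Evans--Souganidis construction, there is a fixed $\dVal$-response $\dStrat$ (built via measurable selection from the $\min$ in $\hamiltonian$, using continuity) under which every $\uSig$ satisfies
$$\ppt\hjTest+\nabla\hjTest\cdot\dynamics\le-\theta/2$$
on a small interval $[\tNot,\tNot+h]$. For small $h$, the target term is strictly below $\hjDep(\xNot,\tNot)-\epsilon$ by continuity. Integrating along the trajectory and using the local maximum gives
$$\hjDep(\xSig(\tNot+h),\tNot+h)\le\hjDep(\xNot,\tNot)-\theta h/2,$$
so the right side of the local relation is at most $\hjDep(\xNot,\tNot)-\min\{\epsilon,\theta h/2\}$, a contradiction.

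\textbf{Super-solution test.} Let $\hjDep-\hjTest$ have a local minimum at $(\xNot,\tNot)$. The target inequality $\tarFn\le\hjDep$ holds. If $\conFn(\xNot,\tNot)=\hjDep(\xNot,\tNot)$, the test is satisfied directly. Otherwise $\conFn>\hjDep+\epsilon$ nearby, so the constraint minimum is inactive on short intervals, and we must show $\ppt\hjTest+\hamiltonian\le 0$.

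Suppose instead $\ppt\hjTest+\hamiltonian\ge\theta>0$. Then some fixed $\uVal$ gives $\ppt\hjTest+\nabla\hjTest\cdot\dynamics(\cdot,\uVal,\dVal,\cdot)\ge\theta/2$ for all $\dVal$ near $(\xNot,\tNot)$. Against any $\dStrat$, the constant control $\uSig\equiv\uVal$ yields
$$\hjDep(\xSig(\tNot+h),\tNot+h)\ge\hjDep(\xNot,\tNot)+\theta h/2.$$
Since $\conFn$ is inactive, the second branch of the local relation then exceeds $\hjDep(\xNot,\tNot)$, a contradiction.

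\textbf{Main obstacle.} The step I expect to be most delicate is the sub-solution direction. It requires constructing the non-anticipative strategy uniformly over all $\uSig$, with measurable selection and a uniform time window. It also requires justifying the $h\to0$ limits with only continuous, possibly unbounded, $\tarFn$ and $\conFn$, for which the local boundedness of trajectories from Assumption \ref{assumption:regularity} is used.
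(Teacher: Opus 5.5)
Your argument is correct, but the paper writes out a different route. You verify the viscosity inequalities directly from the dynamic programming principle (Corollary~\ref{cor:dpp}). For the sub-solution test you use an Evans--Souganidis-type Borel response $d=\hat d(u)$, and for the super-solution test you use a constant control. The paper only mentions this option in passing and proves the interior equation by approximation instead. It writes the GRA functional as a locally uniform limit, on classes of $L$-Lipschitz trajectories, of reach-avoid functionals with target $\max\{r,\ell_i\}$, where $\ell_i$ is a time ramp that switches $\ell$ on only during $[T-1/i,T]$. It transfers this convergence to the value functions via Lemma~\ref{lem:limits}. It then concludes from the reach-avoid result (Lemma~\ref{lem:ra-hj-tvp}) together with stability of viscosity solutions under locally uniform convergence. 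That lemma is itself obtained from the Lipschitz-data theorem of Fisac et al.\ by rescaling with $\sigma$, localizing the dynamics, and approximating $r,q$ by Lipschitz functions.

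Your route is self-contained and purely local. It needs only continuity of $r,q$ at the test point and of $f$ nearby, so extended-valued or unbounded data cause no trouble and no rescaling or approximation is required. The price is the measurable-selection construction of the non-anticipative strategy and the uniform time window, which you correctly identify as the delicate step. The paper's route outsources that work to existing reach-avoid theory and exhibits the terminal set as a limit of thin target slabs near $T$, but it needs a longer chain of auxiliary lemmas.

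One small correction: your claim $r\le v$ is false in general. With $q\equiv-1$ and $r\equiv 1$, one gets $v\equiv -1$. What the definition gives directly, by taking $s=t$ and using that every branch contains $q(x,t)$, is $\min\{r,q\}\le v\le q$; no limit in $h$ is needed. You invoke $r\le v$ only in the super-solution branch where $q(x_0,t_0)>v(x_0,t_0)$. There it does follow from $\min\{r,q\}\le v$, so the argument is unaffected.
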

\begin{proof}
It follows directly from \eqref{eqn:gra-definition} that 
$$\valGRA[\graData](\cdot, \tFin) = \min\{\conFn(\cdot, \tFin), \max\{\tarFn(\cdot, \tFin), \terFn(\cdot)\}\},$$
so the boundary condition indeed holds.
That \eqref{eqn:gra-tvp} has at most one viscosity solution follows from Theorem \ref{thm:comparison-principle} and Assumptions \ref{assumption:compactness}-\ref{assumption:regularity}.
All that then remains is to prove that $\hjDep = \valGRA[\graData]|_{\Rn \tms (\tInit, \tFin)}$ is a viscosity solution of the HJ-PDE in \eqref{eqn:gra-tvp}. 

One approach to do so is to use Corollary \ref{cor:dpp} and follow the proof of the corresponding theorem (Theorem 1 in \cite{fisac-chen-2015}) for the reach-avoid task.
An alternative approach that provides more intuition regarding the GRA task leverages the fact that the GRA performance functional can be viewed as a sufficiently uniform limit of reach-avoid performance functionals.
One can then argue a similar limit holds for the corresponding value functions and thereafter use the theory of HJ-PDEs to show the desired result directly from Theorem 1 in \cite{fisac-chen-2015} (rather than by reconstructing its proof).
This latter approach is presented in Section \ref{sec:appendix-dpe} in the appendix.
\end{proof}

The significance of Theorem \ref{thm:dpe} is that to compute the GRA value function one can solve an HJ-TVP.
This is the generalization of the corresponding theorems in the canonical works \cite{mitchell-2005,lygeros-2011,fisac-chen-2015} to the GRA task.
In light of Observation \ref{lem:equivalence}, the value function for each fundamental task can then be characterized using this same HJ-TVP or an equivalent HJ-TVP in a simplified (and perhaps more familiar) form (Table \ref{tab:tvps}).

\begin{table}[ht]
\centering
\begin{tabular}{>{\bfseries} l l l}
\toprule
Task & ~ & Hamilton-Jacobi Terminal Value Problem\\
\midrule\midrule
GRA & PDE & $-\min\{\conFn - \hjDep, \max\{\tarFn - \hjDep, \ppt \hjDep + \hamiltonian(\xVal, \tVal, \nabla_x \hjDep)\}\} = 0$\\
~ & BC & $\hjDep(\xVal, \tFin) = \min\{\conFn(\xVal, \tFin), \max\{\tarFn(\xVal, \tFin), \terFn(\xVal)\}\}$\\
\midrule
~ & PDE & $-\ppt \hjDep - \hamiltonian(\xVal, \tVal, \nabla_x \hjDep) = 0$\\
(I) & BC & $\hjDep(\xVal, \tFin) = \terFn(\xVal)$\\
~ & & Derived from GRA HJ-TVP via $\tarFn = -\ii$, $\conFn = \ii$\\
\midrule
~ & PDE & $-\max\{\tarFn - \hjDep, \ppt \hjDep  + \hamiltonian(\xVal, \tVal, \nabla_x \hjDep)\} = 0$\\
(II) & BC & $\hjDep(\xVal, \tFin)= \tarFn(\xVal, \tFin)$\\
~ & ~ & Derived from GRA HJ-TVP via $\terFn(\cdot) \le \tarFn(\cdot, \tFin)$, $\conFn = \ii$\\
\midrule
~ & PDE & $-\min\{\conFn - \hjDep, \ppt \hjDep + \hamiltonian(\xVal, \tVal, \nabla_x \hjDep)\} = 0$\\
(III) & BC & $\hjDep(\xVal, \tFin)=\conFn(\xVal, \tFin)$\\
~ & ~ & Derived from GRA HJ-TVP via $\terFn(\cdot) \ge \conFn(\cdot, \tFin)$, $\tarFn = -\ii$\\
\midrule
~ & PDE & $-\min\{\conFn - \hjDep, \ppt \hjDep + \hamiltonian(\xVal, \tVal, \nabla_x \hjDep)\} = 0$\\
(IV) & BC & $\hjDep(\xVal, \tFin)=\min\{\conFn(\xVal, \tFin), \terFn(\xVal)\}$\\
~ & ~ & Derived from GRA HJ-TVP via $\tarFn = -\ii$\\
\midrule
~ & PDE & $-\min\{\conFn - \hjDep, \max\{\tarFn - \hjDep, \ppt \hjDep + \hamiltonian(\xVal, \tVal, \nabla_x \hjDep)\}\} = 0$\\
(V) & BC & $\hjDep(\xVal, \tFin)=\min\{\conFn(\xVal, \tFin), \tarFn(\xVal, \tFin)\}$\\
~ & ~ & Derived from GRA HJ-TVP via $\terFn(\cdot) \le \tarFn(\cdot, \tFin)$\\
\bottomrule
\end{tabular}
\caption{Summary of the HJ-TVPs for the canonical fundamental problems in HJR.
Each HJ-TVP consists of an HJ-PDE and a terminal boundary condition (BC).
For each of the tasks (I)-(V), the HJ-TVP can be derived from the GRA task's HJ-TVP. To do so, one substitutes the conditions from Observation \ref{lem:equivalence} into the GRA task's HJ-TVP and simplifies the resulting equations.}
\label{tab:tvps}
\end{table}

\subsection{Representation of solutions to HJR's canonical PDE}
In this sub-section, we fix $\tInit, \tFin \in \R$ with $\tInit < \tFin$.
We also fix continuous functions $\tarFn, \conFn: \Rn \tms \R \to \extended$ such that $\tarFn < \ii$ and $\conFn > -\ii$.
Finally, we will assume the dynamics function $\dynamics$ is continuous on $\Rn \tms \uVals \tms \dVals \tms (\tInit, \tFin]$, and we set the Hamiltonian $\hamiltonian:\Rn \tms (\tInit, \tFin] \tms \Rn \to \R$ using \eqref{eqn:hamiltonian}.

Recall that an HJ-TVP is an HJ-PDE together with a terminal boundary condition.
For a brief moment, let us ignore the boundary condition and focus solely on the HJ-PDE itself.
The canonical HJ-PDE in HJR is
\begin{align*}\label{eqn:hjr-pde}
    -\min\{& \conFn(\xVal,\tVal) - \hjDep, \max\{ \tarFn(\xVal,\tVal) - \hjDep, \\
    &\ppt \hjDep + \hamiltonian(\xVal, \tVal, \nabla_x \hjDep) \}\} = 0,
    \quad \xVal \in \Rn,~\tVal \in (\tInit, \tFin).&\tag{Canonical PDE}
\end{align*}
Indeed, the HJ-TVP for each of the five canonical tasks in HJR can be written using this same underlying PDE (see Table \ref{tab:tvps}).

Now let us consider the importance of the boundary condition.
Take the reach-avoid task, i.e. task (V), in which the terminal boundary condition is given by $$\hjDep(\xVal,\tFin) = \min\lf\{\conFn(\xVal, \tFin), \tarFn(\xVal,\tFin) \rg\}, \quad \xVal \in \Rn.$$
For this task, the boundary condition is entirely determined by the data (i.e. $\tarFn$ and $\conFn$) that appears within the HJ-PDE itself.
It is thus natural to ask: \textit{are there other solutions to \eqref{eqn:hjr-pde} (i.e. ones corresponding to alternative boundary conditions), and if so can they naturally be interpreted as value functions for some task?}

Based on the previous sub-section, the answer to the above question is yes.
Indeed, in \eqref{eqn:gra-tvp}, the boundary condition is 
$$\hjDep(\xVal,\tFin) = \min\lf\{\conFn(\xVal, \tFin), \max\{\tarFn(\xVal,\tFin), \terFn(\xVal) \} \rg\}, \quad \xVal \in \Rn.$$
This boundary condition includes the terminal set function $\terFn$, which does not appear in the HJ-PDE itself, unlike $\tarFn$ and $\conFn$.
Thus, whereas the reach-avoid boundary conditions are fixed solely by the data within the PDE, in the GRA boundary condition the term $\terFn$ serves as a degree of freedom.
This degree of freedom can be changed to obtain various solutions of \eqref{eqn:hjr-pde}, and
by Theorem \ref{thm:dpe} each of those solutions can be represented by a GRA value function.

That said, there are still limits to the expressivity of the boundary data in \eqref{eqn:gra-tvp}.
In particular, regardless of the choice of $\terFn$, we still have 
\begin{align*}
\min\{&\conFn(\cdot, \tFin), \tarFn(\cdot, \tFin)\} \le \\
&\min\{\conFn(\cdot,\tFin), \max\{\tarFn(\cdot,\tFin), \terFn(\cdot)\}\} \le \conFn(\cdot, \tFin).
\end{align*}
It happens to be the case, however, that the structure of \eqref{eqn:hjr-pde} itself requires that any of its viscosity solutions $\hjDep$ satisfies 
$$\min\{\conFn, \tarFn\} \le \hjDep \le \conFn.$$
As such, the degree of freedom that the terminal set function $\terFn$ offers within the GRA formulation is in fact enough to represent \textit{every} sufficiently regular viscosity solution of \eqref{eqn:hjr-pde}.
This idea is summarized in the following corollary of Theorem \ref{thm:dpe}.

\begin{corollary}[Representation of solutions]\label{cor:parameterization}
Let $\tInit, \tFin \in \R$ with $\tInit < \tFin$,
let $\tarFn, \conFn: \Rn \tms \R \to \extended$ be continuous with $\tarFn < \ii$ and $\conFn > -\ii$, and assume the dynamics function $\dynamics$ is continuous on $\Rn \tms \uVals \tms \dVals \tms (\tInit, \tFin]$. Define $\hamiltonian:\Rn \tms (\tInit, \tFin] \tms \Rn \to \R$ by \eqref{eqn:hamiltonian}.

For every viscosity solution $\hjDep:\Rn \tms (\tInit, \tFin) \to \R$ of \eqref{eqn:hjr-pde} that is uniformly continuous on $\compact \tms [\tVal, \tFin)$ for all $\tVal \in (\tInit, \tFin)$ and all compact $\compact \ssbs \Rn$, there is a continuous $\terFn: \Rn \to \R$ such that $\hjDep = \valGRA[\graData]|_{\Rn \tms (\tInit, \tFin)}$.
In particular, $\terFn$ can be chosen to be $\terFn(\cdot) := \hjDep(\cdot,\tFin^-)$.
\end{corollary}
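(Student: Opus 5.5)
The plan is to extend $\hjDep$ to the terminal slice, show that the extension solves \eqref{eqn:gra-tvp} with the choice $\terFn(\cdot) := \hjDep(\cdot,\tFin^-)$, and then invoke the uniqueness part of Theorem \ref{thm:dpe}.

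\textbf{Step 1: the terminal trace exists and is continuous.} Since $\hjDep$ is uniformly continuous on $\compact \tms [\tVal,\tFin)$ for every compact $\compact$, for each $\xVal$ the limit $\hjDep(\xVal,\tFin^-) := \lim_{s \uparrow \tFin} \hjDep(\xVal, s)$ exists. The convergence is uniform on compacts, so the resulting $\terFn$ is continuous and real-valued. Define $\bar{\hjDep}$ on $\Rn \tms (\tInit,\tFin]$ by $\bar{\hjDep}|_{t<\tFin} = \hjDep$ and $\bar{\hjDep}(\cdot,\tFin) = \terFn$. Uniform continuity again gives that $\bar{\hjDep}$ is continuous. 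Then $(\tarFn,\conFn,\terFn;\tFin)$ is a GRA instance: $\tarFn < \ii$ and $\conFn > -\ii$ hold by hypothesis, and any values of $\tarFn,\conFn$ for $\tVal \notin (\tInit,\tFin]$ are irrelevant since the value function on $(\tInit,\tFin]$ only sees $[\tVal,\tFin]$.

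\textbf{Step 2: sandwich inequality.} I would show $\min\{\conFn,\tarFn\} \le \hjDep \le \conFn$ on $\Rn \tms (\tInit,\tFin)$.

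For the upper bound, suppose $\hjDep(\xNot,\tNot) > \conFn(\xNot,\tNot)$. Use the test function $\hjTest(\xVal,\tVal) = \hjDep(\xNot,\tNot) + \frac{|\xVal-\xNot|^2+|\tVal-\tNot|^2}{\epsilon}$ after a doubling/penalization argument, or more simply sup-convolve. The standard route is this: at a local max of $\hjDep - \hjTest$ near $(\xNot,\tNot)$, the sub-solution inequality gives $-\min\{\conFn - \hjDep, \cdot\} \le 0$, i.e. $\conFn - \hjDep \ge 0$ at that point. Such local-max points exist arbitrarily close to $(\xNot,\tNot)$ by a perturbed-maximum argument, using that $\hjDep$ is continuous. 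Continuity of $\conFn,\hjDep$ then gives a contradiction. Where $\conFn = \ii$ the bound is trivial.

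The lower bound is symmetric. At local-min points of $\hjDep - \hjTest$ the super-solution inequality yields $\min\{\conFn - \hjDep, \max\{\tarFn - \hjDep,\cdot\}\} \le 0$. So if $\hjDep < \min\{\conFn,\tarFn\}$ there, both $\conFn - \hjDep > 0$ and $\tarFn - \hjDep > 0$, which is a contradiction.

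The general principle I use is that at every point $(\xNot,\tNot)$ there exist $C^1$ test functions touching $\hjDep$ from above or below at points arbitrarily close to it. For example, take $\hjTest = |\xVal-\xNot|^2/\epsilon + |\tVal-\tNot|^2/\epsilon$ and maximize $\hjDep - \hjTest$ over a small closed ball; the max lies in the interior for small $\epsilon$ and tends to $(\xNot,\tNot)$.

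Passing $\tVal \uparrow \tFin$ in the sandwich gives $\min\{\conFn(\cdot,\tFin),\tarFn(\cdot,\tFin)\} \le \terFn \le \conFn(\cdot,\tFin)$. Hence
\[
\min\{\conFn(\cdot,\tFin),\max\{\tarFn(\cdot,\tFin),\terFn\}\} = \terFn.
\]
To verify this, split into cases. If $\terFn \ge \tarFn(\cdot,\tFin)$, the inner max is $\terFn \le \conFn(\cdot,\tFin)$. Otherwise $\terFn < \tarFn(\cdot,\tFin)$, so the lower bound forces $\terFn \ge \conFn(\cdot,\tFin)$; combined with the upper bound this gives $\terFn = \conFn(\cdot,\tFin) \le \tarFn(\cdot,\tFin)$, and the expression equals $\conFn(\cdot,\tFin) = \terFn$.

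\textbf{Step 3: conclude.} Now $\bar{\hjDep}$ is continuous, is a viscosity solution of the PDE on $(\tInit,\tFin)$, and satisfies the terminal condition of \eqref{eqn:gra-tvp} with equality by Step 2. So it is a viscosity solution of \eqref{eqn:gra-tvp}. By Theorem \ref{thm:dpe}, $\valGRA[\graData]|_{\Rn\tms(\tInit,\tFin]}$ is the unique viscosity solution, so $\bar{\hjDep} = \valGRA[\graData]$ there. Restricting to $\tVal < \tFin$ gives the claim.

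\textbf{Main obstacle.} I expect the main obstacle to be Step 2, specifically making the touching-point argument rigorous when $\conFn$ or $\tarFn$ takes infinite values. The fix is to note that the inequalities are trivial wherever $\conFn = \ii$ or $\tarFn = -\ii$. Where the values are finite, continuity into $\extended$ gives finiteness in a neighborhood, so the local argument goes through.
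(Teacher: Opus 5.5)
Your proposal is correct and follows the paper's proof essentially step for step. You continuously extend $v$ to $t=T$ using the uniform-continuity hypothesis, use the sandwich $\min\{q,r\}\le v\le q$ to show the terminal trace $\ell$ satisfies $\min\{q(\cdot,T),\max\{r(\cdot,T),\ell\}\}=\ell$, and conclude by the uniqueness part of Theorem~\ref{thm:dpe}. The only differences are that you reprove the sandwich inline, whereas the paper cites its appendix Lemma~\ref{lem:hjr-pde-bounds}, which uses the same quadratic-penalization argument you sketch, and that you spell out the case analysis for the boundary identity, which the paper leaves implicit.
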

\begin{proof}
    Let $\hjDep: \Rn \tms (\tInit, \tFin) \to \R$ be such a viscosity solution.
    It follows from Lemma \ref{lem:hjr-pde-bounds} (Section \ref{sec:appendix-comparison-principle} in the appendix) that $\min\{\conFn, \tarFn\} \le \hjDep \le \conFn$.
    It follows from uniform continuity on each $\compact \tms [\tVal, \tFin)$ that $\hjDep$ continuously extends to a function $\bar{\hjDep}: \Rn \tms (\tInit, \tFin] \to \R$.
    Letting $\terFn(\cdot) := \bar{\hjDep}(\cdot, \tFin)$, it follows that $\min\{\conFn(\cdot, \tFin),\tarFn(\cdot,\tFin)\} \le \terFn(\cdot) \le \conFn(\cdot, \tFin)$.
    Thus $\bar{\hjDep}(\cdot, \tFin) = \min\{\conFn(\cdot, \tFin), \max\{\tarFn(\cdot, \tFin), \terFn(\cdot)\}\}$, so that $\bar{\hjDep} = \valGRA[\tarFn, \conFn, \terFn; \tFin]$ by Theorem \ref{thm:dpe}.
\end{proof}

\begin{remark}
We require the uniform continuity on each $\compact \tms [\tVal, \tFin)$ for a rather technical reason.
A viscosity solution to an HJ-TVP by definition must be well-defined at the terminal boundary $\Rn \tms \{\tFin\}$, but a viscosity solution of an HJ-PDE need not be.
This regularity requirement is sufficient and necessary for a real-valued function on $\Rn \tms (\tInit, \tFin)$ to continuously extend to this terminal boundary.

There indeed exist viscosity solutions $\hjSol$ of \eqref{eqn:hjr-pde} for which this regularity requirement does not hold, in which case we cannot hope to represent $\hjSol$ as a GRA value function.
(See Section \ref{sec:pathology} in the appendix for an example of such a pathology.)
As such, this regularity requirement cannot be removed from the hypothesis of Corollary \ref{cor:parameterization}. \blackbox
\end{remark}

To summarize, whereas Theorem \ref{thm:dpe} states that any GRA value function must be a solution to \eqref{eqn:hjr-pde} (with the same target tube function $\tarFn$ and constraint tube function $\conFn$), Corollary \ref{cor:parameterization} states that any sufficiently regular solution to \eqref{eqn:hjr-pde} must be a GRA value function (again with the same $\tarFn$ and $\conFn$).
In this way, Corollary \ref{cor:parameterization} can be interpreted as a converse to Theorem \ref{thm:dpe}.

\subsection{Application to dynamics that are piecewise-continuous in time}
In this sub-section, we discuss how to obtain the GRA value function when the dynamics function $\dynamics$ is piecewise-continuous in time.
This setting is useful to deal with systems that switch behavior at certain prescribed times, such as turning on or off.

Suppose there exist times $\tNot < \tVal_1 < \tVal_2 < \dots < \tVal_N$ such that the dynamics $\dynamics$ is continuous on $\Rn \tms \uVals \tms \dVals \tms (\tVal_{i-1}, \tVal_{i}]$ for each $i = 1,\dots,N$.
Let $(\tarFn, \conFn, \terFn; \tVal_N)$ be a GRA instance.

To compute $\valGRA[\tarFn, \conFn, \terFn; \tVal_N]$ on $\Rn \tms [\tNot, \tVal_N]$, we cannot directly use Theorem \ref{thm:dpe} since it assumes continuity of $\dynamics$ over the entire domain. 
However, Corollary \ref{cor:dpp} and Theorem \ref{thm:dpe}  together allow us to accomplish this task for such a system via Algorithm \ref{alg:discontinuous}.

\begin{algorithm}
\caption{Iterative generalized reach-avoid}
\label{alg:discontinuous}
\begin{algorithmic}
\REQUIRE GRA instance $(\tarFn, \conFn, \terFn; \tVal_N)$, times $\tNot,\dots, \tVal_{N-1}$.
\ENSURE Compute the value function $\val := \valGRA[\tarFn, \conFn, \terFn; \tVal_N]$ on $[\tNot, \tVal_N]$.

\FOR{$i = N,\dots,1$}
    \STATE $\tFin \gets \tVal_{i}$;
    \STATE $\tInit \gets \tVal_{i-1}$;   
    \STATE Compute $\val$ on the domain $\Rn \tms [\tVal_{i-1}, \tVal_i]$ by solving \eqref{eqn:gra-tvp} with $\hamiltonian$ as in \eqref{eqn:hamiltonian}; \comment see Theorem \ref{thm:dpe}
    \STATE $\terFn \gets \val(\cdot, \tInit)$;  \comment see Corollary \ref{cor:dpp}
\ENDFOR
\RETURN $\val$
\end{algorithmic}
\end{algorithm}

\section{Examples}\label{sec:examples}

We now give three examples to showcase the GRA task.
As is increasingly common in the HJR literature (see e.g. \cite{stl-meets-reachability,sharpless2026dual,sharpless2026bellman,Xiang-CDC-2025}), in each example we use temporal logic notation to formally specify the task of interest.
Temporal logic provides a rich language for describing complex tasks \cite{STL,baier2008principles}, and it automatically associates a task specification $\phi$ with a certain performance functional $\rho_\phi$, known as the robustness metric for $\phi$ \cite{donze-robustness-metric}.
For readers unfamiliar with this notation, we will also explicitly provide this performance functional, which is all that is needed to understand the example.

Before demonstrating the importance of the GRA formulation for value function decomposition, we explore its utility as a standalone task in Section \ref{sec:example-1}.
In Section \ref{sec:example-2}, we show how Theorem \ref{thm:value-decomposition} can be used to compute the value function for a timed temporal logic specification via value function decomposition.
In Section \ref{sec:example-3}, we show how Algorithm \ref{alg:discontinuous} can be used to compute the value function for a system with dynamics that are piecewise-continuous in time.

\subsection{Example 1: reach-avoid or always-avoid}\label{sec:gr-example}\label{sec:example-1}

\newcommand{\RAAA}{\rob_\mathrm{RAT \lor AT}}
Consider a drone flying over water, with an L-shaped landing pad in the area.
There can be strong gusts of wind lasting up to 10 seconds that the drone is not capable of fighting.
These winds blow most strongly toward the north and east.
The full dynamics are
$$\dot{\xSig} = \uSig + \dSig,$$
with $\xSig(\tVal) \in \R^2$ being the horizontal position of the drone (we shall neglect the vertical position).
The control space is $\uVals = \{\uVal \in \R^2 \mid \| \uVal\|\le 1 \}$, and the disturbance space is $\dVals = \{\dVal \in \R^2 \mid -1.1 \le \dVal_1 \le 2.0, -1.1 \le \dVal_2 \le 2.0\}$.

Once a gust begins, we would like to know from which locations the drone can either remain within 9 meters of a communication antenna for 10 seconds, or first get to the landing pad.
Let $\tFin = 10$, $\tarFn$ be the signed distance function to the landing pad, and $\conFn(\cdot) := 9 - a(\cdot)$, with $a$ the distance function to the antenna (recall $9$ meters is the communication radius).
The temporal logic specification $\phi$ for this task is
\begin{equation}
    \phi := (\conFn~\mathsf{U}~\tarFn) \lor (\mathsf{G}~ \conFn),\footnote{This specification can also be written as $\phi := \tarFn~\mathsf{R}~\conFn$, where $\mathsf{R}$ is the \textit{release} operator \cite{release-operator}. Conceptually, reaching the target ``releases'' one from having to obey the constraints.
    This example thus shows that we can use the GRA formulation to compute the value function for the release operator.}
\end{equation}
which states that the system must obey the constraints $\conFn$ until reaching the target $\tarFn$ or it must globally (i.e. always) obey the constraints.

\begin{figure}
    \centering
    \includegraphics[width=1.0\linewidth]{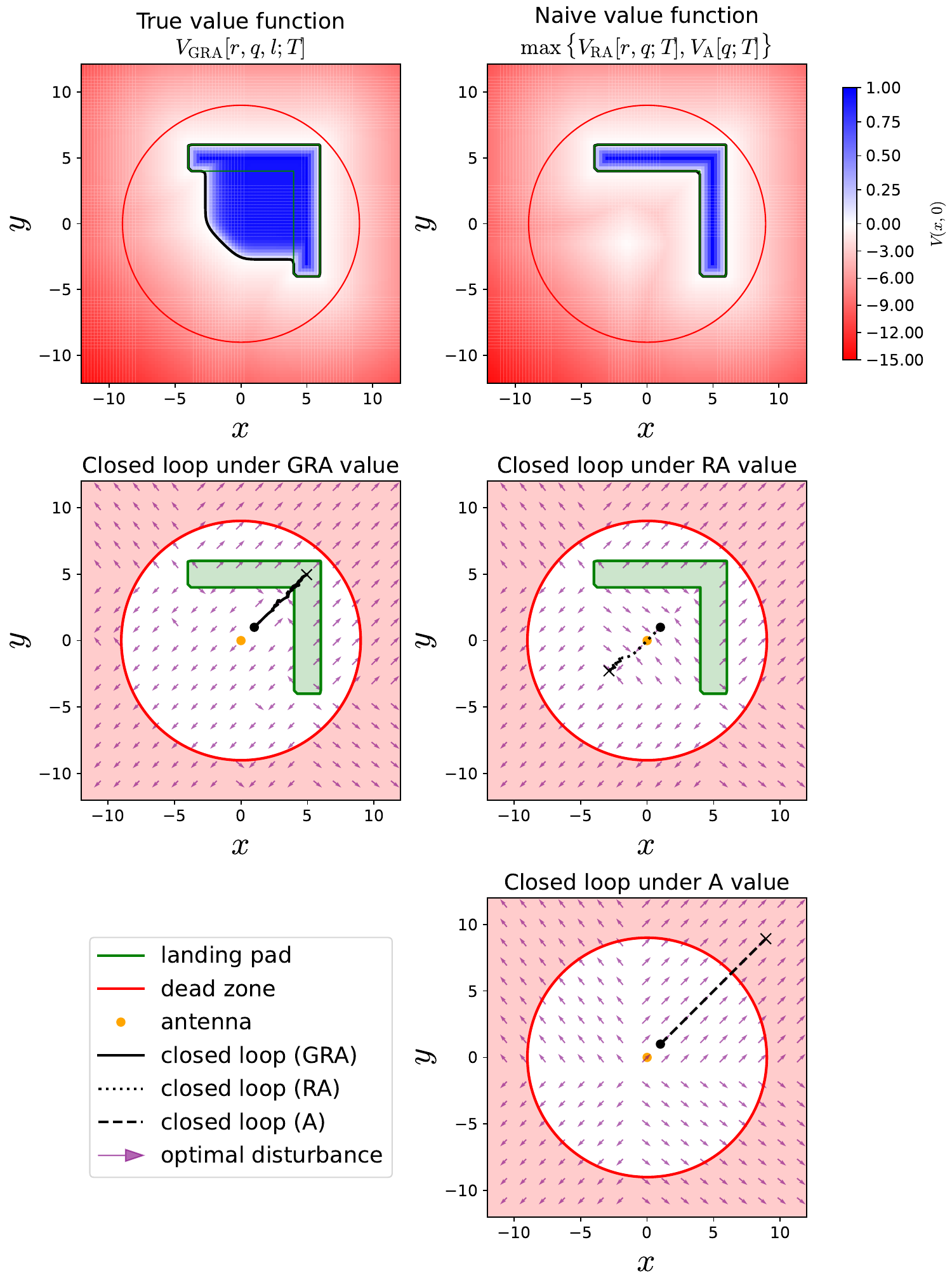}
    \caption{Example 1 (drone over water). The top plots visualize the true value function for the task in comparison with the naive value function on the right-hand side of \eqref{eqn:wrong-value}.
    The middle-left plot shows the drone completing the task.
    Here, the red circle represents the communication area in which the drone should remain, and the green, L-shaped area is the landing pad.
    The middle-right and bottom-right plots show that from this same point, the drone would fail to complete the associated reach-avoid and avoid tasks.
    In other words, the drone can complete the ``reach-avoid or always-avoid'' task but not the ``reach-avoid'' task or the ``avoid'' task. See Counter-Example \ref{rem:counter-example} for a similar example.}
    \label{fig:example-1}
\end{figure}
The performance functional corresponding to this task is the robustness metric $\rho_\phi$ for $\phi$, which is by definition
\begin{equation*}
    \rho_\phi = \max\{\rho_{\conFn\mathsf{U}\tarFn}, \rho_{\mathsf{G} \conFn} \} = \max\{ \robRTAT[\tarFn, \conFn;\tFin], \robAT[\conFn; \tFin] \}.
\end{equation*}
Conceptually, this performance functional looks for the better of two options: \textit{(i)} reach the landing pad before 10 seconds without entering the dead zone (i.e. the region where communication fails) first, or \textit{(ii)} do not enter the dead zone for the full 10 seconds.

As in Counter-Example \ref{rem:counter-example}, the value operation $\val[\cdot]$ does not generally commute with the $\max\{\cdot, \cdot\}$ operation, i.e.
\begin{equation}\label{eqn:wrong-value}
    \val[\rho_\phi] \ne \max\{ \valRTAT[\tarFn, \conFn;\tFin], \valAT[\conFn; \tFin]\}.
\end{equation}
Thus we cannot use the right-hand side to obtain the desired value function.
Instead, one can use the GRA formulation to directly obtain $\val[\rho_\phi]$.
To do so, define $\terFn := \conFn(\cdot, \tFin)$.
Then $\rho_\phi = \robGRA[\graData]$, so that
$$\val[\rho_\phi] = \valGRA[\graData].$$
In Figure \ref{fig:example-1}, we show the true value function for this task, computed via the above formula, and compare it to the naive value function $\max\{ \valRTAT[\tarFn, \conFn;\tFin], \valAT[\conFn; \tFin] \}$.

\subsection{Example 2: timed temporal logic}\label{sec:example-2}

We will now show by way of example how to use the value decomposition in Theorem \ref{thm:value-decomposition} to compute the value function for some specifications in timed temporal logic.

Suppose we have two robots on a warehouse floor that can be used to pick up packages from a shelf located at the far end of the warehouse.
Robot B is an older model that is slower and takes more time to load packages from the shelf.
We will assume Robot A has dynamics $\dot{\xSig}_{A} = \uSig_{A} + \dSig_{A}$ and Robot B has dynamics $\dot{\xSig}_{B} = \uSig_{B} + \dSig_{B}$, each having 2 states.
The whole system is thus 
$$\dot{\xSig} = \uSig + \dSig,$$
with $\xSig := (\xSig_{A}, \xSig_{B})$, $\uSig := (\uSig_{A}, \uSig_{B})$, and $\dSig := (\dSig_{A}, \dSig_{B})$.
We impose actuator constraints $\uVals = \{\uVal \in \R^4 \mid \uVal_1^2 + \uVal_2^2 \le 1, \uVal_3^2 + \uVal_4^2 \le 0.5^2 \}$ and disturbance constraints $\dVals =  \{\dVal \in \R^4 \mid \|\dVal\|_2 \le 0.2\}$, with the latter reflecting an upper bound on the total modeling error.
We let $r_A$ (resp. $r_B$) be the signed distance function from the shelf to robot A (resp. robot B), and we let $q$ be the signed distance function for the constraint set in which both robots are at least 0.5 meters from the factory walls and from each other.

We consider the following task specification $\psi$, written in timed temporal logic notation:
\begin{equation}
    \displaystyle \psi := (q\,\mathsf{U}_{[0,60]}\,r_A) \lor (q\,\mathsf{U}_{[0,40]}\, r_B).
\end{equation}
In words, this specification states that robot A must arrive at the shelf within the next 60 seconds without violating any constraints  on the way, or robot B must arrive at the shelf within the next 40 seconds without violating constraints on the way (the difference in required arrival times is needed to account for the robots' different loading times).

The standard performance functional corresponding to this task specification is the robustness metric $\rho_\psi$, defined by
\begin{align*}
    \rho_\psi (\xSig, \tVal) = &\max\{\rho_{\conFn\mathsf{U}_{[0,60]}\tarFn_A}(\xSig, \tVal), \rho_{\conFn\mathsf{U}_{[0,40]}\tarFn_B}(\xSig, \tVal)\}\\
    =&\max\Big\{\max_{\tDum \in [\tVal, \tVal + 60]} \min\big\{ r_A(\xSig(\tDum)), \min_{\tDumDum \in [\tVal, \tDum]} q(\xSig(\tDumDum)) \big\},\\
    & \qquad~~\,\max_{\tDum \in [\tVal, \tVal + 40]} \min\big\{ r_B(\xSig(\tDum)), \min_{\tDumDum \in [\tVal, \tDum]} q(\xSig(\tDumDum)) \big\} \Big\}.
\end{align*}
Because the robots' dynamics do not depend directly on time, $\val[\rho_\psi](\xVal, \tVal)$ will also be time-invariant, so it suffices to compute $\val[\rho_\psi](\xVal, 0)$.

We can do so as follows.
First, we rewrite this performance functional as
\begin{align*}
    \rho_\psi (\xSig, 0) &= \max\Big\{\robRTAT[\max\{r_A, r_B\}, q; 40](\xSig, 0), \\
    &\min\big\{\robRTAT[r_A, q; 60](\xSig, 40), \robAT[q; 40](\xSig, 0)\big\}\Big\}.
\end{align*}
Essentially, this rearrangement says that either robot must be able to reach the shelf within 40 seconds without violating constraints, or robot A must be able to reach the shelf between seconds 40 and 60 without violating constraints.
We can now simply apply Theorem \ref{thm:value-decomposition} to obtain the value function $\val[\rho_\psi](\xVal, 0)$ (in applying this theorem, we let $\tFin := 40$, $\tarFn:= \max\{r_A, r_B\}$, and $\rob_0 := \robRTAT[r_A, q; 60]$).

\begin{figure}[!ht]
    \centering
\includegraphics[width=0.8\linewidth]{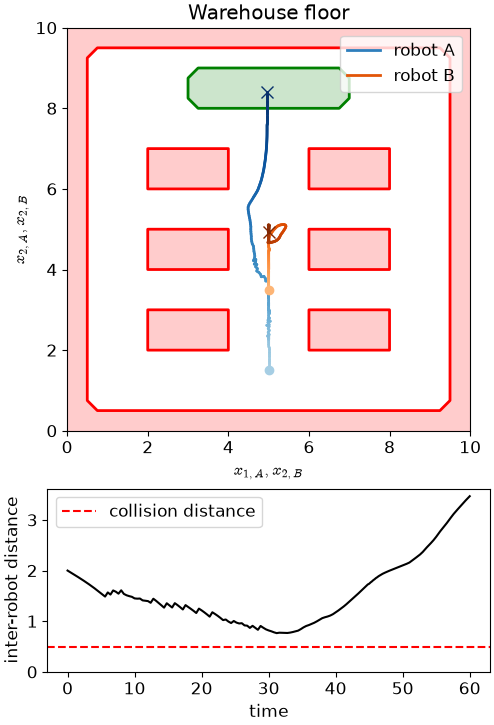}
    \caption{Example 2 (warehouse robots).
    Either robot A must get to the shelf (green) within 60 seconds or robot B must get to it in 40 seconds while avoiding hitting one another and the walls (red).
    In this initial configuration, robot B cannot reach the shelf in 40 seconds, but robot A can reach it in 60 seconds, except robot B is blocking robot A's path.
    Robot B thus moves out of the way to clear a path.
    }
    \label{fig:placeholder}
\end{figure}

Deployment of the resulting closed-loop control and disturbance is displayed in Figure \ref{fig:placeholder} for an initial configuration from which robot B cannot reach the shelf quickly enough and thus must move out of the way to make room for robot A to accomplish the task.

\subsection{Example 3: piecewise-continuous dynamics}\label{sec:example-3}
\newcommand{\drg}{\mathrm{D}}
\newcommand{\eff}{\mathrm{E}}
\newcommand{\tox}{\mathrm{Y}}
\newcommand{\effThresh}{\theta_{\textup{eff}}}
\newcommand{\toxThresh}{\theta_{\textup{tox}}}

In this example, we showcase how Algorithm \ref{alg:discontinuous} can be used to naturally handle problems where the dynamics abruptly switch.
Consider a pharmacology problem in which a patient undergoes IV infusion of two drugs over the span of $\tFin = 10$ hours.
There are 3 infusion sessions, each 2 hours long, with a 2 hour break between sessions.
The drugs interact in a way that makes the treatment more effective but also creates toxicity to the patient.
We can model the system as follows:
\begin{align*}
    &\dot{\drg}_1 = \omega(\tVal) \uSig_1 - \alpha_1(1 + \dSig_1) \drg_1\\
    &\dot{\drg}_2 = \omega(\tVal)\uSig_2 - \alpha_2(1 + \dSig_2) \drg_2\\
    &\dot{\eff} = \eta_1 \frac{\drg_1^2}{c_1^2 + \drg_1^2} +  \eta_2 \frac{\drg_2^2}{c_2^2 + \drg_2^2} +
    \eta_3
    \frac{\drg_1^2}{c_1^2 + \drg_1^2}\frac{\drg_2^2}{c_2^2 + \drg_2^2} - \beta \eff \\
    &\dot{\tox} = \chi_1 \frac{\drg_1^2}{k_1^2 + \drg_1^2} +  \chi_2 \frac{\drg_2^2}{k_2^2 + \drg_2^2} + \chi_3 \frac{\drg_1^2}{k_1^2 + \drg_1^2}\frac{\drg_2^2}{k_2^2 + \drg_2^2} - \gamma \mathrm{\tox}
\end{align*}

Here $\drg_1$ and $\drg_2$ are the concentrations of each drug in blood, $\eff$ is the therapeutic effect, and $\tox$ is the toxicity level.
The controls $\uSig_1$ and $\uSig_2$ represent supply rate of each drug and $\dSig_1$ and $\dSig_2$ represent possible fluctuations in the metabolism of each drug.
The time-varying parameter $\omega$ is $1$ during treatment and $0$ otherwise, namely during the breaks.

We would like for the therapeutic effect $\eff$ to reach a therapeutic threshold $\effThresh$ while keeping the toxicity below a threshold $\toxThresh$.
We will define the target tube function by $\tarFn := \eff - \effThresh$ and the constraint by $\conFn := \toxThresh - \tox$.
We let the control set be $\uVals := [0,1]^2$ and the disturbance set to be $\dVals := \{\dVal \in \R^2 \mid \|\dVal\|_2 \le \dVal_{\mathrm{max}}\}$.
The temporal logic formula $\mu$ for this specification is simply
$$\mu := \conFn~\mathsf{U}~\tarFn,$$
(i.e. obey constraints until the target is reached) and the corresponding performance functional $\rho_\mu$ is simply
$$\rho_\mu = \rho_{\conFn \mathsf{U} \tarFn} = \robRTAT[\tarFn, \conFn; \tFin].$$
The value function $\val[\rho_\mu]$ can then be computed via Algorithm \ref{alg:discontinuous}.
The resulting optimal dosing regimen is shown in Figure \ref{fig:example-2}.

\begin{figure}
    \centering
    \includegraphics[width=1.0\linewidth]{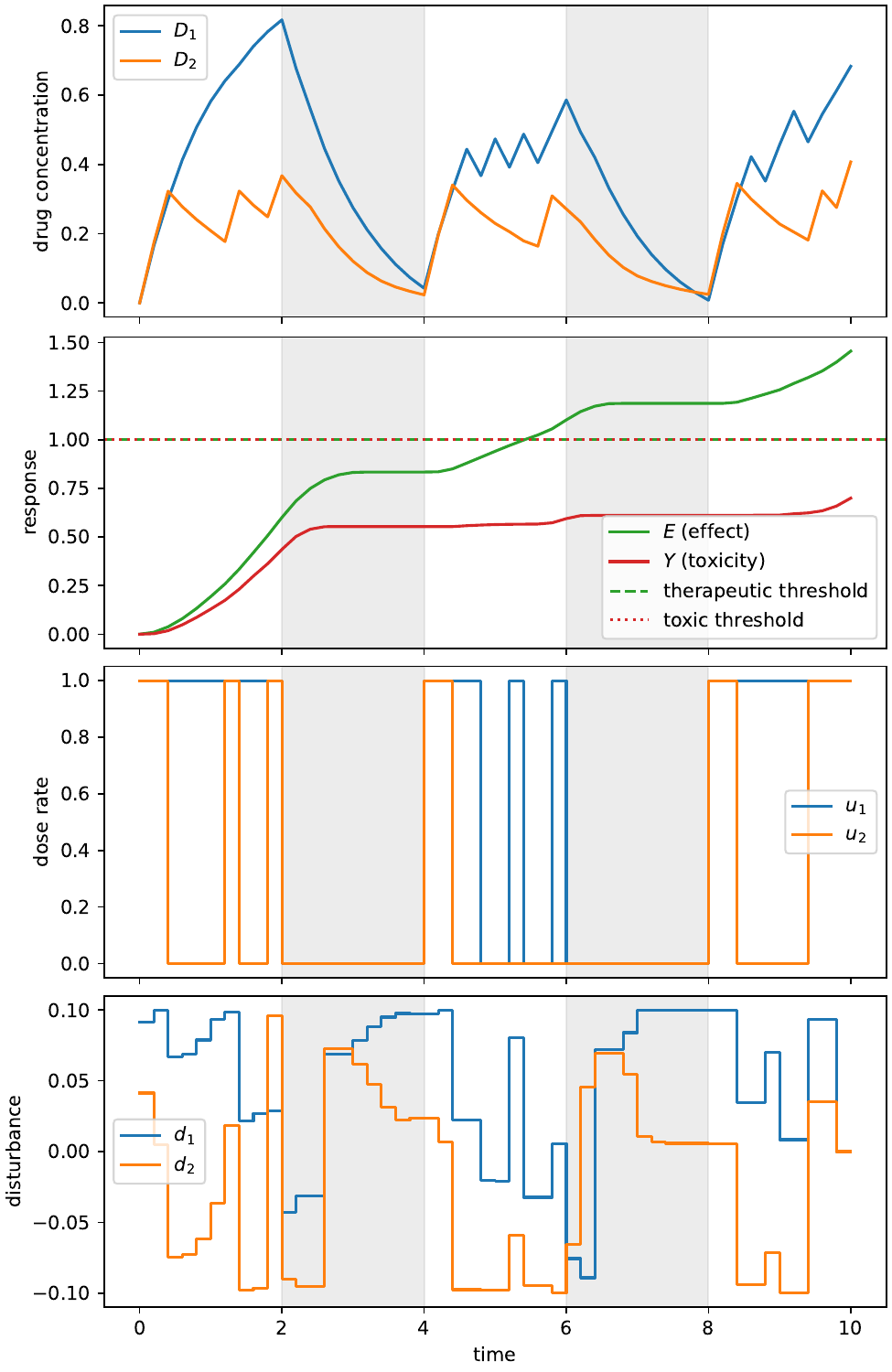}
    \caption{Example 3 (optimal dosing). The regions in white represent periods during which the patient receives the infusion while the gray regions represent break times in the treatment.
    The value functions are stitched together (backwards in time) across points of discontinuity using Algorithm \ref{alg:discontinuous}.}
    \label{fig:example-2}
\end{figure}

Note that although this problem is inherently a reach-avoid task, in order to account for the discontinuity in the dynamics, we still must make use of the terminal set function $\terFn$ in Algorithm \ref{alg:discontinuous} to stitch the value functions together across the points of discontinuity.
In other words, in the process of computing the reach-avoid value function $\valRTAT[\tarFn, \conFn; \tFin]$, we still compute the solution to several iterative GRA HJ-TVPs.

\section{Scope and Limitations}
One limitation of this work is that it only addresses the types of tasks that can be solved using HJR.
It does not alleviate the curse of dimensionality that makes practical computation of the value function challenging for high-dimensional systems, regardless of the task.
Deep-learning approaches, including deep reinforcement learning \cite{Reachability-RL,hsu-2021} and physics informed neural networks \cite{deepreach,lin-bansal-2023}, have increasingly been used to mitigate these issues when computing value functions for the fundamental tasks.
These methods can also be extended to composite tasks as in \cite{sharpless2026dual,sharpless2026bellman}.

Moreover, in this work, we have only considered finite-horizon problems.
For learning-based works, one often considers an infinite-horizon adaptation of the canonical tasks under various approaches to discounting.
Many of the results for finite-time tasks can be extended to the infinite time setting by analyzing time-invariant solutions of the time-dependent HJ-PDEs \cite{viscosity-cbfs}.
We plan to address the infinite-time case in this manner in a smaller future work.

\section{Conclusion}
In this work we introduced the GRA task, which not only extends the fundamental tasks in HJR, but also unifies the canonical, finite-horizon fundamental tasks into one common primitive.
The GRA simultaneously considers terminal sets, target tubes, and constraint tubes, whereas the previous tasks only considered one or two of these elements at a time.

Moreover, the GRA task is in some sense the natural primitive of HJR to consider from a PDE perspective.
In particular, for a given target tube and constraint tube function, every sufficiently regular solution of the canonical HJ-PDE of HJR can be represented using the GRA value function with an appropriate choice of terminal set function.

Beyond theoretical unification, however, the GRA formulation enables dynamic-programming-based value function decomposition for certain composite tasks.
To do so, it uses the terminal set function as an interface between the upstream and downstream aspects of the task.
These tasks include some tasks whose performance functionals are given by the robustness metric from timed temporal logic specifications.

Further work is now required to show how to decompose increasingly complex and important composite tasks into this primitive task, extending the line of work first studied in \cite{stl-meets-reachability}.

\appendix
\newcommand{\ep}{\varepsilon}
\subsection{Definition of a viscosity solution of an HJ-PDE}\label{sec:appendix-viscosity}
An HJ-PDE is a first-order nonlinear PDE of the form
\begin{equation}\label{eqn:hj-pde}
    \hjFn(\hjInd, \hjDep, \nabla \hjDep) = 0, \quad \hjInd \in \hjDomain. \tag{HJ-PDE}
\end{equation}
Here, $\hjInd$ represents a vector of the independent variables, $\hjDep$ represents the dependent variable, which is scalar, and $\hjDomain$ is the problem domain.

Central to HJ-PDE theory is the notion of a viscosity solution, a certain type of weak solution to an HJ-PDE.
For completeness, we here make explicit the definition we use for the viscosity solution of an HJ-PDE, namely Definitions II.1.1 and V.1.1 in \cite{Bardi-Dolcetta-Optimal-Control}:

\begin{definition}[Viscosity solution]
    Let $\hjDomain \sbs \Rm$ be open and let $\hjFn: \hjDomain \tms \R \tms \Rm \to \R$ be continuous.
    A function $\hjSol: \hjDomain \to \R$ is a
    \begin{itemize}
       \item \textbf{viscosity sub-solution} of \eqref{eqn:hj-pde} if
        \begin{enumerate}
        \item [(\textit{i})]
        $\hjSol$ is upper semi-continuous,
        \item[(\textit{ii})]
        for each $\hjInd_0 \in \hjDomain$ and continuously differentiable $\hjTest:\hjDomain \to \R$,
        if $\hjSol - \hjTest$ has a local maximum at $\hjInd_0$ then
        $$\textstyle \hjFn\Big(\hjInd_0, \hjSol(\hjInd_0), \nabla \hjTest(\hjInd_0) \Big) \le 0;$$
        \end{enumerate}
        
        \item \textbf{viscosity super-solution} of \eqref{eqn:hj-pde} if
        \begin{enumerate}
        \item [(\textit{i})]
        $\hjSol$ is lower semi-continuous,
        \item[(\textit{ii})]
        for each $\hjInd_0 \in \hjDomain$ and continuously differentiable $\hjTest:\hjDomain \to \R$,
        if $\hjSol - \hjTest$ has a local minimum at $\hjInd_0$ then
        $$\textstyle \hjFn\Big(\hjInd_0, \hjSol(\hjInd_0), \nabla \hjTest(\hjInd_0) \Big) \ge 0;$$
        \end{enumerate}

        \item \textbf{viscosity solution} of \eqref{eqn:hj-pde} if 
        $\hjSol$ is both a viscosity sub-solution and super-solution of \eqref{eqn:hj-pde}.
    \end{itemize}
\end{definition}

In HJR, we are primarily concerned with Hamilton-Jacobi terminal-value problems (HJ-TVPs), which involve a time-dependent HJ-PDE.
To emphasize this time-dependence, it is common to write the independent variable $\hjInd$ as $(\xVal, \tVal)$ and the gradient of the dependent variable $\nabla \hjDep$ as $(\nabla_\xVal \hjDep, \ppt \hjDep)$, as we do in Section \ref{sec:pde-background}.

\subsection{Proofs of the basic lemmas}\label{sec:appendix-basic-proofs}

We will make frequent use of the monotonically increasing homeomorphism $\homeo:\extended \to [-1,1]$ defined by
\begin{equation}\label{eqn:homeomorphism-def}
    \homeo(a) = \begin{cases}
    -1 & a = -\ii,\\
    2\tan^{-1}(a)/\pi & a \in \R,\\
    +1 & a = \ii. \end{cases}
\end{equation}

\begin{proof}[Proof of Lemma \ref{lem:continuous-evaluations-have-continuous-values}]

\newcommand{\Xxt}{\mathcal{F}}
\newcommand{\trajDum}{\traj{\xDum, \tDum}{\uSig, \dSig}}
\newcommand{\traji}{\traj{\xVal, \tVal}{\uSig_i, \dSig_i}}
\newcommand{\trajdumi}{\traj{\xDum_i, \tDum_i}{\uSig_i, \dSig_i}}
\newcommand{\trajj}{\traj{\xVal, \tVal}{\uSig_{i_j}, \dSig_{i_j}}}
\newcommand{\trajdumj}{\traj{\xDum_{i_j}, \tDum_{i_j}}{\uSig_{i_j}, \dSig_{i_j}}}
\newcommand{\Jtil}{\hat{\rob}}

Suppose $\rob: \xSigs \tms \evalTimes \to \extended$ is continuous.
Fix $\xVal \in \Rn$ and $\tVal \in \evalTimes$.
Let $\homeo:\extended \to [-1,1]$ be given by \eqref{eqn:homeomorphism-def}.
Let $\Jtil = \sigma \circ \rob$.

First we show that $\val[\Jtil](\xVal, \tVal)$ is continuous at $(\xVal, \tVal)$.
Fix $\ep > 0$.
For each $i \in \N$, let $R_i = 1/i$ and choose $\xDum_i \in B(\xVal; R_i)$, $\tDum_i \in \evalTimes \cap (\tVal - R_i, \tVal + R_i)$, $\uSig_i \in \uSigs$, and $\dSig_i \in \dSigs$ such that
\begin{align*}
    \sup_{\xDum \in B(\xVal; R_i)}& \sup_{\tDum \in \evalTimes \cap (\tVal - R_i, \tVal + R_i)} \sup_{\uSig \in \uSigs} \sup_{\dSig \in \dSigs}|\Jtil(\standardTraj, \tVal) - \Jtil(\trajDum, \tDum)| \\
    &\le |\Jtil(\traji, \tVal) - \Jtil(\trajdumi, \tDum_i)| + \ep.
\end{align*}

Let $\Xxt = \{\traj{\xDum, \tDum}{\uSig, \dSig} \in \xSigs \mid \|\xDum - \xVal\| \le 1, |\tDum - \tVal| \le 1, \uSig \in \uSigs, \dSig \in \dSigs \}$.
It is a standard result that $\Xxt$ is locally uniformly bounded and equi-continuous under Assumptions \ref{assumption:compactness}-\ref{assumption:regularity}, so $\Xxt$ has compact closure by the Arzel\`{a}-Ascoli theorem (Theorem 4.44 in \cite{Folland-Real-Analysis}).
Thus we can choose subsequences $\trajj$ and $\trajdumj$ of $\traji$ and $\trajdumi$, respectively, that converge in $\Xxt$.
By Assumptions \ref{assumption:compactness}-\ref{assumption:regularity}, these subsequences must in fact converge to the same trajectory, say $\xSig \in \xSigs$.
Then for each $j \in \N$,
\begin{align*}
    \sup_{\xDum \in B(\xVal; R_{i_j})}& \sup_{\tDum \in \evalTimes \cap (\tVal - R_{i_j}, \tVal + R_{i_j})}
    |\val[\Jtil](\xVal, \tVal) - \val[\Jtil](\xDum, \tDum)| \\
    &\le |\Jtil(\trajj, \tVal) - \Jtil(\trajdumj, \tDum_{i_j})| + \ep\\
    &\to |\Jtil(\xSig, \tVal) - \Jtil(\xSig, \tVal)| + \ep = \ep.
\end{align*}
Since $\ep$ was chosen arbitrarily, continuity is shown.

Now suppose that $\rob < \ii$ so that $\Jtil < 1$.
Let $\Xxt = \{\standardTraj \in \xSigs \mid \uSig \in \uSigs, \dSig \in \dSigs \}$.
Since $\Xxt$ again has compact closure by Arzel\`{a}-Ascoli and $\Jtil$ is continuous, it follows that the image $\Jtil(\bar{\Xxt},\tVal)$ is a compact subset of $[-1,1)$. Thus $\val[\Jtil](\xVal, \tVal) \le \max_{\xSig \in \bar{\Xxt}} \Jtil(\xSig,\tVal) < 1$.
An analogous argument shows that $\val[\Jtil](\xVal, \tVal) > -1$ if $\rob > -\ii$.

The lemma then follows from noticing that $\val[\rob] = \sigma^{-1} \circ \val[\Jtil]$ by monotonicity and continuity of $\sigma^{-1}$.
\end{proof}

\begin{proof}[Proof of Lemma \ref{lem:gra-evaluation-is-continuous}]
    \newcommand{\rt}{\hat{\tarFn}}
    \newcommand{\qt}{\hat{\conFn}}
    \newcommand{\lt}{\hat{\terFn}}
    
    It follows directly from \eqref{eqn:gra-definition} that $\robGRA[\graData]$ is real-valued and past-independent.
    It thus only remains to confirm that $\robGRA[\graData]$ is continuous.
    
    Let $\rt = \homeo \circ \tarFn$, $\qt = \homeo \circ \conFn$, and $\lt = \homeo \circ \terFn$.
    Since $\robGRA[\graData] = \homeo^{-1} \circ \robGRA[\rt, \qt, \lt; \tFin]$, it suffices to show that $\robGRA[\rt, \qt, \lt; \tFin]$ is continuous.

    Fix $\tNot < \tFin$.
    Given $\xSig_1, \xSig_2 \in \xSigs$, we can use the standard inequality for the difference of two maximums (or equivalently minimums) to conclude that
    \begin{align*}
        &|\robGRA[\rt,\qt,\lt;\tFin](\xSig_1, \tVal) - \robGRA[\rt,\qt,\lt;\tFin](\xSig_2, \tVal)|\\
        &\le \max_{\tDum \in [\tVal, \tFin]} \max\{|\rt(\xSig_1(\tDum), \tDum) - \rt(\xSig_2(\tDum), \tDum)|, \\
        &\qquad\qquad\qquad~|\qt(\xSig_1(\tDum), \tDum) - \qt(\xSig_2(\tDum), \tDum)|,\\
        &\qquad\qquad\qquad~|\lt(\xSig_1(\tFin)) - \lt(\xSig_2(\tFin))|\}\\
        &\le \max_{\tDum \in [\tNot, \tFin]} \max\{|\rt(\xSig_1(\tDum), \tDum) - \rt(\xSig_2(\tDum), \tDum)|, \\
        &\qquad\qquad\qquad~~|\qt(\xSig_1(\tDum), \tDum) - \qt(\xSig_2(\tDum), \tDum)|,\\
        &\qquad\qquad\qquad~~|\lt(\xSig_1(\tFin)) - \lt(\xSig_2(\tFin))|\}.
    \end{align*}
    for all $\tVal \in [\tNot, \tFin]$.
    It thus follows from continuity of $\rt$, $\qt$, and $\lt$ that $\robGRA(\xSig, \tVal)$ is continuous in $\xSig$, locally uniformly in $\tVal$.
    On the other hand, it follows directly from \eqref{eqn:gra-definition} and continuity of $\tarFn$, $\conFn$, and $\terFn$ that $\robGRA[\rt,\qt,\lt;\tFin](\xSig, \cdot)$ is continuous for each $\xSig \in \xSigs$.
    Thus we can conclude that $\robGRA[\rt,\qt,\lt;\tFin]$ is continuous.
\end{proof}

\begin{proof}[Proof of Lemma \ref{lem:negated-gra}]
Let $(\graData)$, $\tilde{\tarFn}$, $\tilde{\conFn}$, and $\tilde{\terFn}$ be as in the statement of the lemma.
It suffices to show that
\begin{equation*}
    -\robGRA[\graData] = \robGRA[\tilde{\tarFn}, \tilde{\conFn}, \tilde{\terFn}; \tFin].
\end{equation*}
First, assume that $\tarFn \le \conFn$ and $\tarFn(\cdot, \tFin) \le \terFn(\cdot) \le \conFn(\cdot, \tFin)$.
In this case, $\tilde{\tarFn} = -\conFn$, $\tilde{\conFn} = -\tarFn$, and $\tilde{\terFn} = -\terFn$, so that we need only show
$$-\robGRA[\graData] = \robGRA[-\conFn, -\tarFn, -\terFn; \tFin].$$

($\le$) Fix $\xSig \in \xSigs$.
For each $\tVal \le \tFin$, let $\tarFn_\tVal = \tarFn(\xSig(\tVal), \tVal)$ and $\conFn_\tVal = \conFn(\xSig(\tVal), \tVal)$, and also let $\terFn_\tFin = \terFn(\xSig(\tFin))$.

For convenience, let $c = \robGRA[\graData](\xSig, \tVal)$.
It follows that $\min\{\terFn_\tFin, \min_{\tDumDum \in [\tVal, \tFin]} \conFn_\tDumDum\} \le c$ and $\max_{\tDum \in [\tVal, \tFin]} \min\{\tarFn_\tDum, \min_{\tDumDum \in [\tVal, \tDum]} \conFn_\tDumDum\} \le c$.
There are then two cases: (i) $\terFn_\tFin \le c$ and $\tarFn_\tDum \le c$ for all $\tDum \in [\tVal, \tFin]$, or (ii) for some $\tDum \in [\tVal, \tFin]$, we have $\conFn_\tDum \le c$ and $\tarFn_\tDumDum \le c$ for all $\tDumDum \in [\tVal, \tDum]$.
In either case, we have $\robGRA[-\conFn, -\tarFn, -\terFn; \tFin](\xSig, \tVal) \ge -c$.

($\ge$) This direction follows from applying the result for the opposite direction with $-\conFn$ in place of $\tarFn$, $-\tarFn$ in place of $\conFn$, and $-\terFn$ in place of $\terFn$.

For the general case (in which the above inequality relations on $\tarFn$, $\conFn$, and $\terFn$ do not hold), observe that
\begin{align*}
    &-\robGRA[\graData] \\
    &=-\robGRA[\min\{\tarFn, \conFn\}, \conFn, \min\{\conFn(\cdot, \tFin), \max\{\tarFn(\cdot,\tFin), \terFn(\cdot)\} \}; \tFin] \\
    &=\robGRA[-\conFn, -\min\{\tarFn, \conFn\},\\
    &\qquad\qquad-\min\{\conFn(\cdot, \tFin), \max\{\tarFn(\cdot,\tFin), \terFn(\cdot)\} \}; \tFin] \\
    &= \robGRA[\tilde{\tarFn}, \tilde{\conFn}, \tilde{\terFn}; \tFin].
\end{align*}
\end{proof}

\subsection{Proofs of Theorem \ref{thm:value-decomposition} and Corollary \ref{cor:dpp}}\label{sec:appendix-decomposition-proof}

For convenience, we will adopt the following notational convention throughout this sub-section.
Given an $\xVal \in \Rn$, $\tVal \in \R$, $\uSig \in \uSigs$, and $\dStrat \in \dStrats$, we will define $\traj{\xVal, \tVal}{\uSig, \dStrat} = \stratTraj$.

\begin{proof}[Proof of Theorem \ref{thm:value-decomposition}]
    \newcommand{\Jnot}{\rob_0}
    \newcommand{\Jtil}{\tilde{\rob}}
    \newcommand{\firstStrat}{\alpha}
    \newcommand{\secondStrat}[1]{\beta_{#1}}
    \newcommand{\composedStrat}{\dStrat}
    \newcommand{\firstSig}{\mathrm{v}}
    \newcommand{\secondSig}{\mathrm{w}}
    \newcommand{\firstTraj}{\traj{\xVal, \tVal}{\uSig, \firstStrat}}
    \newcommand{\secondTraj}{\traj{\xDum,\tFin}{\uSig, \secondStrat{\xDum}}}
    \newcommand{\switchSet}[1]{\mathcal{S}_{#1}}
    \newcommand{\switchTime}[1]{\tDum_{#1}}
    \newcommand{\switchState}[1]{\xDum_{#1}}
    \newcommand{\qSweep}[2]{\min_{\tDumDum \in [#1]} \conFn(#2(\tDumDum),\tDumDum)}
    
    \newcommand{\trajxtvd}{\traj{\xVal, \tVal}{\firstSig, \composedStrat}}
    \newcommand{\trajxtud}{\traj{\xVal, \tVal}{\uSig, \composedStrat}}
    \newcommand{\trajxtva}{\traj{\xVal, \tVal}{\firstSig, \firstStrat}}
    \newcommand{\trajySwd}{\traj{\xDum, \tFin}{\secondSig, \composedStrat}}
    \newcommand{\trajySud}{\traj{\xDum, \tFin}{\uSig, \composedStrat}}
    \newcommand{\trajswitchvb}{\traj{\vState, \tFin}{\firstSig, \secondStrat{\vState}}}

    \newcommand{\vTime}{\switchTime{\firstSig}}
    \newcommand{\vState}{\switchState{\firstSig}}
    \newcommand{\specTer}{\tilde{\terFn}}
    \newcommand{\specInst}{\tarFn, \conFn, \specTer; \tFin}

    \newcommand{\altStrat}{\gamma}
    \newcommand{\ang}[1]{\langle #1 \rangle}
    \newcommand{\trajySwg}{\traj{\xDum, \tFin}{\secondSig, \altStrat}}
    \newcommand{\trajySug}{\traj{\xDum, \tFin}{\uSig, \altStrat}}
    \newcommand{\trajxtvwd}{\traj{\xVal, \tVal}{\ang{\firstSig, \secondSig}, \composedStrat}}
    
    ($\le$)
    This direction proceeds by building a near-optimal composite strategy ($\composedStrat$) from a primary strategy ($\firstStrat$) and a family of secondary strategies ($\secondStrat{\xDumDum}$).

    Fix $\xVal \in \Rn$, $\tVal \le \tFin$, and $\ep > 0$.
    Select an adversary strategy $\firstStrat \in \dStrats$ such that
    \begin{equation*}
        \supu \robGRA[\specInst](\firstTraj, \tVal) \le \valGRA[\specInst](\xVal, \tVal) + \ep.
    \end{equation*}
    For each $\xDum \in \Rn$, choose the secondary strategy $\secondStrat{\xDum} \in \dStrats$ such that 
    \begin{equation*}
        \supu \Jnot(\secondTraj,\tFin) \le \val[\Jnot](\xDum, \tFin) + \ep.
    \end{equation*}
    For each $\uSig \in \uSigs$, we define the switch state as
    \begin{equation*}
        \switchState{\uSig} = \firstTraj(\tFin).
    \end{equation*}

    We define the composite strategy $\composedStrat: \uSigs \to \dSigs$ by for each $\uSig \in \uSigs$ letting $\composedStrat(\uSig) \in \dSigs$ be given by
    \begin{equation*}
        \composedStrat(\uSig)(\tDum) = \begin{cases}
            \firstStrat(\uSig)(\tDum) & \tDum \le \tFin \\
            \secondStrat{\switchState{\uSig}}(\uSig)(\tDum) & \tDum > \tFin.
        \end{cases}
    \end{equation*}
    
    We show that $\composedStrat$ is non-anticipative.
    First, suppose $\uSig_1(\tDumDum) = \uSig_2(\tDumDum)$ for a.e. $\tDumDum \le \tDum$, where $\tDum \le \tFin$.
    Then $\composedStrat(\uSig_1)(\tDumDum) = \firstStrat(\uSig_1)(\tDumDum) = \firstStrat(\uSig_2)(\tDumDum) = \composedStrat(\uSig_2)(\tDumDum)$ for a.e. $\tDumDum \le \tDum$.
    Next, suppose that $\uSig_1(\tDumDum) = \uSig_2(\tDumDum)$ for a.e. $\tDumDum \le \tDum$, where $\tDum > \tFin$.
    Then $\switchState{\uSig_1} = \traj{\xVal, \tVal}{\uSig_1, \firstStrat}(\tFin) = \traj{\xVal, \tVal}{\uSig_2, \firstStrat}(\tFin) = \switchState{\uSig_2}$.
    It follows that $\composedStrat(\uSig_1)(\tDumDum) = \firstStrat(\uSig_1)(\tDumDum) = \firstStrat(\uSig_2)(\tDumDum) = \composedStrat(\uSig_2)(\tDumDum)$ for a.e. $\tDumDum \le \tFin$ and $\composedStrat(\uSig_1)(\tDumDum) = \secondStrat{\switchState{\uSig_1}}(\uSig_1)(\tDumDum) = \secondStrat{\switchState{\uSig_2}}(\uSig_2)(\tDumDum) = \composedStrat(\uSig_2)(\tDumDum)$ for a.e. $\tDumDum \in (\tFin, \tDum]$.
    Thus $\composedStrat$ is indeed non-anticipative.
    
    Choose $\firstSig \in \uSigs$ such that
    \begin{equation*}
        \sup_{\uSig \in \uSigs} \rob(\trajxtud, \tVal) \le \rob(\traj{\xVal, \tVal}{\firstSig, \composedStrat}, \tVal) + \ep.
    \end{equation*}
    Observe that for all $\tDum \le \tFin$,
    \begin{equation*}
        \trajxtvd(\tDum) = \traj{\xVal, \tVal}{\firstSig, \firstStrat}(\tDum),
    \end{equation*}
    and for all $\tDum \ge \tFin$,
    \begin{equation*}
        \trajxtvd(\tDum) = \trajswitchvb(\tDum).
    \end{equation*}

    It follows that
    \begin{align*}
        &\val[\rob](\xVal, \tVal) \\
        &\le \rob(\traj{\xVal, \tVal}{\firstSig, \composedStrat}, \tVal) + \ep \\
        &=\max\Big\{\robGRA[\tarFn, \conFn, \terFn; \tFin](\trajxtvd, \tVal),\\
        &\qquad\min\big\{\rob_0(\trajxtvd, \tFin), \min_{\tDumDum \in [\tVal, \tFin]} \conFn(\trajxtvd(\tDumDum),\tDumDum) \big\} \Big\} + \ep\\
        &=\max\Big\{\robGRA[\tarFn, \conFn, \terFn; \tFin](\trajxtva, \tVal),\\
        &\qquad\min\big\{\rob_0(\trajswitchvb, \tFin), \min_{\tDumDum \in [\tVal, \tFin]} \conFn(\trajxtva(\tDumDum),\tDumDum) \big\} \Big\} + \ep\\
        &\le\max\Big\{\robGRA[\tarFn, \conFn, \terFn; \tFin](\trajxtva, \tVal),\\
        &\qquad\min\big\{\val[\Jnot](\vState, \tFin), \min_{\tDumDum \in [\tVal, \tFin]} \conFn(\trajxtva(\tDumDum),\tDumDum) \big\} \Big\} + 2\ep\\
        &=\max\Big\{\max_{\tDum \in [\tVal, \tFin]} \min\{\tarFn(\trajxtva(\tDum), \tDum), \min_{\tDumDum \in [\tVal, \tDum]} \conFn(\trajxtva(\tDumDum),\tDumDum)\},\\
        &\qquad \min\{\terFn(\trajxtva(\tFin)), \min_{\tDumDum \in [\tVal, \tFin]} \conFn(\trajxtva(\tDumDum),\tDumDum)\},\\
        &\qquad\min\big\{\val[\Jnot](\trajxtva(\tFin), \tFin), \min_{\tDumDum \in [\tVal, \tFin]} \conFn(\trajxtva(\tDumDum),\tDumDum) \big\} \Big\} + 2\ep\\
        &=\max\Big\{\max_{\tDum \in [\tVal, \tFin]} \min\{\tarFn(\trajxtva(\tDum), \tDum), \min_{\tDumDum \in [\tVal, \tDum]} \conFn(\trajxtva(\tDumDum),\tDumDum)\},\\
        &\qquad \min\{\specTer(\trajxtva(\tFin)), \min_{\tDumDum \in [\tVal, \tFin]} \conFn(\trajxtva(\tDumDum),\tDumDum)\}\Big\} + 2\ep\\
        &= \robGRA[\specInst](\trajxtva, \tVal) + 2\ep \\
        & \le \valGRA[\specInst](\xVal, \tVal) + 3\ep.
    \end{align*}

    ($\ge$) Fix $\xVal \in \Rn$, $\tVal \le \tFin$, and $\ep > 0$.
    Select an adversary strategy $\dStrat \in \dStrats$ such that
    \begin{equation*}
        \val[\rob](\xVal, \tVal) \ge \supu \rob(\traj{\xVal, \tVal}{\uSig, \dStrat}, \tVal) - \ep.
    \end{equation*}
    Choose $\firstSig \in \uSigs$ such that
    \begin{equation*}
        \robGRA[\specInst](\trajxtvd, \tVal) \ge \supu \robGRA[\specInst](\trajxtud, \tVal) - \ep.
    \end{equation*}

    For each $\uSig \in \uSigs$, let $\ang{\firstSig, \uSig}: \R \to \uVals$ be defined by
    \begin{equation*}
        \ang{\firstSig, \uSig}(\tDum) = \begin{cases}
        \firstSig(\tDum) & \tDum \le \tFin,\\
        \uSig(\tDum) & \tDum > \tFin.
        \end{cases}
    \end{equation*}
    Define the strategy $\altStrat \in \dStrats$ by for each $\uSig \in \uSigs$ setting $\altStrat(\uSig) = \dStrat(\ang{\firstSig, \uSig})$.
    Next let $\xDum = \trajxtvd(\tFin)$ and choose $\secondSig \in \uSigs$ such that
    \begin{equation*}
        \Jnot(\trajySwg,\tFin) \ge \supu \Jnot(\trajySug,\tFin) - \ep.
    \end{equation*}
    Then
    
    \begin{align*}
        &\val[\rob](\xVal, \tVal) \\
        &\ge \rob(\trajxtvwd, \tVal) - \ep \\
        &=\max\Big\{\robGRA[\tarFn, \conFn, \terFn; \tFin](\trajxtvd, \tVal),\\
        &\qquad\min\big\{\rob_0(\trajySwg, \tFin), \min_{\tDumDum \in [\tVal, \tFin]} \conFn(\trajxtvd(\tDumDum),\tDumDum) \big\} \Big\} - \ep\\
        &\ge\max\Big\{\robGRA[\tarFn, \conFn, \terFn; \tFin](\trajxtvd, \tVal),\\
        &\qquad\min\big\{\val[\rob_0](\trajxtvd(\tFin), \tFin), \min_{\tDumDum \in [\tVal, \tFin]} \conFn(\trajxtvd(\tDumDum),\tDumDum) \big\} \Big\} - 2\ep\\
        &=\robGRA[\tarFn, \conFn, \specTer; \tFin](\trajxtvd, \tVal) - 2\ep\\
        &\ge\valGRA[\tarFn, \conFn, \specTer; \tFin](\xVal, \tVal) - 3\ep.\qedhere
    \end{align*}
\end{proof}
 
\begin{proof}[Proof of Corollary \ref{cor:dpp}]
    First, observe that
    \begin{align*}
    \robGRA[\tarFn, \conFn,& \terFn; \tFin](\xSig, \tVal) =\\ 
    \max\Big\{&\robGRA[\tarFn, \conFn, \min\{\conFn(\cdot,\tFinDum), \tarFn(\cdot,\tFinDum)\}; \tFinDum](\xSig, \tVal),\\
    &\min\big\{\robGRA[\tarFn, \conFn, \terFn; \tFin](\xSig, \tFinDum), \min_{\tDumDum \in [\tVal, \tFinDum]} \conFn(\xSig(\tDumDum),\tDumDum) \big\} \Big\}
    \end{align*}
    for each $\xSig \in \xSigs$ and $\tVal \le \tFinDum$.
    Let $\rob = \robGRA[\tarFn, \conFn, \terFn; \tFin]|_{\xSigs \tms \RleS}$ and $\rob_0 = \robGRA[\tarFn, \conFn, \terFn; \tFin]$,
    so that the above equation becomes
    \begin{align*}
    \rob(\xSig, \tVal) =\max\Big\{&\robGRA[\tarFn, \conFn, \min\{\conFn(\cdot,\tFinDum), \tarFn(\cdot,\tFinDum)\}; \tFinDum](\xSig, \tVal),\\
    &\min\big\{\rob_0(\xSig, \tFinDum), \min_{\tDumDum \in [\tVal, \tFinDum]} \conFn(\xSig(\tDumDum),\tDumDum) \big\} \Big\}.
    \end{align*}
    Applying Theorem \ref{thm:value-decomposition} (with the GRA instance $(\tarFn,\conFn,\min\{\conFn(\cdot,\tFinDum), \tarFn(\cdot,\tFinDum)\};\tFinDum)$ in place of $(\graData)$) and noticing that 
    $$\max\{\min\{\conFn(\cdot,\tFinDum), \tarFn(\cdot,\tFinDum)\}, \val[\rob_0](\cdot, \tFinDum)\} = \val[\rob_0](\cdot, \tFinDum),$$
    then gives the desired result.
    
\end{proof}

\subsection{Proof of Theorem \ref{thm:comparison-principle}}\label{sec:appendix-comparison-principle}

Before we prove Theorem \ref{thm:comparison-principle}, we will need the following lemma, which uses an adaptation of the proof technique in Lemma II.1.8 in \cite{Bardi-Dolcetta-Optimal-Control}.  

\begin{lemma}\label{lem:hjr-pde-bounds}
Let $m \in \N$ and let $\hjDomain \sbs \Rm$ be open.
Suppose $\tarFn: \hjDomain \to \extended$ is continuous with $\tarFn < \ii$, $\conFn: \hjDomain \to \extended$ is continuous with $\conFn > -\ii$, and $\hjFn: \hjDomain \tms \R \tms \Rm \to \R$ is continuous.
Consider the HJ-PDE 
\begin{align}\label{eqn:omega-hjr-pde}
    -\min\{& \conFn(\hjInd) - \hjDep, \max\{ \tarFn(\hjInd) - \hjDep, \hjFn(\hjInd, \hjDep, \nabla \hjDep)\}\} = 0,
    \quad \hjInd \in \hjDomain.
\end{align}
\begin{itemize}
    \item
    If $\hjSol_-:\Omega \to \R$ is a viscosity sub-solution of \eqref{eqn:omega-hjr-pde}, then $\hjSol_- \le \conFn$.
    \item
    If $\hjSol_+: \Omega \to \R$ is a viscosity super-solution of \eqref{eqn:omega-hjr-pde}, then $\hjSol_+ \ge \min\{\conFn, \tarFn\}$.
\end{itemize}
\end{lemma}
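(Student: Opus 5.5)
We argue by contradiction in each case, using test functions that penalize distance from a point. This is the same device as Lemma II.1.8 in \cite{Bardi-Dolcetta-Optimal-Control}. The idea is that the bounds are forced by the zeroth-order terms of \eqref{eqn:omega-hjr-pde}, whatever the gradient term looks like.

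\emph{Sub-solution bound.} Suppose $\hjSol_-(\hjInd_0) > \conFn(\hjInd_0)$ for some $\hjInd_0 \in \hjDomain$. Since $\conFn > -\ii$, the value $\conFn(\hjInd_0)$ is either finite or $+\ii$. It cannot be $+\ii$, because $\hjSol_-$ is real-valued, so $\conFn(\hjInd_0)$ is finite. The difficulty is that $\hjSol_-$ is only upper semi-continuous, so we cannot simply test at $\hjInd_0$ itself.

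We first localize. Pick a small closed ball $\bar B \sbs \hjDomain$ around $\hjInd_0$, and set
\[
\hjTest_\ep(\hjInd) = \|\hjInd - \hjInd_0\|^2/\ep .
\]
Upper semicontinuity gives a maximizer $\hjInd_\ep$ of $\hjSol_- - \hjTest_\ep$ over $\bar B$. Comparing with the value at $\hjInd_0$ shows
\[
\|\hjInd_\ep - \hjInd_0\|^2/\ep \le \hjSol_-(\hjInd_\ep) - \hjSol_-(\hjInd_0),
\]
which is bounded above because $\hjSol_-$ is bounded above on $\bar B$. Hence $\hjInd_\ep \to \hjInd_0$, and for small $\ep$ the point $\hjInd_\ep$ is interior, so it is a local maximum. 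The same inequality also gives $\hjSol_-(\hjInd_\ep) \ge \hjSol_-(\hjInd_0)$.

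The sub-solution property at $\hjInd_\ep$ then reads
\[
-\min\{\conFn(\hjInd_\ep) - \hjSol_-(\hjInd_\ep), \max\{\cdots\}\} \le 0 ,
\]
which implies $\conFn(\hjInd_\ep) \ge \hjSol_-(\hjInd_\ep) \ge \hjSol_-(\hjInd_0)$. Letting $\ep \to 0$ and using continuity of $\conFn$ gives $\conFn(\hjInd_0) \ge \hjSol_-(\hjInd_0)$, a contradiction. Only the first argument of the min matters here; the $\hjFn$ term is never used.

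\emph{Super-solution bound.} The argument is symmetric. Suppose $\hjSol_+(\hjInd_0) < \min\{\conFn, \tarFn\}(\hjInd_0)$. Since $\tarFn < \ii$ and $\hjSol_+$ is real-valued, the right side is finite or $-\ii$. Because $\hjSol_+$ is real, this forces $\min\{\conFn,\tarFn\}(\hjInd_0)$ to be finite, with both $\conFn(\hjInd_0)$ and $\tarFn(\hjInd_0)$ strictly greater than $\hjSol_+(\hjInd_0)$.

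We use the test function $-\hjTest_\ep$ and lower semicontinuity to obtain local minima $\hjInd_\ep \to \hjInd_0$ with $\hjSol_+(\hjInd_\ep) \le \hjSol_+(\hjInd_0)$. The super-solution inequality at $\hjInd_\ep$ says
\[
\min\{\conFn - \hjSol_+, \max\{\tarFn - \hjSol_+, \hjFn\}\} \le 0 .
\]
So either $\conFn(\hjInd_\ep) \le \hjSol_+(\hjInd_\ep)$, or $\max\{\tarFn - \hjSol_+, \hjFn\} \le 0$. In the second case we still get $\tarFn(\hjInd_\ep) \le \hjSol_+(\hjInd_\ep)$, since the max dominates its first argument. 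Hence in both cases
\[
\min\{\conFn,\tarFn\}(\hjInd_\ep) \le \hjSol_+(\hjInd_\ep) \le \hjSol_+(\hjInd_0).
\]
Passing to the limit by continuity, in $\extended$, of $\min\{\conFn,\tarFn\}$ contradicts the assumption.

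\emph{Main obstacle.} The only delicate step is the localization: producing interior local extrema near $\hjInd_0$ while preserving the inequality $\hjSol(\hjInd_\ep) \ge \hjSol(\hjInd_0)$ (for sub-solutions), respectively $\le$ (for super-solutions), from semicontinuity alone. The standard penalization above handles this. The extended-valued data need care only to rule out the infinite cases, which the constraints $\tarFn < \ii$ and $\conFn > -\ii$ together with real-valuedness of $\hjSol_\pm$ do.
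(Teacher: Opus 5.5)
Your proposal is correct and takes essentially the same route as the paper. Both arguments use the quadratic penalization $\frac{1}{\varepsilon}\|z-z_0\|^2$ on a compact neighborhood (following Lemma II.1.8 of Bardi--Capuzzo-Dolcetta) to get interior extrema near the offending point, and both read the bound off the zeroth-order terms of the min. The only cosmetic differences are these: the paper shrinks the neighborhood up front so that $q < v_-(\bar z)$ on it and concludes at a single small $\varepsilon$, whereas you pass to the limit $\varepsilon \to 0$ using continuity of $q$; and you write out the super-solution case and the extended-valued bookkeeping that the paper leaves as ``analogous.''
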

\begin{proof}
    Let $\bar{\hjInd} \in \hjDomain$, and suppose $\hjSol_-(\bar{\hjInd}) > \conFn(\bar{\hjInd})$.
    Choose a compact neighborhood $E \ssbs \hjDomain$ of $\bar{\hjInd}$ such that  $\conFn(\hjInd) < \hjSol_-(\bar{\hjInd})$ for all $\hjInd \in E$.
    
     For each $\ep > 0$, let $\hjTest_\ep: \hjDomain \to \R$ be given by $\hjTest_\ep(\hjInd) = \frac{1}{\ep}\|\hjInd - \bar{\hjInd}\|^2$ and let $\hjInd_\ep \in \argmax_{\hjInd \in E} (\hjSol_-(\hjInd) - \phi_\ep(\hjInd))$.
     Then ($\hjSol_- - \phi_\ep)(\hjInd_\ep) \ge \hjSol_-(\bar{\hjInd})$, so that $\|\hjInd_\ep - \bar{\hjInd}\| \le \sqrt{\ep} \lf(\max_{\hjInd \in E} \hjSol_-(\hjInd) - \hjSol_-(\bar{\hjInd}) \rg)$.
     It follows that we can choose $\ep^* > 0$ sufficiently small that $(\hjSol_- - \phi_{\ep^*})|_E$ attains its maximum at a point $\hjInd_0 := \hjInd_{\ep^*}$ in the interior of $E$.

     Letting $\hjTest_0 := \hjTest_{\ep^*}$, by definition of a viscosity sub-solution, we have
     \begin{align*}
         \min\{ &\conFn(\hjInd_0) - \hjSol_-(\hjInd_0), \max\{ \tarFn(\hjInd_0) - \hjSol_-(\hjInd_0), \\
         &\hjFn(\hjInd_0, \hjSol_-(\hjInd_0), \nabla \hjTest_0(\hjInd_0))\}\} \ge 0,
     \end{align*}
     so that $\hjSol_-(\hjInd_0) \le \conFn(\hjInd_0)$, producing the desired contradiction.

     The statement for super-solutions is shown analogously.
     
\end{proof}

\newcommand{\cone}{\mathcal{C}}
\newcommand{\base}{\mathcal{B}}
\begin{lemma}\label{lem:cone-of-dependence}(Cone of dependence)
Let $\xNot \in \Rn$, $\tNot \in \R$, $R > 0$, and $C > 0$.
Define the cone $\cone = \{(\xVal,\tVal) \in \Rn \tms \R \mid \tNot - R < \tVal < \tNot, \|\xVal - \xNot\| < C (R - \tNot + \tVal)\}$
and its base $\base = B(\xNot; CR) \tms \{\tNot\}$.
Suppose $\hamiltonian: (\cone \cup \base) \tms \Rn \to \R$ is continuous and satisfies
\begin{enumerate}
\item[(\textit{i})] $|\hamiltonian(\xVal, \tVal, \costate) - \hamiltonian(\xVal, \tVal, \coDum)| \le C \|\costate - \coDum\|$ for all $(\xVal, \tVal) \in \cone$ and $\costate, \coDum \in \Rn$,
\item[(\textit{ii})] there exists an $M > 0$ such that $|\hamiltonian(\xVal, \tVal, \costate) - \hamiltonian(\xDum, \tVal, \costate)| \le M\|\xVal - \xDum\|(1 + \|\costate\|)$ for all $(\xVal, \tVal) \in \cone$, $\xDum \in B(\xNot; C(R - (\tNot - \tVal)))$, and $\costate \in \Rn$.
\end{enumerate}
Let $\tarFn,\conFn:\cone \cup \base \to \extended$ be continuous, with $\tarFn < +\ii$ and $\conFn > -\ii$.
Consider the HJ-PDE
\begin{align}\label{eqn:cone-hjr-pde}
    -\min\{& \conFn(\xVal,\tVal) - \hjDep, \nonumber \\
    &\max\{ \tarFn(\xVal,\tVal) - \hjDep, \nonumber \\
    &\qquad~ \ppt \hjDep + \hamiltonian(\xVal, \tVal, \nabla_x \hjDep) \}\} = 0,
    \quad (\xVal, \tVal) \in \cone.
\end{align}
Let $\hjSol_- : \cone \cup \base \to \R$ be upper semi-continuous, and let $\hjSol_+: \cone \cup \base \to \R$ be lower semi-continuous.
If $\hjSol_-|_{\cone}$ is a viscosity sub-solution of \eqref{eqn:cone-hjr-pde}, $\hjSol_+|_{\cone}$ is a viscosity super-solution of \eqref{eqn:cone-hjr-pde}, and $\hjSol_-|_{\base} \le \hjSol_+|_{\base}$, then $\hjSol_- \le \hjSol_+$.
\end{lemma}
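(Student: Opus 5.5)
The plan is to adapt the cone-of-dependence comparison argument of Lemma II.1.8 / Theorem III.3.?? in \cite{Bardi-Dolcetta-Optimal-Control}, in the form used by \cite{Ishii-uniqueness-unbounded-1984}, to the obstacle-type operator in \eqref{eqn:cone-hjr-pde}. Suppose for contradiction that $\hjSol_-(\bar\xVal,\bar\tVal) - \hjSol_+(\bar\xVal,\bar\tVal) = \delta > 0$ at some point of $\cone$. First I will reduce to a strict sub-solution and a compact setting. I replace $\hjSol_-$ by $\hjSol_-^\eta(\xVal,\tVal) := \hjSol_-(\xVal,\tVal) - \eta(\tNot - \tVal) - \eta$ for small $\eta>0$. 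This is still a sub-solution, since the operator is monotone in $\hjDep$ (decreasing $\hjDep$ only helps the $\conFn-\hjDep$ and $\tarFn-\hjDep$ terms). It is strict in the time-derivative term: $\ppt\hjTest$ increases by $\eta$. Next I localize to the cone with a barrier in the spirit of the Ishii cutoff. With $g_\kappa(\xVal,\tVal) := \psi\bigl(\langle \xVal - \xNot\rangle - C(R - \tNot + \tVal)\bigr)$, where $\langle z\rangle = \sqrt{\|z\|^2+\kappa^2}$ and $\psi$ is a smooth nondecreasing function vanishing on $(-\ii,-\kappa']$ and blowing up near $0$, the function $\hjSol_-^\eta - g_\kappa$ is $-\ii$ near the lateral boundary of $\cone$. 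Assumption (\textit{i}) with constant $C$ then ensures that $g_\kappa$ is itself a strict super-solution direction: $-\ppt g - C\|\nabla_x g\| \ge 0$. Subtracting it therefore preserves the sub-solution property.

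The main step is doubling of variables. I consider
\[
\Phi(\xVal,\tVal,\xDum,\tDum) = \hjSol_-^\eta(\xVal,\tVal) - g_\kappa(\xVal,\tVal) - \hjSol_+(\xDum,\tDum) - \tfrac{1}{2\ep}\bigl(\|\xVal-\xDum\|^2 + |\tVal-\tDum|^2\bigr)
\]
on $(\cone\cup\base)^2$, restricted to a compact sub-cone where the cutoff is finite. Upper semicontinuity gives a maximizer $(\xVal_\ep,\tVal_\ep,\xDum_\ep,\tDum_\ep)$. By the standard estimates, $\|\xVal_\ep-\xDum_\ep\|^2/\ep \to 0$, and the maximum stays $\ge \delta/2$ for small $\eta$. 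If, along a subsequence, a maximizer has a component on the base $\base$, then in the limit $\hjSol_-\le\hjSol_+$ on $\base$ together with semicontinuity contradicts the positive lower bound. Otherwise both points are interior, and I apply the sub-solution test at $(\xVal_\ep,\tVal_\ep)$ and the super-solution test at $(\xDum_\ep,\tDum_\ep)$ with the common costate $p_\ep = (\xVal_\ep - \xDum_\ep)/\ep$.

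The case analysis is what is new compared with the plain HJ-PDE. The sub-solution inequality gives $\hjSol_-^\eta \le \conFn$ at $(\xVal_\ep,\tVal_\ep)$ and either $\hjSol_-^\eta\le \tarFn$ or the strict PDE inequality. The super-solution inequality gives either $\hjSol_+\ge\conFn$ at $(\xDum_\ep,\tDum_\ep)$, or both $\hjSol_+\ge\tarFn$ and the PDE super-inequality.

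Case 1: $\hjSol_+(\xDum_\ep,\tDum_\ep) \ge \conFn(\xDum_\ep,\tDum_\ep)$. Combining this with $\hjSol_-^\eta\le\conFn$ and continuity of $\conFn$ (finite there because $\conFn>-\ii$ and $\hjSol_+$ is real) yields $\hjSol_-^\eta - \hjSol_+ \le o(1)$, which is a contradiction.

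Case 2: $\hjSol_-^\eta\le\tarFn$ at $(\xVal_\ep,\tVal_\ep)$ and $\hjSol_+\ge\tarFn$ at $(\xDum_\ep,\tDum_\ep)$. The same argument applies, using continuity of $\tarFn<\ii$.

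Case 3: both PDE inequalities hold. Subtracting them gives, in the standard way,
\[
\eta \le \hamiltonian(\xDum_\ep,\tDum_\ep,p_\ep) - \hamiltonian(\xVal_\ep,\tVal_\ep,p_\ep+\nabla g) + \text{cutoff terms}.
\]
Hypothesis (\textit{ii}) bounds the $x$-difference by $M\|\xVal_\ep-\xDum_\ep\|(1+\|p_\ep\|)\to0$. Hypothesis (\textit{i}) absorbs $\nabla g$ against $-\ppt g$. The time mismatch is handled by continuity of $\hamiltonian$ on compacts, after first noting that $p_\ep$ is bounded on this subsequence or using the $t$-penalization separately. This gives a contradiction for small $\ep$.

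I expect the main obstacle to be the time-mismatch term in Case 3 when $p_\ep$ is unbounded. Hypothesis (\textit{ii}) controls only the $x$-variation at equal times. I would try first to double only in $x$ and keep a common $t$, or to use a separate penalty parameter for $t$ that is sent to zero first with $\ep$ fixed, as in \cite{Bardi-Dolcetta-Optimal-Control}. Letting $\eta,\kappa\to0$ then concludes $\hjSol_-\le\hjSol_+$ on $\cone$.
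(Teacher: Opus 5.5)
Once you commit to the separate time penalty (your second option), your proof is correct. Its skeleton is the paper's: Ishii's cone-of-dependence doubling argument, with a cutoff whose gradient is absorbed by (\textit{i}), a linear-in-time perturbation for strictness, (\textit{ii}) for the $x$-mismatch, and the base inequality excluding maximizers at $t=t_0$.

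The genuine difference is how you handle the obstacle terms. The paper first proves the global bounds $v_-\le q$ and $v_+\ge\min\{q,r\}$ (Lemma~\ref{lem:hjr-pde-bounds}). It then uses them in a separate limiting step (its Step 6) to show $v_+<q$ at the $y$-maximizer and $\min\{q,r\}<v_-$ at the $x$-maximizer for small parameters. After that, both viscosity inequalities reduce to Hamiltonian inequalities. You instead read the needed information directly off the viscosity tests at the maximizers. Your Cases 1 and 2 therefore close by continuity of $q$ and $r$ alone, and the auxiliary lemma is not needed. This is a bit shorter and more self-contained. One justification needs fixing: finiteness of $q$ and $r$ at the limit point follows from $v_-^\eta(x_\varepsilon,t_\varepsilon)-v_+(y_\varepsilon,s_\varepsilon)\ge\delta/2$ together with $v_-$ bounded above and $v_+$ bounded below on compacts, not merely from $v_+$ being real.

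Your blow-up barrier on the $x$-variable only, instead of the paper's bounded cutoff $h$ on both variables over a shrunken cone, is an inessential variant. The function $g$ is bounded at the $x$-maximizers, so they stay a fixed distance from the set where $g=\infty$. Then $\|x_\varepsilon-y_\varepsilon\|\to0$ keeps the $y$-maximizers interior as well.

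The time mismatch in Case 3 is a point you must settle, not leave open. With a single penalty $\frac{1}{2\varepsilon}(\|x-y\|^2+|t-s|^2)$ you only know $\|x_\varepsilon-y_\varepsilon\|^2/\varepsilon\to0$. So $p_\varepsilon$ may be unbounded along every subsequence, and your remark that it is bounded on a subsequence is unjustified. Since $H$ is merely continuous in $t$, the term $H(y_\varepsilon,t_\varepsilon,p_\varepsilon)-H(y_\varepsilon,s_\varepsilon,p_\varepsilon)$ is then uncontrolled. Keeping a common $t$ does not work directly either, because $v_+(y,\cdot)$ provides no $C^1$ test function in $t$.

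The fix is your second option, which is exactly what the paper does:
\begin{itemize}
\item Add a separate penalty $|t-s|^2/(2\beta)$; its contributions $(t-s)/\beta$ cancel between the two test functions.
\item Send $\beta\to0$ at fixed $\varepsilon$. Then $\|p\|\le \mathrm{diam}/\varepsilon$, and the time-difference term vanishes by uniform continuity of $H$ on compacts.
\item Only afterwards let $\varepsilon\to0$, using (\textit{ii}).
\end{itemize}
Your base and Case 1--2 exclusions hold for all sufficiently small $(\varepsilon,\beta)$ jointly, since both penalties stay bounded along any joint sequence.

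Finally, the Hamiltonian terms in your Case 3 display have the wrong sign. Subtracting the two inequalities gives $\eta\le \partial_t g+H(x_\varepsilon,t_\varepsilon,p_\varepsilon+\nabla_x g)-H(y_\varepsilon,s_\varepsilon,p_\varepsilon)$. This is harmless because (\textit{i}) and (\textit{ii}) are two-sided.
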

\begin{proof}
\newcommand{\scone}{\mathcal{C}_0}
\newcommand{\clScone}{\bar{\mathcal{C}}_0}
\newcommand{\vsub}{\hjSol_-}
\newcommand{\vsup}{\hjSol_+}
\newcommand{\minrq}[1]{\min\{\tarFn(#1), \conFn(#1) \}}
\newcommand{\ti}{\tilde}
\newcommand{\ang}[1]{\langle #1 \rangle}

\newcommand{\Ph}{\Phi}
\newcommand{\Phep}{\Phi_\ep}
\newcommand{\Phpar}{\Phi_{\ep,\eta}}

\newcommand{\xpar}{\xVal_{\ep,\eta}}
\newcommand{\ypar}{\xDum_{\ep,\eta}}
\newcommand{\tpar}{\tVal_{\ep,\eta}}
\newcommand{\spar}{\tDum_{\ep,\eta}}
\newcommand{\Mpar}{M_{\ep,\eta}}

\newcommand{\xep}{\xVal_{\ep}}
\newcommand{\yep}{\xDum_{\ep}}
\newcommand{\tep}{\tVal_{\ep}}
\newcommand{\sep}{\tDum_{\ep}}

\newcommand{\xnot}{\xVal_*}
\newcommand{\ynot}{\xDum_*}
\newcommand{\tnot}{\tVal_*}
\newcommand{\snot}{\tDum_*}

\newcommand{\xTil}{\ti{\xVal}}
\newcommand{\tTil}{\ti{\tVal}}

Assume by way of contradiction that
\begin{align*}
    &\vsub(\xTil, \tTil) - \vsup(\xTil, \tTil) \ge \theta\\
    &\|\xTil - \xNot\| \le C(R - \tNot + \tTil) - 3\theta
\end{align*}
for some $\theta \in (0,R)$ and $(\xTil, \tTil) \in \cone$.
Choose $R_0 = R - C^{-1}\theta$, so that
\begin{equation*}
    \|\xTil - \xNot\| \le C(R_0 - \tNot + \tTil) - 2\theta.
\end{equation*}
Let $\scone := \{(\xVal,\tVal) \in \cone \mid \tNot - R_0 < \tVal < \tNot, \|\xVal - \xNot\| < C (R_0 - \tNot + \tVal)\}$, and note that $(\xTil, \tTil) \in \scone$.

Set $m = \max_{(\xVal, \tVal),(\xDum, \tDum) \in \clScone}(\vsub(\xVal, \tVal) - \vsup(\xDum, \tDum))$.
Let $h: \R \to \R$ be continuously differentiable and satisfy $h' \le 0$, $h(a) = 0$ for $a \le -\theta$, $h(a) = -m$ for $a \ge 0$.
For convenience, let  
$$\ang{\xVal} := \sqrt{\|\xVal - \xNot\|^2 + \theta^2}$$
for each $\xVal \in \Rn$.

First, define $\Ph:\clScone^2 \to \R$ by
\begin{align*}
    &\Ph(\xVal, \tVal, \xDum, \tDum) = \vsub(\xVal, \tVal) - \vsup(\xDum, \tDum) - \frac{\theta(2\tNot - \tVal - \tDum)}{4(\tNot - \ti{\tVal})} \\
    &+ h(\ang{\xVal} - C(R_0 - \tNot + \tVal)) + h(\ang{\xDum} - C(R_0 - \tNot + \tDum)).
\end{align*}
Next, for each $\ep > 0$, define $\Phep:\clScone^2 \to \R$ by $$\Phep(\xVal, \tVal, \xDum, \tDum) = \Ph(\xVal, \tVal, \xDum, \tDum) - \frac{\|\xVal-\xDum\|^2}{2 \ep}.$$
Finally, for each $\ep > 0$ and $\eta > 0$, let $\Phpar:\clScone^2 \to \R$ be given by
$$\Phpar(\xVal, \tVal, \xDum, \tDum) = \Phep(\xVal, \tVal, \xDum, \tDum) - \frac{(\tVal-\tDum)^2}{2 \eta},$$
choose 
$$(\xpar, \tpar, \ypar, \spar) \in \argmax_{(\xVal, \tVal, \xDum, \tDum) \in \clScone^2} \Phpar(\xVal, \tVal, \xDum, \tDum),$$
and let $\Mpar = \Phpar(\xpar, \tpar, \ypar, \spar)$.

\newcommand{\slice}{\mathcal{S}}
\newcommand{\Mep}{M_\ep}

\newcommand{\Phepj}{\Ph_{\ep, \eta_j}}
\newcommand{\xepj}{\xVal_{\ep, \eta_j}}
\newcommand{\tepj}{\tVal_{\ep, \eta_j}}
\newcommand{\yepj}{\xDum_{\ep, \eta_j}}
\newcommand{\sepj}{\tDum_{\ep, \eta_j}}

(Step 1) Let $\slice = \{(\xVal, \tVal, \xDum, \tDum) \in \clScone^2 \mid \tVal = \tDum \}$.
We show that for each $\ep > 0$, we have $\Phep(\xpar, \tpar, \ypar, \spar) \to \max_{(\xVal, \tVal, \xDum, \tDum) \in \slice} \Phep(\xVal, \tVal, \xDum, \tDum) =: \Mep$ and $(\tpar - \spar)^2/\eta \to 0$ as $\eta \to 0^+$.

Fix $\ep > 0$.
First, note that for each $\eta > 0$, we have $\Mpar \ge \Phpar(\xTil, \tTil, \xTil, \tTil) \ge \theta / 2$ and $\Phep(\xpar, \tpar, \ypar, \spar) \le m$.
Thus 
\begin{align*}
    \theta / 2 \le \liminf_{\eta \to 0^+} \Mpar \le m - \limsup_{\eta \to 0^+} \frac{(\tpar - \spar)^2}{2 \eta},
\end{align*}
 so that $|\tpar - \spar| \to 0$ as $\eta \to 0^+$.

Now, observe that
\begin{align*}
    \Mep &\le \Phpar(\xpar, \tpar, \ypar, \spar)\\
    &= \Phep(\xpar, \tpar, \ypar, \spar) - \frac{|\tpar - \spar|^2}{2\eta}\\
    &\le \Phep(\xpar, \tpar, \ypar, \spar)
\end{align*} for each $\eta > 0$.
First, observe that $\Mep \le \Phep(\xpar, \tpar, \ypar, \spar)$ for each $\eta > 0$, so that $\Mep \le \liminf_{\eta \to 0^+} \Phep(\xpar, \tpar, \ypar, \spar)$.
Now choose a sequence $\eta_j > 0$ such that $\eta_j \to 0$ and 
$\lim_{j \to \ii} \Phep(\xepj, \tepj, \yepj, \sepj) = \limsup_{\eta \to 0^+} \Phep(\xpar, \tpar, \ypar, \spar)$.
By restricting to a subsequence if necessary, we can assume $(\xepj, \tepj, \yepj, \sepj)$ converges to some $(\xVal^*, \tVal^*, \xDum^*, \tVal^*) \in \clScone^2$.
Then $\limsup_{\eta \to 0^+} \Phep(\xpar, \tpar, \ypar, \spar) \le \Phep(\xVal^*, \tVal^*, \xDum^*, \tVal^*) \le \Mep$, showing the claim.

\newcommand{\etaepj}{\eta_{\ep,j}}
\newcommand{\xepetaj}{\xVal_{\ep, \etaepj}}
\newcommand{\tepetaj}{\tVal_{\ep, \etaepj}}
\newcommand{\yepetaj}{\xDum_{\ep, \etaepj}}
\newcommand{\sepetaj}{\tDum_{\ep, \etaepj}}
(Step 2) For each $\ep > 0$, choose a sequence $\etaepj > 0$ such that $\etaepj \to 0^+$ and $(\xepetaj, \tepetaj, \yepetaj, \sepetaj)$ converges to some $(\xep, \tep, \yep, \sep) \in \clScone^2$.
By Step 1, we have $\tep = \sep$ and $\Mep = \lim_{j \to \ii} \Phep(\xepetaj, \tepetaj, \yepetaj, \sepetaj) \le \Phep(\xep, \tep, \yep, \tep) \le \Mep$, so that $\Phep(\xep, \tep, \yep, \tep) = \Mep$ for each $\ep > 0$.

\newcommand{\proj}{\mathcal{P}}
\newcommand{\Mnot}{M_*}
(Step 3)
Let $\proj = \{(\xVal, \tVal, \xDum, \tDum) \in \clScone^2 \mid \tVal = \tDum, \xVal = \xDum \}$.
It follows as in Step 1 that $\Ph(\xep, \tep, \yep, \tep) \to \max_{(\xVal, \tVal, \xDum, \tDum) \in \proj} \Ph(\xVal, \tVal, \xDum, \tDum) =: \Mnot$ and $\| \xep - \yep \|^2 / \ep \to 0$ as $\ep \to 0^+$.

\newcommand{\epI}{\ep_i}
\newcommand{\xI}{\xVal_{\ep_i}}
\newcommand{\tI}{\tVal_{\ep_i}}
\newcommand{\yI}{\xDum_{\ep_i}}
\newcommand{\sI}{\tDum_{\ep_i}}
(Step 4) Choose a sequence $\epI > 0$ such that $\epI \to 0$ and $(\xI, \tI, \yI, \tI)$ converges to some $(\xnot, \tnot, \ynot, \tnot) \in \clScone^2$, respectively.
It follows from Step 3 that $\xnot = \ynot$ and $\Ph(\xnot, \tnot, \xnot, \tnot) = \Mnot$.

(Step 5) We show that $(\xnot, \tnot) \in \scone$.
First, observe that $\Mnot \ge \vsub(\xTil, \tTil) - \vsup(\xTil, \tTil) - \frac{\theta}{2} - 2 h(\ang{\xTil} - C(R_0 - \tNot + \tTil)) \ge \theta/2$.

Now, suppose that $\tnot = \tNot$.
Then $\Mnot \le \vsub(\xnot, \tnot) - \vsup(\xnot, \tnot) = \vsub(\xnot, \tNot) - \vsup(\xnot, \tNot) \le 0$, producing a contradiction.

Next, instead suppose that $\tnot < \tNot$ and $(\xnot, \tnot) \in \partial \scone$.
Then $\|\xNot - \xnot\| = C(R_0 - \tNot + \tnot)$.
Thus $\Mnot < \vsub(\xnot, \tnot) - \vsup(\xnot, \tnot) + 2h(0) \le m - 2m < 0$, again producing a contradiction and showing the claim.

\newcommand{\xk}{\xVal_{\ep_k,\eta_k}}
\newcommand{\tk}{\tVal_{\ep_k,\eta_k}}
\newcommand{\yk}{\xDum_{\ep_k,\eta_k}}
\newcommand{\sk}{\tDum_{\ep_k,\eta_k}}
\newcommand{\Mk}{M_{\ep_k,\eta_k}}
(Step 6) We show that for all sufficiently small $\ep > 0$ and $\eta > 0$, we have $\vsup(\ypar, \spar) < \conFn(\ypar, \spar)$ and $\minrq{\xpar, \tpar} < \vsub(\xpar, \tpar)$. 
First observe that for each $\ep > 0$ and $\eta > 0$, we have $\vsub(\xpar, \tpar) - \vsup(\ypar, \spar) \ge \Mnot \ge \theta / 2$.

Suppose we can choose sequences $\ep_k > 0$ and $\eta_k > 0$ such that $\ep_k \to 0$, $\eta_k \to 0$, and $\vsup(\yk, \sk) \ge \conFn(\yk, \sk)$.
Following the analogous argument in Step 1, we have $\|\xk - \yk\| \to 0$ and $|\tk - \sk| \to 0$, since
\begin{align*}
    &\theta / 2 \le \liminf_{k \to \ii} \Mk\\
    &\le m - \limsup_{k \to \ii} \lf(\frac{(\tk - \sk)^2}{2\eta_k} + \frac{(\xk - \yk)^2}{2\ep_k}\rg).
\end{align*}
By passing to a subsequence if necessary, we can assume $(\xk, \tk, \yk, \sk)$ converges to some $(\xVal^*, \tVal^*, \xVal^*, \tVal^*) \in \clScone^2$.
It is the case that $\tVal^* < \tNot$, for otherwise we would have
\begin{align*}
  \theta / 2 &\le \limsup_{k \to \ii} \lf(\vsub(\xk, \tk) - \vsup(\yk, \sk)\rg) \\
  &\le \vsub(\xVal^*, \tNot) - \vsup(\xVal^*, \tNot) \\
  &\le 0.
\end{align*}

By assumption,
\begin{align*}
    \conFn(\yk, \sk) &\le \vsup(\yk, \sk) \\
    &\le \vsub(\xk, \tk) - \frac{\theta}{2}
\end{align*}
for each $k \in \N$.
By taking the superior limit as $k \to \ii$, we get $\conFn(\xVal^*, \tVal^*) \le \vsub(\xVal^*, \tVal^*) - \frac{\theta}{2}$.
However, by Lemma \ref{lem:hjr-pde-bounds}, $\conFn(\xVal^*, \tVal^*) \ge \vsub(\xVal^*, \tVal^*)$, providing the desired contradiction.

Thus we indeed have $\vsup(\ypar, \spar) < \conFn(\ypar, \spar)$.
A symmetric argument shows the claim for $\vsub$.

\newcommand{\xij}{\bar{\xVal}_{i,j}}
\newcommand{\tij}{\bar{\tVal}_{i,j}}
\newcommand{\yij}{\bar{\xDum}_{i,j}}
\newcommand{\sij}{\bar{\tDum}_{i,j}}
\newcommand{\etaij}{\eta_{\ep_i,j}}
\newcommand{\xib}{\bar{\xVal}_i}
\newcommand{\tib}{\bar{\tVal}_i}
\newcommand{\yib}{\bar{\xDum}_i}
\newcommand{\sib}{\bar{\tDum}_i}
\newcommand{\aij}{\costate_{i,j}}
\newcommand{\aib}{\costate_i}
\newcommand{\anot}{\costate, a_*}
\newcommand{\bij}{\coDum_{i,j}}
\newcommand{\bib}{\coDum_i}
\newcommand{\bnot}{\coDum_*}
\newcommand{\Xij}{X_{i,j}}
\newcommand{\Yij}{Y_{i,j}}
\newcommand{\Xib}{X_i}
\newcommand{\Yib}{Y_i}
\newcommand{\Xnot}{X_*}
\newcommand{\Ynot}{Y_*}
(Step 7)
For convenience, let $\xij = \xVal_{\ep_i, \eta_{\ep_i, j}}$, $\tij = \tVal_{\ep_i, \eta_{\ep_i, j}}$, $\yij = \xDum_{\ep_i, \eta_{\ep_i, j}}$, and $\sij = \tDum_{\ep_i, \eta_{\ep_i, j}}$ for each $i,j \in \N$.
Similarly, let $\xib = \xVal_{\ep_i}$, $\tib = \tVal_{\ep_i}$, $\yib = \xDum_{\ep_i}$, and $\sib = \tDum_{\ep_i}$ for each $i \in \N$.
By Steps 1-6, we can assume (by passing to subsequences) that $(\xij, \tij, \yij, \sij) \in \scone^2$, $(\xij, \sij, \yij, \tij) \in \scone^2$, and 
\begin{align}
    &\minrq{\xij,\tij} < \vsub(\xij, \tij),\label{proof:cone-lemma-rqlv}\\
    &\conFn(\yij, \sij) > \vsup(\yij, \sij)\label{proof:cone-lemma-qgv}.
\end{align}

For each $i,j \in \N$, let 
\begin{align*}
    \Xij &= \ang{\xij} - C(R_0 - \tNot + \tij),\\
    \Yij &= \ang{\yij} - C(R_0 - \tNot + \sij),\\
    \aij &= \frac{\xij - \yij}{\ep_i} - h'(\Xij) \frac{\xij - \xNot}{\ang{\xij}},\\
    \bij &= \frac{\xij - \yij}{\ep_i} + h'(\Yij) \frac{\yij - \xNot}{\ang{\yij}},
\end{align*}
and define the test functions $\phi_{i,j}, \psi_{i,j} : \cone \to \R$ by
\begin{align*}
    \phi_{i,j}(\xVal, \tVal) =& \frac{\|\xVal - \yij\|^2}{2\ep_i} + \frac{(\tVal - \sij)^2}{2\etaij} - \theta_0 \tVal\\
    &- h(\ang{\xVal} - C(R_0 - \tNot + \tVal)),
\end{align*}
and
\begin{align*}
    \psi_{i,j}(\xDum, \tDum) =& -\frac{\|\xij - \xDum\|^2}{2 \ep_i} - \frac{(\tij - \tDum)^2}{2 \etaij} + \theta_0\tDum\\
    &+ h(\ang{\xDum} - C(R_0 - \tNot + \tDum)),
\end{align*}
where $\theta_0 = \theta / 4(\tNot - \tilde{\tVal}) > 0$.
We claim that
\begin{align}
\ppt \phi_{i,j}(\xij, \tij) + \hamiltonian(\xij, \tij, \nabla_\xVal \phi_{i,j}(\xij, \tij)\}\} &\ge 0, \label{proof:wwts-phi}\\
\pps \psi_{i,j}(\yij, \sij) + \hamiltonian(\yij, \sij, \nabla_\xDum \psi_{i,j}(\yij, \sij) \}\} &\le 0. \label{proof:wwts-psi}
\end{align}
 
Fix $i,j \in \N$.
First, observe that $\vsub - \phi_{i,j}$ has a maximum at $(\xij, \tij)$.
By the definition of a viscosity sub-solution,
\begin{align*}
    -&\min\{\conFn(\xij, \tij) - \vsub(\xij, \tij),\\
    &\max\{\tarFn(\xij, \tij) - \vsub(\xij, \tij),\\
    &\ppt \phi_{i,j}(\xij, \tij) + \hamiltonian(\xij, \tij, \nabla_\xVal \phi_{i,j}(\xij, \tij)\}\} \le 0.
\end{align*}
The above inequality implies that $\conFn(\xij, \tij) - \vsub(\xij, \tij) \ge 0$ and
\begin{align*}
    \max&\{\tarFn(\xij, \tij) - \vsub(\xij, \tij),\\
    &\ppt \phi_{i,j}(\xij, \tij) + \hamiltonian(\xij, \tij, \nabla_\xVal \phi_{i,j}(\xij, \tij)\} \ge 0.
\end{align*}
But then $\tarFn(\xij,\tij) < \vsub(\xij, \tij)$ by \eqref{proof:cone-lemma-rqlv}, so that \eqref{proof:wwts-phi} follows.

Next, observe that $\vsup - \psi$ has a minimum at $(\yij, \sij)$.
By the definition of a viscosity super-solution,
\begin{align*}
    -&\min\{\conFn(\yij, \sij) - \vsup(\yij, \sij),\\
    &\max\{\tarFn(\yij, \sij) - \vsup(\yij, \sij),\\
    &\quad \pps \psi_{i,j}(\yij, \sij) + \hamiltonian(\yij, \sij, \nabla_\xDum \psi_{i,j}(\yij, \sij)\}\} \ge 0.
\end{align*}
The above inequality implies that $\conFn(\yij, \sij) - \vsup(\yij, \sij) \le 0$ or 
\begin{align*}
    \max&\{\tarFn(\yij, \sij) - \vsup(\yij, \sij),\\
    &\pps \psi_{i,j}(\yij, \sij) + \hamiltonian(\yij, \sij, \nabla_\xDum \psi_{i,j}(\yij, \sij)\} \le 0.
\end{align*} 
But since the former condition cannot hold by \eqref{proof:cone-lemma-qgv}, it must be the case that the latter does.
This immediately implies \eqref{proof:wwts-psi}.

(Step 8)
We now derive a contradiction.
Fix $i,j \in \N$.
Note that \eqref{proof:wwts-phi}-\eqref{proof:wwts-psi} are equivalent to
\begin{align*}
    &\frac{\tij - \sij}{\etaij} + Ch'(\Xij) + \hamiltonian\lf(\xij, \tij, \aij \rg) \ge \theta_0,\\
    &\frac{\tij - \sij}{\etaij} - Ch'(\Yij) + \hamiltonian\lf(\yij, \sij, \bij \rg) \le - \theta_0.
\end{align*}
Thus,
\begin{align*}
    2\theta_0 \le& Ch'(\Xij) + Ch'(\Yij) \\
    &+ \hamiltonian(\xij, \tij, \aij) - \hamiltonian(\yij, \sij, \bij)\\
    \le& Ch'(\Xij) + Ch'(\Yij) \\
    &+ \hamiltonian(\xij, \tij, \aij) - \hamiltonian(\xij, \tij, \bij)\\
    &+ \hamiltonian(\xij, \tij, \bij) - \hamiltonian(\yij, \tij, \bij)\\
    &+ \hamiltonian(\yij, \tij, \bij) - \hamiltonian(\yij, \sij, \bij).
\end{align*}
Now
\begin{align*}
    &|\hamiltonian(\xij, \tij, \aij) - \hamiltonian(\xij, \tij, \bij)| \\
    &\le C\|\aij - \bij\|\\
    &\le C|h'(\Xij)| \lf\| \frac{\xij - \xNot}{\ang{\xij}}\rg\| + C|h'(\Yij)| \lf\| \frac{\yij - \xNot}{\ang{\yij}}\rg\|\\
    &\le C|h'(\Xij)| + C|h'(\Yij)|
\end{align*}
and
\begin{align*}
    &|\hamiltonian(\xij, \tij, \bij) - \hamiltonian(\yij, \tij, \bij)| \\
    &\le M\|\xij - \yij\|\lf(1 + \|\bij\| \rg)\\
    &\le M\|\xij - \yij\|\lf(1 + \frac{\|\xij - \yij\|}{\ep_i} + |h'(\Yij)|\rg).
\end{align*}
Thus
\begin{align*}
    2\theta_0 \le& M\|\xij - \yij\|\lf(1 + \frac{\|\xij - \yij\|}{\ep_i} + |h'(\Yij)|\rg)\\
    & +\hamiltonian(\yij, \tij, \bij) - \hamiltonian(\yij, \sij, \bij).
\end{align*}

Taking the limit as $j \to \ii$ gives
\begin{align*}
    2\theta_0 \le M \|\xib - \yib\|\lf(1 + \frac{\|\xib - \yib\|}{\ep_i} + |h'(\Yib)| \rg), 
\end{align*}
where $Y_i := \ang{\yib} - C(R_0 - \tNot + \sib)$.
Letting $i \to \ii$ then gives $2\theta_0 \le 0$, producing the desired contradiction.
\end{proof}

\begin{proof}[Proof of Theorem \ref{thm:comparison-principle}]
Assume the theorem hypotheses hold.
Fix $\ep \in (0, \tFin - \tInit)$, and let $R = 2(\tFin - \tInit - \ep) / N$, where $N \in \N$ is sufficiently large that $R < \min\{1/K, \ep\}$.
Let $\tDumDum_j = \tFin - \frac{1}{2}Rj$ for each $j = 0,\dots,N+1$.

Fix $j \in \{0, \dots, N-1\}$.
Suppose $\hjSol_- \le \hjSol_+$ on $\Rn \tms [\tDumDum_{j}, \tFin]$.
We show that $\hjSol_- \le \hjSol_+$ on $\Rn \tms [\tDumDum_{j+1}, \tFin]$.

Fix $\xVal_0 \in \Rn$, and let $\tVal_0 = \tDumDum_j$.
Set $C = K(\|\xVal_0\| + 1) / (1 - KR)$, and
define the cone $\cone := \{(\xVal, \tVal) \in \Rn \tms \R \mid \tVal_0 - R < \tVal < \tVal_0, \|\xVal - \xVal_0\| < C(R - (\tVal_0 - t))\}$.
For all $(\xVal,\tVal) \in \cone$ and $\costate, \coDum \in \Rn$, we have
\begin{align*}
    |\hamiltonian(\xVal, \tVal, \costate) - \hamiltonian(\xVal, \tVal, \coDum)| &\le K (\|\xVal\| + 1) \|\costate - \coDum \| \\
    &\le K (\|\xVal_0\| + CR + 1) \|\costate - \coDum \|\\
    &\le C \|\costate - \coDum \|.
\end{align*}
By letting $\compact = \overline{B(\xNot;CR)}$, we can by assumption choose an $M > 0$ such that $|\hamiltonian(\xVal, \tVal, \costate) - \hamiltonian(\xDum, \tVal, \costate)| \le M \|\xVal - \xDum\| (1 + \|\costate\|)$ for all $\xVal, \xDum \in \compact$ and $\tVal \in (\tInit, \tFin)$.

Thus, by Lemma \ref{lem:cone-of-dependence},
$\hjSol_- \le \hjSol_+$ on $\cone$.
Since $\xVal_0 \in \Rn$ was chosen arbitrarily, we can in fact conclude that $\hjSol_- \le \hjSol_+$ on $\Rn \tms (\tNot - R,\tNot] = \Rn \tms (\tDumDum_{j+2},\tDumDum_j]$.
But because we assumed that $\hjSol_- \le \hjSol_+$ on $\Rn \tms [\tDumDum_j,\tFin]$, we in fact have $\hjSol_- \le \hjSol_+$ on $\Rn \tms (\tDumDum_{j+2}, \tFin] \sps \Rn \tms [\tDumDum_{j+1}, \tFin]$.

By hypothesis, $\hjSol_- \le \hjSol_+$ on $\Rn \tms \{\tFin\} = \Rn \tms [\tDumDum_{0}, \tFin]$.
It then follows from induction on $j$ that $\hjSol_- \le \hjSol_+$ on $\Rn \tms [\tDumDum_{N}, \tFin] = \Rn \tms [\tInit + \ep, \tFin]$.
Since $\ep$ was chosen arbitrarily small, the result follows.
\end{proof}

\subsection{Proof of Theorem \ref{thm:dpe}}\label{sec:appendix-dpe}
\newcommand{\Lip}{\lip(\R, \Rn; L)}

\begin{definition}
    Let $\rob: \xSigs \tms \evalTimes \to \extended$, with $\evalTimes \sbs \R$.
    The time $\tFin \in \R$ is a \textbf{horizon} of $\rob$ if $\rob(\xSig_1, \cdot) = \rob(\xSig_2, \cdot)$ for all $\xSig_1, \xSig_2 \in \xSigs$ such that $\xSig_1(\tDum) = \xSig_2(\tDum)$ for every $\tDum \le \tFin$.
\end{definition}

In the remainder of this section, for each $L > 0$, we denote the set of all $L$-Lipschitz functions from $\R$ to $\Rn$ by $\Lip$, endowed with the subspace topology inherited from $\xSigs$.

\begin{lemma}\label{lem:limits}
    Let $\evalTimes \sbs \R$.
    For each $i \in \N$, let $\rob_i:\xSigs \tms \evalTimes \to \R$ be continuous and past-independent, and let $\rob:\xSigs \tms \evalTimes \to \R$ be continuous and past-independent as well.
    Suppose $\rob$ and all of the $\rob_i$ share some common horizon $\tFin \in \R$.
    If $\rob_i \to \rob$ locally uniformly on $\Lip \tms \evalTimes$ for each $L > 0$, then $\val[\rob_i] \to \val[\rob]$ locally uniformly.
\end{lemma}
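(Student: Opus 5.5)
Fix a compact set $\compact \ssbs \Rn \tms \evalTimes$; we must show $\sup_{(\xVal,\tVal) \in \compact} |\val[\rob_i](\xVal,\tVal) - \val[\rob](\xVal,\tVal)| \to 0$. The plan is to reduce everything to a compact family of trajectories and then apply the elementary inequality $|\inf_\dStrat \sup_\uSig A - \inf_\dStrat \sup_\uSig B| \le \sup_\dStrat \sup_\uSig |A - B|$. This gives
$$|\val[\rob_i](\xVal,\tVal) - \val[\rob](\xVal,\tVal)| \le \sup_{\uSig \in \uSigs} \sup_{\dSig \in \dSigs} |\rob_i(\standardTraj,\tVal) - \rob(\standardTraj,\tVal)|.$$
Note that every $\dStrat(\uSig)$ is some $\dSig \in \dSigs$, so it suffices to control all open-loop trajectories.

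\textbf{Step 1: the relevant trajectories are uniformly Lipschitz, after modification.} The trajectories $\standardTraj$ with $(\xVal,\tVal) \in \compact$ are not globally Lipschitz on $\R$, since by the linear growth bound in Assumption \ref{assumption:regularity} they may grow exponentially. We therefore use past-independence together with the common horizon $\tFin$. Let $t_{\min} = \min\{\tVal : (\xVal,\tVal)\in\compact\}$, and fix $(\xVal,\tVal)\in\compact$, $\uSig$, $\dSig$. Define $\hat{\xSig}$ to equal $\standardTraj$ on $[\tVal, \max\{\tVal,\tFin\}]$ and to be constant outside this interval.

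Past-independence allows us to alter $\xSig$ before $\tVal$, and the horizon property allows us to alter it after $\tFin$, with both changes leaving $\rob(\cdot,\tVal)$ and $\rob_i(\cdot,\tVal)$ unchanged. One subtlety: the horizon definition requires agreement on $(-\ii,\tFin]$, while past-independence concerns $[\tVal,\ii)$. We compose the two changes. First apply past-independence to the pair $(\standardTraj, \xSig')$, where $\xSig'$ is modified only before $\tVal$. Then apply the horizon property to the pair $(\xSig', \hat{\xSig})$, which agree up to $\tFin$. If $\tVal > \tFin$, the same trick makes $\hat{\xSig}$ constant.

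By Gr\"onwall, on the compact interval $[t_{\min}, \max\{\tFin, \sup \tVal\}]$ the trajectories are uniformly bounded by some $B$. Consequently $\|\dot{\hat{\xSig}}\| \le K(1+B) =: L$, so every $\hat{\xSig} \in \Lip$. These $\hat{\xSig}$ are also uniformly bounded and are constant outside a fixed compact time interval, so they lie in a set $F \subset \Lip$ whose closure is compact by Arzel\`a--Ascoli.

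\textbf{Step 2: uniform convergence on the compact set.} The set $\bar F \tms \{\tVal : (\xVal,\tVal)\in\compact\}$ is compact in $\Lip \tms \evalTimes$. The hypothesis of local uniform convergence on $\Lip \tms \evalTimes$ gives uniform convergence on compact subsets; this is standard, since each point has a neighborhood of uniform convergence and finitely many such neighborhoods cover the compact set. Hence $\sup_{\bar F \tms \pi(\compact)} |\rob_i - \rob| \to 0$. Combining this with the inequality above, and with the fact that $\rob_i(\standardTraj,\tVal) = \rob_i(\hat{\xSig},\tVal)$ and likewise for $\rob$, yields uniform convergence on $\compact$.

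The main obstacle is Step 1, the legitimacy of the trajectory modification. The delicate point is checking that combining past-independence with the horizon property really leaves $\rob(\cdot,\tVal)$ unchanged, including at the endpoint conventions. A further concern is whether $\evalTimes$ may contain times larger than $\tFin$; in that case I would use the horizon property applied to the whole function $\rob(\xSig,\cdot)$, which handles all $\tVal$ at once. Continuity of $\rob$ and of the $\rob_i$ is not needed beyond ensuring that the values are well-defined and real.
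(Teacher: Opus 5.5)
Your proposal is correct and follows essentially the same route as the paper. Both arguments bound the value difference by the supremum over open-loop trajectories. Both then use past-independence and the common horizon to replace each trajectory by one frozen outside a fixed compact time window, so that it lies in an Arzel\`a--Ascoli-compact set of $L$-Lipschitz trajectories, and finally invoke uniform convergence of $\rob_i \to \rob$ on that compact set. The only differences are cosmetic: you freeze before each initial time $t$ itself (which avoids backward-in-time Gr\"onwall bounds) and treat $t > T$ directly, whereas the paper freezes before a uniform $S \le t$ and enlarges $T$ without loss of generality.
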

\begin{proof}
    \newcommand{\Xs}{\mathcal{F}}
    Let $\compact_1 \ssbs \Rn$ and $ \compact_2 \ssbs \evalTimes$ both be compact.
    We can assume without loss of generality that $T > \tVal$ for each $\tVal \in \compact_2$.
    Choose $\tInit \in \R$ such that $\tInit \le \tVal$ for each $\tVal \in \compact_2$.
    By Assumptions \ref{assumption:compactness}-\ref{assumption:regularity}, there is a sufficiently large $L > 0$ such that
    $\standardTraj$ is $L$-Lipschitz on $[\tInit, \tFin]$ for each $\xVal \in \compact_1$, $\tVal \in \compact_2$, $\uSig \in \uSigs$, and $\dSig \in \dSigs$.
    Let $\Xs = \{\xSig \in \Lip \mid \xSig(\tVal) = \xVal \text{ for some } \xVal \in \compact_1 \text{ and } \tVal \in \compact_2 \}$.
    Then $\Xs$ is compact by the Arzel\`{a}-Ascoli Theorem.

    Let $\Xs_0 = \{\standardTraj \mid \xVal \in \compact_1, \tVal \in \compact_2, \uSig \in \uSigs, \dSig \in \dSigs\}$, and
    let $\pi: \Xs_0 \to \Xs$ be defined by
    \begin{equation*}
        \pi(\xSig)(\tVal) = \begin{cases}
            \xSig(\tInit) & \tVal < \tInit \\
            \xSig(\tVal) & \tInit \le \tVal \le \tFin \\
            \xSig(\tFin) & \tFin < \tVal.
        \end{cases}
    \end{equation*}
    It follows from past-independence and the uniform horizon bound for the $\rob_i$ and for $\rob$ that $\rob_i(\xSig, \tVal) = \rob_i(\pi(\xSig), \tVal)$ and $\rob(\xSig, \tVal) = \rob(\pi(\xSig), \tVal)$ for each $\xSig \in \Xs_0$ and each $\tVal \in \compact_2$.
    Thus

    \begin{align*}
    \sup_{\xVal \in \compact_1} &\sup_{\tVal \in \compact_2} |\val[\rob_i](\xVal, \tVal) - \val[\rob](\xVal, \tVal)|\\
    &\le \sup_{\xVal \in \compact_1} \sup_{\tVal \in \compact_2} \supu \supd| \rob_i(\standardTraj, \tVal) - \rob(\standardTraj, \tVal)|\\
    &= \sup_{\xVal \in \compact_1} \sup_{\tVal \in \compact_2} \supu \supd| \rob_i(\pi(\standardTraj), \tVal) - \rob(\pi(\standardTraj), \tVal)|\\
    &\le \max_{(\xSig, \tVal) \in \Xs \tms \compact_2} | \rob_i(\xSig, \tVal) - \rob(\xSig, \tVal)|.
    \end{align*}
    But $\max_{(\xSig, \tVal) \in \Xs \tms \compact_2} | \rob_i(\xSig, \tVal) - \rob(\xSig, \tVal)| \to 0$ as $i \to \ii$ by the assumption that $\rob_i \to \rob$ locally uniformly on $\Lip \tms \evalTimes$.
\end{proof}

\begin{lemma}\label{lem:homeomorphism}
    Let $\hamiltonian:\Rn \tms \R \tms \Rn \to \R$ be given by \eqref{eqn:hamiltonian} and $\homeo:\extended \to [-1,1]$ be given by \eqref{eqn:homeomorphism-def}.
    Let $\tInit, \tFin \in \R$ with $\tInit < \tFin$, and let $\tarFn, \conFn: \Rn \tms (\tInit,\tFin) \to \R$ be continuous with $\tarFn < +\ii$ and $\conFn > -\ii$.
    Then $\hjSol: \Rn \tms (\tInit, \tFin) \to \R$ is a viscosity sub-solution (resp. super-solution) of \eqref{eqn:hjr-pde} iff $w = \homeo \circ \hjSol$ is a viscosity sub-solution (resp. super-solution) of
    \begin{align}\label{eqn:hjr-pde-transformed}
    -\min\{& \homeo \circ \conFn(\xVal,\tVal) - w, \max\{ \homeo \circ \tarFn(\xVal,\tVal) - w, \nonumber \\
    &\ppt w + \hamiltonian(\xVal, \tVal, \nabla_x w) \}\} = 0,
    \quad \xVal \in \Rn,~\tVal \in (\tInit, \tFin).
    \end{align}
\end{lemma}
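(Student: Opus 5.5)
The plan is to prove the sub-solution case directly from the definitions via a change of test function; the super-solution case is symmetric. Since $\homeo$ is a strictly increasing homeomorphism of $\extended$ onto $[-1,1]$, $\hjSol$ is upper (resp. lower) semi-continuous iff $w=\homeo\circ\hjSol$ is. Because $\hjSol$ is real-valued, $w$ takes values in $(-1,1)$. Moreover, $\homeo$ restricted to $\R$ is a $C^\infty$ diffeomorphism onto $(-1,1)$ with $\homeo'>0$. So the whole proof reduces to checking the differential inequality at touching points.

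Suppose $\hjSol$ is a sub-solution, and let $\hjTest$ be $C^1$ with $w-\hjTest$ having a local max at $(\xNot,\tNot)$. Since $w(\xNot,\tNot)\in(-1,1)$, we may modify $\hjTest$ so that it takes values in $(-1,1)$ near $(\xNot,\tNot)$. To do this, replace $\hjTest$ by $\hjTest-\hjTest(\xNot,\tNot)+w(\xNot,\tNot)$, which preserves the local max. Continuity then keeps the values in $(-1,1)$ on a small neighborhood.

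On that neighborhood set $\psi=\homeo^{-1}\circ\hjTest$, which is $C^1$. Strict monotonicity of $\homeo^{-1}$ shows that $\hjSol-\psi$ has a local max at $(\xNot,\tNot)$. Indeed, $w\le \hjTest+c$ near the point with equality at it, with $c=0$ after the shift. This gives $\hjSol\le\psi$ locally with equality at the point, hence a local max. To apply the definition, which requires a test function on the whole domain, extend $\psi$ by a standard cutoff; local max properties depend only on a neighborhood.

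The sub-solution inequality for $\hjSol$ then yields two facts: $\conFn-\hjSol\ge0$, and either $\tarFn-\hjSol\ge0$ or $\ppt\psi+\hamiltonian(\xNot,\tNot,\nabla_\xVal\psi)\ge0$. Since $\homeo$ is increasing, with $\homeo(\pm\ii)=\pm1$ consistent with $\tarFn<\ii$ and $\conFn>-\ii$, the signs of $\conFn-\hjSol$ and $\homeo\circ\conFn-w$ agree, and likewise for $\tarFn$. For the Hamiltonian term, $\nabla\hjTest=\homeo'(\hjSol)\nabla\psi$ at the point, with $\homeo'(\hjSol)>0$. The Hamiltonian \eqref{eqn:hamiltonian} is positively 1-homogeneous in $\costate$, since for $a>0$ we have $\max_u\min_d (a\costate)\cdot \dynamics=a\max_u\min_d\costate\cdot\dynamics$. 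Hence
\[
\ppt\hjTest+\hamiltonian(\xNot,\tNot,\nabla_\xVal\hjTest)=\homeo'(\hjSol)\bigl(\ppt\psi+\hamiltonian(\xNot,\tNot,\nabla_\xVal\psi)\bigr),
\]
so the sign is preserved. Thus the $\min/\max$ expression for $w$ is $\ge 0$, which is the sub-solution condition for \eqref{eqn:hjr-pde-transformed}.

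The converse runs the same argument with $\homeo$ in place of $\homeo^{-1}$: given a test function $\psi$ for $\hjSol$, take $\homeo\circ\psi$ as the test function for $w$, with no range issue. The only delicate points are the range restriction of test functions when passing through $\homeo^{-1}$, handled by localizing, and the positive homogeneity of $\hamiltonian$, which is the key structural fact.
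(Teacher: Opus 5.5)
Your proof is correct and follows essentially the same route as the paper. Both normalize the test function so it touches at $(x_0,t_0)$, compose with $\sigma^{-1}$ (after keeping its range in $(-1,1)$ near the point) or with $\sigma$ for the converse, and transfer the sign conditions via monotonicity of $\sigma$, the chain rule, and positive $1$-homogeneity of $H$ in $\lambda$; the paper uses that homogeneity implicitly through the factor $1/\sigma'(\psi(x_0,t_0))$. In the converse, state explicitly that the test function for $w$ is $\sigma\circ(\psi-c)$ with $c=\psi(x_0,t_0)-v(x_0,t_0)$, as the paper does, since without this shift $w-\sigma\circ\psi$ need not have a local maximum at $(x_0,t_0)$.
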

\begin{proof}
    \newcommand{\domain}{\Rn \tms (\tInit, \tFin)}
    ($\implies$) Suppose $\hjSol$ is a viscosity sub-solution of \eqref{eqn:hjr-pde}.
    Let $(\xNot, \tNot) \in \domain$ and $\hjTest:\domain \to \R$ be continuously differentiable.
    Suppose $w - \hjTest$ has a local maximum at $(\xNot, \tNot)$.
    Let $c = \hjTest(\xNot, \tNot) - w(\xNot, \tNot)$, let $\bar{\hjTest}:\Rn \tms (\tInit, \tFin) \to (-1,1)$ be a continuously differentiable function that is equal to $\hjTest - c$ in some neighborhood of $(\xNot, \tNot)$, and set $\psi = \homeo^{-1} \circ \bar{\hjTest}$. 
    
    Then $\hjSol(\xNot,\tNot) = \homeo^{-1}(w(\xNot, \tNot)) = \homeo^{-1}(\hjTest(\xNot, \tNot) - c) = \psi(\xNot, \tNot)$.
    Moreover, because $\homeo^{-1}$ is continuous and increasing, it then follows from the fact that $w \le \hjTest - c$ in some neighborhood of $(\xNot, \tNot)$ that $\hjSol \le \psi$ in some neighborhood of $(\xNot, \tNot)$.
    Thus $\hjSol - \psi$ has a local maximum at $(\xNot, \tNot)$.
    
    Therefore
    \begin{align*}
    0 \ge& -\min\{ \conFn(\xNot,\tNot) - \hjSol(\xNot,\tNot), \max\{ \tarFn(\xNot,\tNot) - \hjSol(\xNot,\tNot), \\
    &\ppt \psi(\xNot,\tNot) + \hamiltonian(\xNot, \tNot, \nabla_x \psi(\xNot,\tNot)) \}\} \\
    =& -\min\{ \conFn(\xNot,\tNot) - \hjSol(\xNot,\tNot), \max\{ \tarFn(\xNot,\tNot) - \hjSol(\xNot,\tNot), \\
    &\lf(\ppt \hjTest(\xNot,\tNot) + \hamiltonian(\xNot, \tNot, \nabla_x \hjTest(\xNot,\tNot))\rg) / \homeo'(\psi(\xNot,\tNot))  \}\},
    \end{align*}
    which is equivalent to
    \begin{align*}
    0 \ge& -\min\{ \homeo \circ \conFn(\xNot,\tNot) - w(\xNot,\tNot),\\
    &\qquad\max\{ \homeo \circ \tarFn(\xNot,\tNot) - w(\xNot,\tNot), \\
    &\qquad\ppt \hjTest(\xNot,\tNot) + \hamiltonian(\xNot, \tNot, \nabla_x \hjTest(\xNot,\tNot)) \}\}.
    \end{align*}
    An analogous argument proceeds for the super-solution case.
    ($\impliedby$) 
    Suppose $w$ is a viscosity sub-solution of \eqref{eqn:hjr-pde-transformed}.
    Let $(\xNot, \tNot) \in \domain$ and $\psi:\domain \to \R$ be continuously differentiable.
    Suppose $\hjSol - \psi$ has a local maximum at $(\xNot, \tNot)$.
    Let $c = \psi(\xNot, \tNot) - \hjSol(\xNot, \tNot)$, and let $\hjTest = \homeo \circ (\psi - c)$.

    Then $w(\xNot, \tNot) = \homeo(\hjSol(\xNot, \tNot)) = \homeo(\psi(\xNot, \tNot) - c) = \hjTest(\xNot, \tNot)$.
    Moreover, because $\homeo$ is continuous and increasing, it then follows from the fact that $\hjSol \le \psi - c$ in some neighborhood of $(\xNot, \tNot)$ that $w \le \hjTest$ in this same neighborhood.
    Thus $w - \hjTest$ has a local maximum at $(\xNot, \tNot)$.
    
    Therefore
    \begin{align*}
    0 \ge& -\min\{ \homeo \circ \conFn(\xNot,\tNot) - w(\xNot,\tNot), \\
    &\max\{ \homeo \circ \tarFn(\xNot,\tNot) - w(\xNot,\tNot), \\
    &\ppt \hjTest(\xNot,\tNot) + \hamiltonian(\xNot, \tNot, \nabla_x \hjTest(\xNot,\tNot)) \}\} \\
    =& -\min\{ \homeo \circ \conFn(\xNot,\tNot) - w(\xNot,\tNot), \\
    &\max\{ \homeo \circ \tarFn(\xNot,\tNot) - w(\xNot,\tNot), \\
    &(\ppt \psi(\xNot,\tNot) + \hamiltonian(\xNot, \tNot, \nabla_x \psi(\xNot,\tNot))) \homeo'(\hjSol(\xNot, \tNot)) \}\},
    \end{align*}
    which is equivalent to
    \begin{align*}
    0 \ge& -\min\{ \conFn(\xNot,\tNot) - \hjSol(\xNot,\tNot), \max\{ \tarFn(\xNot,\tNot) - \hjSol(\xNot,\tNot), \\
    &\qquad\ppt \psi(\xNot,\tNot) + \hamiltonian(\xNot, \tNot, \nabla_x \psi(\xNot,\tNot)) \}\}.
    \end{align*}
    
\end{proof}

The following lemma is essentially a restatement of Theorem 1 in \cite{fisac-chen-2015}, except that the theorem in this prior work assumes the system dynamics, target function, and constraint functions all are globally Lipschitz. 
We will use the two previous lemmas and Theorem \ref{thm:comparison-principle} to show that this result holds under the less conservative technical assumptions imposed in this work.

\begin{lemma}\label{lem:ra-hj-tvp}
    Let the dynamics function $\dynamics$ be continuous on $\Rn \tms \uVals \tms \dVals \tms (\tInit, \tFin]$, and
    let $\hamiltonian:\Rn \tms \R \tms \Rn \to \R$ be given by \eqref{eqn:hamiltonian}.
    Let $\tInit, \tFin \in \R$ with $\tInit < \tFin$, and let $\tarFn, \conFn: \Rn \tms (\tInit,\tFin] \to \R$ be continuous with $\tarFn < +\ii$ and $\conFn > -\ii$.
    Then $\hjSol = \valRTAT[\tarFn,\conFn;\tFin]|_{\Rn \tms (\tInit, \tFin]}$ is the unique viscosity solution of
    \begin{equation}\label{proof:ra-hj-pde}
    \begin{cases}
     -\min\Big\{ \conFn(\xVal,\tVal) - \hjDep, \\
     \qquad \max\big\{ \tarFn(\xVal,\tVal) - \hjDep, \\
     \qquad \ppt \hjDep + \hamiltonian(\xVal, \tVal, \nabla_x \hjDep) \big\} \Big\} = 0, & \xVal \in \Rn,\tVal \in (\tInit, \tFin), \\
     \hjDep(\xVal,\tFin) = \min\{\conFn(\xVal,\tFin), \tarFn(\xVal,\tFin)\}, & \xVal \in \Rn.
    \end{cases}
\end{equation}
\end{lemma}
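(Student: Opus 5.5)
The strategy is to reduce to Theorem 1 of \cite{fisac-chen-2015}, which gives the result when $\dynamics$, $\tarFn$, and $\conFn$ are bounded and globally Lipschitz. We then pass to the limit using Lemma \ref{lem:limits}, the transformation Lemma \ref{lem:homeomorphism}, stability of viscosity solutions, and the comparison principle (Theorem \ref{thm:comparison-principle}).

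\emph{Uniqueness and the terminal condition.} Uniqueness follows immediately from Theorem \ref{thm:comparison-principle}, since \eqref{proof:ra-hj-pde} is \eqref{eqn:gra-tvp} with $\terFn := \min\{\conFn(\cdot,\tFin), \tarFn(\cdot,\tFin)\}$. Assumptions \ref{assumption:compactness}--\ref{assumption:regularity} give hypotheses (\textit{i})--(\textit{ii}) for $\hamiltonian$. The terminal condition $\hjSol(\cdot,\tFin) = \min\{\conFn(\cdot,\tFin),\tarFn(\cdot,\tFin)\}$ is read off directly from $\robRTAT$. Continuity and real-valuedness of $\hjSol$ come from Lemma \ref{lem:gra-evaluation-is-continuous}. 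It therefore only remains to show that $\hjSol$ is a viscosity solution of the PDE on $\Rn \tms (\tInit,\tFin)$.

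\emph{Approximation.} First I would replace $\tarFn,\conFn$ by $\homeo\circ\tarFn$ and $\homeo\circ\conFn$. These take values in $[-1,1]$, and $\val[\robRTAT[\homeo\circ\tarFn,\homeo\circ\conFn;\tFin]] = \homeo\circ\hjSol$ because $\homeo$ is increasing and continuous. By Lemma \ref{lem:homeomorphism}, it then suffices to prove the bounded version.

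Next, choose bounded, globally Lipschitz $\tarFn_i,\conFn_i$ converging locally uniformly to the bounded data, for instance via inf-convolutions. Also choose globally Lipschitz, bounded dynamics $\dynamics_i$ that agree with $\dynamics$ on $B(0;i)$, for instance $\dynamics_i(\xVal,\cdot) = \dynamics(\pi_i(\xVal),\cdot)$ with $\pi_i$ the radial projection onto $\overline{B(0;i)}$. One subtlety is that \cite{fisac-chen-2015} may want $\dynamics_i$ Lipschitz in $\tVal$ or continuous up to $\tFin$. I would mollify in $\tVal$, using continuity of $\dynamics$ on $(\tInit,\tFin]$, and restrict to compact subintervals of $(\tInit,\tFin]$.

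Theorem 1 of \cite{fisac-chen-2015} then says that $\hjSol_i := \valRTAT^{(i)}[\tarFn_i,\conFn_i;\tFin]$ solves the approximate TVP. The approximate problem carries data $\tarFn_i,\conFn_i$ and Hamiltonian $\hamiltonian_i$, where $\hamiltonian_i$ is built from $\dynamics_i$.

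\emph{Passage to the limit.} The functionals $\robRTAT[\tarFn_i,\conFn_i;\tFin]$ converge to $\robRTAT[\tarFn,\conFn;\tFin]$ locally uniformly on $\Lip\tms\RleT$. This is because on compact sets of Lipschitz paths the times and states range over a compact set, where $\tarFn_i,\conFn_i$ converge uniformly. The functionals share horizon $\tFin$ and are past-independent, so Lemma \ref{lem:limits} applies. To handle the change of dynamics, note that trajectories from a compact initial set stay in $B(0;i)$ for large $i$ on $[\tInit,\tFin]$, so they coincide with the true trajectories. Hence $\hjSol_i\to\hjSol$ locally uniformly.

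Finally, $\hamiltonian_i\to\hamiltonian$ locally uniformly, and so do the PDE operators. The standard stability of viscosity solutions under locally uniform convergence (Bardi--Capuzzo-Dolcetta, Prop.~II.2.2) then shows that $\hjSol$ is a viscosity solution of the limit PDE.

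\emph{Main obstacle.} I expect the delicate step to be matching the hypotheses of \cite{fisac-chen-2015} exactly: time regularity of $\dynamics_i$, their class of strategies, and possible use of bounded uniformly continuous data. I would handle this through the approximations above. I would also check that the value functions in the approximate problems coincide with ours, since both use non-anticipative disturbance strategies and $\inf\sup$.
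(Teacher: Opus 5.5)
Your proposal is correct and follows essentially the same route as the paper: reduce to bounded data via $\homeo$ and Lemma \ref{lem:homeomorphism}, approximate $\tarFn,\conFn$ by Lipschitz data, make the dynamics bounded and Lipschitz by modifying them outside a large ball, apply Theorem 1 of \cite{fisac-chen-2015}, pass to the limit with Lemma \ref{lem:limits} and Prop.~II.2.2 of \cite{Bardi-Dolcetta-Optimal-Control}, and obtain uniqueness from Theorem \ref{thm:comparison-principle}. The only differences are cosmetic: the paper localizes to a bounded open $E$ and modifies $\dynamics$ once, whereas you use an $i$-dependent family $\dynamics_i$; and if you actually mollify in $\tVal$ (a time-regularity issue the paper does not address), you would need an extra Gronwall-type convergence argument, because Lemma \ref{lem:limits} only handles changes in the functional, not in the dynamics.
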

\begin{proof} 
    By Lemma \ref{lem:homeomorphism} and the fact that $\valRTAT[\homeo \circ \tarFn, \homeo \circ \conFn; \tFin] = \homeo \circ \valRTAT[\tarFn, \conFn; \tFin]$, it suffices to assume $-1 \le \tarFn < 1$ and $-1 < \conFn \le 1$.
    (Otherwise, we can replace $\tarFn$ with $\homeo \circ \tarFn$ and $\conFn$ with $\homeo \circ \conFn$.)
    
    Let $E \sbs \Rn \tms (\tInit,\tFin)$ be bounded and open.
    We will first show that $\valRTAT[\tarFn,\conFn; \tFin]|_E$ is a viscosity solution of the HJ-PDE
    \begin{align}\label{proof:ra-hj-pde-on-E}
     -\min\big\{ &\conFn(\xVal,\tVal) - \hjDep, \max\big\{ \tarFn(\xVal,\tVal) - \hjDep, \nonumber\\
     &\ppt \hjDep + \hamiltonian(\xVal, \tVal, \nabla_x \hjDep) \big\} \big\} = 0,\quad (\xVal,\tVal) \in E.
    \end{align}
    By Assumptions \ref{assumption:compactness}-\ref{assumption:regularity}, there is some $R > 0$ such that $\|\standardTraj(\tDum)\| < R$ for all $\tDum \in (\tInit, \tFin]$, $\uSig \in \uSigs$, $\dSig \in \dSigs$, and $(\xVal, \tVal) \in E$.
    As such, we can assume that the dynamics function $\dynamics$ is bounded and Lipschitz, since modifying the dynamics outside of $B(0;R) \tms \uVals \tms \dVals \tms (\tInit,\tFin]$ does not change $\valRTAT[\tarFn,\conFn; \tFin]|_E$.

    Now, by density of Lipschitz functions within the continuous real-valued functions on compact sets, we can choose sequences of Lipschitz functions $\tarFn_i, \conFn_i: \Rn \tms (\tInit, \tFin] \to \R$ such that $\tarFn_i \to \tarFn$ and $\conFn_i \to \conFn$ locally uniformly.
    Since $\robRTAT[\tarFn_i,\conFn_i; \tFin] \to \robRTAT[\tarFn,\conFn; \tFin]$ locally uniformly, then by Lemma \ref{lem:limits}, we have that $\valRTAT[\tarFn_i,\conFn_i; \tFin]|_E \to \valRTAT[\tarFn,\conFn; \tFin]|_E$ uniformly.
    Since for each $i \in \N$, Theorem 1 in \cite{fisac-chen-2015} guarantees that $\hjDep = \valRTAT[\tarFn_i,\conFn_i; \tFin]|_E$ is a viscosity solution of 
    \begin{align*}
     -\min\big\{ \conFn_i(\xVal,\tVal) - \hjDep, &\max\big\{ \tarFn_i(\xVal,\tVal) - \hjDep, \\
     &\ppt \hjDep + \hamiltonian(\xVal, \tVal, \nabla_x \hjDep) \big\} \big\} = 0,\quad (\xVal,\tVal) \in E,
    \end{align*}
    it follows from Proposition II.2.2 in \cite{Bardi-Dolcetta-Optimal-Control} that $\valRTAT[\tarFn,\conFn; \tFin]|_E$ is a viscosity solution of \eqref{proof:ra-hj-pde-on-E}.
    
    Since $E$ was an arbitrary open set in $\Rn \tms (\tInit, \tFin]$ and $\valRTAT[\tarFn,\conFn; \tFin](\cdot, \tFin) = \min\{ \conFn(\cdot,\tFin), \tarFn(\cdot,\tFin)\}$, then $\valRTAT[\tarFn,\conFn; \tFin]|_{\Rn \tms (\tInit, \tFin]}$ indeed is a viscosity solution of \eqref{proof:ra-hj-pde}.
    That it is the unique viscosity solution follows from Theorem \ref{thm:comparison-principle}.
\end{proof}

\begin{proof}[Proof of Theorem \ref{thm:dpe}]

By Lemma \ref{lem:homeomorphism} and the fact that $\homeo \circ \valGRA[\graData] = \valGRA[\homeo \circ \tarFn, \homeo \circ \conFn, \homeo \circ \terFn; \tFin]$, it suffices to assume $-1 \le \tarFn < 1$, $-1 < \conFn \le 1$, and $-1 < \terFn < 1$.

For each $i \in \N$, define $\terFn_i:\Rn \tms \R \to \R$ by
\begin{equation}\label{proof:rhat}
    \terFn_i(\xVal, \tVal) = \begin{cases}
    -1 & \tVal \le \tFin - \frac{1}{i},\\
    \frac{(\tVal - (\tFin - \frac{1}{i}))\terFn(\xVal) - (\tFin - \tVal)}{1/i} & \tFin - \frac{1}{i} < \tVal \le \tFin,\\
    \terFn(\xVal) & \tVal > \tFin.
    \end{cases}
\end{equation}
Let $L > 0$, $\compact \ssbs \Lip$ be compact, $\tFinDum < \tFin$, and $i \in \N$ be sufficiently large that $\tFin - \frac{1}{i} > \tFinDum$.
Then we can choose some $R > 0$ such that $\|\xSig(\tVal)\| \le R$ for each $\xSig \in \compact$ and $\tVal \in [\tInit, \tFin]$.

Now fix $\tVal \in [\tFinDum, \tFin]$ and $\xSig \in \compact$.
Then 
\begin{align*}
\robRTAT[\max\{\tarFn, \terFn_i\}, \conFn; \tFin](\xSig, \tVal) =& \max\{\robRTAT[\tarFn, \conFn; \tFin](\xSig, \tVal),\\ 
&\qquad \robRTAT[\terFn_i, \conFn; \tFin](\xSig, \tVal) \},\\
\robGRA[\tarFn, \conFn, \terFn; \tFin](\xSig, \tVal) =& \max\{\robRTAT[\tarFn, \conFn; \tFin](\xSig, \tVal), \\
&\qquad \robRSAT[\conFn, \terFn; \tFin](\xSig, \tVal)\}.
\end{align*}
Thus
\begin{align*}
    &|\robRTAT[\max\{\tarFn, \terFn_i\}, \conFn; \tFin](\xSig, \tVal) - \robGRA[\tarFn, \conFn, \terFn; \tFin](\xSig, \tVal)| \\
    &\le |\robRTAT[\terFn_i, \conFn; \tFin](\xSig, \tVal) - \robRSAT[\conFn, \terFn; \tFin](\xSig, \tVal)|.
\end{align*}
Because $\robRTAT[\terFn_i, \conFn; \tFin](\xSig, \tVal) \ge \robRSAT[\conFn, \terFn; \tFin](\xSig, \tVal)$ we in fact have
\begin{align*}
    &|\robRTAT[\max\{\tarFn, \terFn_i\}, \conFn; \tFin](\xSig, \tVal) - \robGRA[\tarFn, \conFn, \terFn; \tFin](\xSig, \tVal)| \\
    &\le \robRTAT[\terFn_i, \conFn; \tFin](\xSig, \tVal) - \robRSAT[\conFn, \terFn; \tFin](\xSig, \tVal) \\
    &\le \max_{\tDum \in [\tFin - \frac{1}{i}, \tFin]}\min\{\terFn_i(\xSig(\tDum), \tDum), \min_{\tDumDum \in [\tInit, \tDum]} \sconeval{\xSig}{\tDumDum}\} - \\
    &\qquad \min\{\terEval{\xSig}{\tFin}, \min_{\tDumDum \in [\tInit, \tFin]} \sconeval{\xSig}{\tDumDum} \},
\end{align*}
where the latter inequality follows from the fact that $\terFn_i = -1$ on $\Rn \tms (\tInit, \tFin - \frac{1}{i})$.
Note that
\begin{align*}
    &\max_{\tDum \in [\tFin - \frac{1}{i}, \tFin]}\min\{\terFn_i(\xSig(\tDum), \tDum), \min_{\tDumDum \in [\tInit, \tDum]} \sconeval{\xSig}{\tDumDum}\} \\
    &\le \min\{\max_{\tDum \in [\tFin - \frac{1}{i}, \tFin]} \terFn_i(\xSig(\tDum), \tDum), \max_{\tDum \in [\tFin - \frac{1}{i}, \tFin]} \min_{\tDumDum \in [\tInit, \tDum]} \sconeval{\xSig}{\tDumDum}\}\\
    &\le \min\{\max_{\tDum \in [\tFin - \frac{1}{i}, \tFin]} \terFn(\xSig(\tDum)), \min_{\tDumDum \in [\tInit, \tFin - \frac{1}{i}]} \sconeval{\xSig}{\tDumDum}\}.
\end{align*}
It follows that
\begin{align*}
    &|\robRTAT[\max\{\tarFn, \terFn_i\}, \conFn; \tFin](\xSig, \tVal) - \robGRA[\tarFn, \conFn, \terFn; \tFin](\xSig, \tVal)|\\
    &\le \max\{|\max_{\tDum \in [\tFin - \frac{1}{i}, \tFin]} \terFn(\xSig(\tDum)) - \terFn(\xSig(\tFin))|, \\
    &\qquad\qquad\min_{\tDumDum \in [\tInit, \tFin - \frac{1}{i}]} \sconeval{\xSig}{\tDumDum}- \min_{\tDumDum \in [\tInit, \tFin]} \sconeval{\xSig}{\tDumDum}\}\}\\
    &\le \max\{|\max_{\tDum \in [\tFin - \frac{1}{i}, \tFin]} \terFn(\xSig(\tDum)) - \terFn(\xSig(\tFin))|, \\
    &\qquad\qquad \sconeval{\xSig}{\tFin - \frac{1}{i}}- \min_{\tDumDum \in [\tFin - \frac{1}{i}, \tFin]} \sconeval{\xSig}{\tDumDum}\}\}\\
    &\le \sup_{\xDum \in B(0,R)} \sup_{\xDumDum \in B(\xDum, L/i)} \max_{\tDum, \tDumDum \in [\tFin - \frac{1}{i}, \tFin]} \\
    &\qquad\max\{|\terFn(\xDum) - \terFn(\xDumDum)|, |\conFn(\xDum, \tDum) - \conFn(\xDumDum, \tDumDum)|\}.
\end{align*}

As the right-hand-side of the above inequality is independent of $\xSig$ and $\tInit$ and tends to $0$ as $i \to \ii$, it follows that $\robRTAT[\max\{\tarFn, \terFn_i\}, \conFn; \tFin] \to \robGRA[\tarFn, \conFn, \terFn; \tFin]$ locally uniformly on $\Lip \tms (\tInit, \tFin)$.
By Lemma \ref{lem:limits}, $\valRTAT[\max\{\tarFn, \terFn_i\}, \conFn; \tFin] \to \valGRA[\tarFn, \conFn, \terFn; \tFin]$ locally uniformly.

For each $i \in \N$, let $\hjFn_i: \Rn \tms (\tInit, \tFin) \tms \R \tms \Rn \tms \R \to \R$ be given by
\begin{align*}
  \hjFn_i&(\xVal, \tVal, \hjDep, \costate, \tcostate) = \min\{\conFn(\xVal, \tVal) - \hjDep,\\
  &\max\{\max\{\tarFn(\xVal, \tVal), \terFn_i(\xVal, \tVal)\} - \hjDep, \tcostate + \hamiltonian(\xVal, \tVal, \costate)\}\},  
\end{align*}
and let $\hjFn: \Rn \tms (\tInit, \tFin) \tms \R \tms \Rn \tms \R \to \R$ be given by
\begin{align*}
  \hjFn(\xVal, \tVal, \hjDep, \costate, \tcostate) = \min\{&\conFn(\xVal, \tVal) - \hjDep,\\
  &\max\{\tarFn(\xVal, \tVal) - \hjDep, \tcostate + \hamiltonian(\xVal, \tVal, \costate)\}\}.  
\end{align*}

Because $\max\{\tarFn, \terFn_i\} \to \tarFn$ locally uniformly on $\Rn \tms (\tInit, \tFin)$, it follows that $\hjFn_i \to \hjFn$ locally uniformly on this set as well.
Since $\valRTAT[\max\{\tarFn, \terFn_i\}, \conFn; \tFin]$ is a viscosity solution of the HJ-PDE $-\hjFn_i(\xVal, \tVal, \hjDep, \nabla_\xVal \hjSol, \ppt \hjSol) = 0$ by
Lemma \ref{lem:ra-hj-tvp}, it then follows from Proposition II.2.2 in \cite{Bardi-Dolcetta-Optimal-Control} that $\valGRA[\graData]$ is a viscosity solution of $-\hjFn(\xVal, \tVal, \hjDep, \nabla_\xVal \hjSol, \ppt \hjSol) = 0$.
But this latter HJ-PDE is precisely \eqref{eqn:hjr-pde}.

Because $\valGRA[\graData](\cdot, \tFin) = \min\{\conFn(\cdot, \tFin), \max\{\tarFn(\cdot, \tFin), \terFn(\cdot)\}\}$, it then follows that $\valGRA[\graData]$ is a viscosity solution of \eqref{eqn:gra-tvp}.
But by Theorem \ref{thm:comparison-principle}, it is then in fact the unique viscosity solution of \ref{eqn:gra-tvp}.    
\end{proof}

\subsection{A viscosity solution of (\ref{eqn:hjr-pde}) that does not continuously extend to the terminal boundary}\label{sec:pathology}

Consider the HJ-PDE
\begin{equation}\label{eqn:pathological-pde}
    -\partial_\tVal \hjDep - |\nabla_{\xVal} \hjDep| = 0, \quad \xVal \in \R,~\tVal \in (0,1).
\end{equation}
Note that this PDE is equivalent to \eqref{eqn:hjr-pde} when $\tarFn = -\ii$, $\conFn = \ii$, the system dynamics are $\dot{\xSig} = \uSig + \dSig$ (with $\xSig$ scalar), the control bounds are $\uVals := \{ \uVal \in \R \mid |\uVal| \le 1 \}$, and the disturbance bounds are $\dVals = \{0\}$.

Define $\hjSol:\R \tms (0,1) \to \R$ by
\begin{equation*}
    \hjSol(\xVal, \tVal) := -\frac{1}{|\xVal| + 1 - \tVal}.
\end{equation*}
We show that $\hjSol$ is a viscosity solution of $\eqref{eqn:pathological-pde}$.
Let $\xNot \in \R$ and $\tNot \in (0, 1)$.
If $\xNot \ne 0$, then $\hjSol$ is differentiable at $(\xNot,\tNot)$ and in particular:
\begin{align*}
    \ppt \hjSol(\xNot, \tNot) &= -\frac{1}{(|\xNot| + 1 - \tNot)^2} \\ 
    |\nabla_\xVal \hjSol(\xNot, \tNot)| &= \frac{1}{(|\xNot| + 1 - \tNot)^2},
\end{align*}
so $\eqref{eqn:pathological-pde}$ holds at $(\xNot, \tNot)$.

Now we consider the case where $\xNot = 0$ and $\tNot \in (0, 1)$.
Observe that
\begin{align*}
    \nabla_\xVal \hjSol(0^+, \tNot) &= \frac{1}{(1 - \tNot)^2},~\nabla_\xVal \hjSol(0^-, \tNot) &= -\frac{1}{(1 - \tNot)^2}.
\end{align*}
Consider a continuously differentiable $\hjTest: \R \tms (0, 1) \to \R$.
By the above one-sided derivatives, it is impossible for $\hjSol - \hjTest$ to have a local maximum at $(\xNot, \tNot)$, so we can conclude $\hjSol$ is a viscosity sub-solution of \eqref{eqn:pathological-pde}.
Assume instead that $\hjTest$ has a local minimum at $(\xNot, \tNot)$.
Then
\begin{equation*}
    \ppt \hjTest(\xNot, \tNot) = -\frac{1}{(1 - \tNot)^2} ,~ |\nabla_\xVal \hjTest(0, \tNot)| \le \frac{1}{(1 - \tNot)^2}
\end{equation*}
so that
\begin{equation*}
    -\partial_\tVal \hjTest(0, \tNot) - |\nabla_{\xVal} \hjTest(0, \tNot)| \ge 0.
\end{equation*}
Thus we can also conclude $\hjSol$ is a viscosity super-solution of \eqref{eqn:pathological-pde}.

So $\hjSol$ is indeed a viscosity solution of \eqref{eqn:pathological-pde}, but it does not continuously extend to the terminal boundary, $\R \tms \{1\}$, as $\lim_{\tVal \to 1^-} \hjSol(0, \tVal) = -\ii$.

\bibliographystyle{ieeetr}
\bibliography{references}

@ARTICLE{mitchell-2005,
  author={Mitchell, I.M. and Bayen, A.M. and Tomlin, C.J.},
  journal={IEEE Transactions on Automatic Control}, 
  title={A time-dependent {H}amilton-{J}acobi formulation of reachable sets for continuous dynamic games}, 
  year={2005},
  volume={50},
  number={7},
  pages={947-957},
  doi={10.1109/TAC.2005.851439}}

@ARTICLE{lygeros-2011,
  author={Margellos, Kostas and Lygeros, John},
  journal={IEEE Transactions on Automatic Control}, 
  title={{H}amilton–{J}acobi Formulation for Reach–Avoid Differential Games}, 
  year={2011},
  volume={56},
  number={8},
  pages={1849-1861},
  doi={10.1109/TAC.2011.2105730}}

@inproceedings{fisac-chen-2015,
author = {Fisac, Jaime F. and Chen, Mo and Tomlin, Claire J. and Sastry, S. Shankar},
title = {Reach-avoid problems with time-varying dynamics, targets and constraints},
booktitle={Proceedings of the 18th International Conference on Hybrid Systems: Computation and Control (HSCC)},
year = {2015},
isbn = {9781450334334},
doi = {10.1145/2728606.2728612},
pages = {11-20},
location = {Seattle, Washington},
}

@article{evans-1984,
 author = {L. C. Evans and P. E. Souganidis},
 journal = {Indiana University Mathematics Journal},
 volume = {33},
 number = {5},
 pages = {773-797},
 title = {Differential Games and Representation Formulas for Solutions of {H}amilton-{J}acobi-{I}saacs Equations},
 year = {1984},
 doi = {10.1512/iumj.1984.33.33040}
}

@InProceedings{stl-meets-reachability,
author="Chen, Mo
and Tam, Qizhan
and Livingston, Scott C.
and Pavone, Marco",
title="Signal Temporal Logic Meets Reachability: Connections and Applications",
booktitle="Algorithmic Foundations of Robotics XIII",
year="2020",
pages="581-601",
isbn="978-3-030-44051-0"
}

@INPROCEEDINGS{choi-2021,
  author={Choi, Jason J. and Lee, Donggun and Sreenath, Koushil and Tomlin, Claire J. and Herbert, Sylvia L.},
  booktitle={2021 60th IEEE Conference on Decision and Control (CDC)}, 
  title={Robust Control Barrier–Value Functions for Safety-Critical Control}, 
  year={2021},
  volume={},
  number={},
  pages={6814-6821},
  doi={10.1109/CDC45484.2021.9683085}}

@inproceedings{so-2024,
 author = {So, Oswin and Ge, Cheng and Fan, Chuchu},
 booktitle = {Advances in Neural Information Processing Systems (NeurIPS) 37},
 doi = {10.52202/079017-0974},
 pages = {30951-30984},
 title = {Solving Minimum-Cost Reach Avoid using Reinforcement Learning},
 url = {https://proceedings.neurips.cc/paper_files/paper/2024/file/3750e99b522bd36a099d2e8b9f0550c7-Paper-Conference.pdf},
 year = {2024}
}

@inproceedings{hsu-2021, series={RSS2021}, title={Safety and Liveness Guarantees through Reach-Avoid Reinforcement Learning}, url={http://dx.doi.org/10.15607/RSS.2021.XVII.077}, DOI={10.15607/rss.2021.xvii.077}, booktitle={Robotics: Science and Systems XVII}, author={Hsu, Kai-Chieh and Rubies-Royo, Vicenç and Tomlin, Claire and Fisac, Jaime}, year={2021}, month=Jul, collection={RSS2021} }

@book{Bardi-Dolcetta-Optimal-Control,
author = {Bardi, M. and Capuzzo-Dolcetta, I.},
publisher = {Birkhauser},
title = {Optimal Control and Viscosity Solutions of {H}amilton-{J}acobi-{B}ellman Equations },
year = {1997},
}

@article{sharpless2026bellman,
  title={Bellman value decomposition for task logic in safe optimal control},
  author={Sharpless, William and So, Oswin and Hirsch, Dylan and Herbert, Sylvia and Fan, Chuchu},
  journal={arXiv preprint arXiv:2602.19532},
  year={2026}
}

@inproceedings{sharpless2026dual,
  title     = {Dual-Objective Reinforcement Learning with Novel {H}amilton-{J}acobi-{B}ellman Formulations},
  author    = {Sharpless, William and Hirsch, Dylan and Tonkens, Sander and Shinde, Nikhil and Herbert, Sylvia},
  booktitle = {The Fourteenth International Conference on Learning Representations (ICLR)},
  year      = {2026},
}

@inbook{donze-robustness-metric, title={Robust Satisfaction of Temporal Logic over Real-Valued Signals}, ISBN={9783642152979}, ISSN={1611-3349}, url={http://dx.doi.org/10.1007/978-3-642-15297-9_9}, DOI={10.1007/978-3-642-15297-9_9}, booktitle={Formal Modeling and Analysis of Timed Systems}, publisher={Springer Berlin Heidelberg}, author={Donzé, Alexandre and Maler, Oded}, year={2010}, pages={92-106} }

@inbook{release-operator, title={Temporal-logic Queries}, ISBN={9783540450474}, ISSN={1611-3349}, url={http://dx.doi.org/10.1007/10722167_34}, DOI={10.1007/10722167_34}, booktitle={Computer Aided Verification}, publisher={Springer Berlin Heidelberg}, author={Chan, William}, year={2000}, pages={450-463} }

@inproceedings{HJR-Survey, 
 title={{H}amilton-{J}acobi reachability: A brief overview and recent advances}, 
 DOI={10.1109/cdc.2017.8263977}, 
 booktitle={2017 IEEE Conference on Decision and Control (CDC)},
 author={Bansal, Somil and Chen, Mo and Herbert, Sylvia and Tomlin, Claire J.}, 
 year={2017},
 pages={2242-2253} }

@INPROCEEDINGS{Reachability-RL,
  author={Fisac, Jaime F. and Lugovoy, Neil F. and Rubies-Royo, Vicenç and Ghosh, Shromona and Tomlin, Claire J.},
  booktitle={2019 International Conference on Robotics and Automation (ICRA)}, 
  title={Bridging {H}amilton-{J}acobi Safety Analysis and Reinforcement Learning}, 
  year={2019},
  volume={},
  number={},
  pages={8550-8556},
  doi={10.1109/ICRA.2019.8794107}}

@INPROCEEDINGS{Xiang-CDC-2025,
  author={Chen, Yu and Li, Shaoyuan and Yin, Xiang},
  booktitle={2025 IEEE Conference on Decision and Control (CDC)}, 
  title={Control Synthesis for Multiple Reach-Avoid Tasks via {H}amilton-{J}acobi Reachability Analysis}, 
  year={2025},
  volume={},
  number={},
  pages={5980-5985},
  doi={10.1109/CDC57313.2025.11312352}}

@INPROCEEDINGS{Jiang-2020,
  author={Jiang, Frank J. and Gao, Yulong and Xie, Lihua and Johansson, Karl H.},
  booktitle={2020 59th IEEE Conference on Decision and Control (CDC)}, 
  title={Ensuring safety for vehicle parking tasks using {H}amilton-{J}acobi reachability analysis}, 
  year={2020},
  volume={},
  number={},
  pages={1416-1421},
  doi={10.1109/CDC42340.2020.9304186}}

@inproceedings{Jiang-2024, title={Guaranteed Completion of Complex Tasks via Temporal Logic Trees and {H}amilton-{J}acobi Reachability}, url={http://dx.doi.org/10.1109/CDC56724.2024.10886233}, DOI={10.1109/cdc56724.2024.10886233}, booktitle={2024 IEEE 63rd Proc. Conference on Decision and Control (CDC)}, author={Jiang, Frank J. and Arfvidsson, Kaj Munhoz and He, Chong and Chen, Mo and Johansson, Karl H.}, year={2024}, month=dec, pages={5203-5210} }

@INPROCEEDINGS{viscosity-cbfs,
  author={Hirsch, Dylan and Fisac, Jaime Fernández and Herbert, Sylvia},
  booktitle={2026 American Control Conference (ACC)}, 
  title={Viscosity {CBF}s: Bridging the Control Barrier Function and {H}amilton-{J}acobi Reachability Frameworks in Safe Control Theory}, 
  year={2026},
  volume={},
  number={},
  pages={593-600},
  doi={}}

@inproceedings{Ganai_2023,
 title={Iterative Reachability Estimation for Safe Reinforcement Learning}, url={http://dx.doi.org/10.52202/075280-3058}, DOI={10.52202/075280-3058}, booktitle={Advances in Neural Information Processing Systems (NeurIPS) 36}, author={Ganai, Milan and Gong, Zheng and Yu, Chenning and Herbert, Sylvia and Gao, Sicun}, year={2023}, pages={69764-69797}, collection={NeurIPS 2023} }

@article{Ganai_2024, title={{H}amilton-{J}acobi Reachability in Reinforcement Learning: A Survey}, volume={3}, ISSN={2694-085X}, url={http://dx.doi.org/10.1109/OJCSYS.2024.3449138}, DOI={10.1109/ojcsys.2024.3449138}, journal={IEEE Open Journal of Control Systems}, author={Ganai, Milan and Gao, Sicun and Herbert, Sylvia L.}, year={2024}, pages={310-324} }

@article{Lee_2023, title={Convexifying State-Constrained Optimal Control Problem}, volume={68}, ISSN={2334-3303}, url={http://dx.doi.org/10.1109/TAC.2022.3221704}, DOI={10.1109/tac.2022.3221704}, number={9}, journal={IEEE Transactions on Automatic Control}, author={Lee, Donggun and Deka, Shankar A. and Tomlin, Claire J.}, year={2023}, month=Sep, pages={5608-5615} }

@article{Bokanowski_2010, title={Reachability and Minimal Times for State Constrained Nonlinear Problems without Any Controllability Assumption}, volume={48}, ISSN={1095-7138}, url={http://dx.doi.org/10.1137/090762075}, DOI={10.1137/090762075}, number={7}, journal={SIAM Journal on Control and Optimization}, author={Bokanowski, Olivier and Forcadel, Nicolas and Zidani, Hasnaa}, year={2010}, month=Jan, pages={4292-4316} }

@book{Friedman-Differential-Games,
author = {Friedman, Avner},
edition = {{D}over},
publisher = {Dover Publications},
title = {Differential Games},
year = {2013},
}

@book{Isaacs-Differential-Games,
author = {Isaacs, Rufus},
publisher = {Dover Publications},
title = {Differential Games: a Mathematical Theory with Applications to Warfare and Pursuit Control and Optimization},
year = {2012},
}

@article{hirsch-2206,
  doi = {10.48550/ARXIV.2607.14023},
  url = {https://arxiv.org/abs/2607.14023},
  author = {Hirsch, Dylan and Sharpless, William and Herbert, Sylvia},
  title = {Exact Decomposition of Adversarial Dual-Objective Value Functions, with Applications to Optimal Drug Dosing},
  journal = {arXiv preprint arXiv:2607.14023},
  year = {2026},
  copyright = {Creative Commons Attribution 4.0 International}
}

@inproceedings{Lee_2020, title={{H}amilton-{J}acobi Formulation for State-Constrained Optimal Control and Zero-Sum Game Problems}, url={http://dx.doi.org/10.1109/CDC42340.2020.9304334}, DOI={10.1109/cdc42340.2020.9304334}, booktitle={2020 59th IEEE Conference on Decision and Control (CDC)},
 author={Lee, Donggun and Keimer, Alexander and Bayen, Alexandre M. and Tomlin, Claire J.}, year={2020}, month=Dec, pages={1078-1085} }

@article{Altarovici_2012,
 title={A general {H}amilton-{J}acobi framework for non-linear state-constrained control problems}, volume={19}, ISSN={1262-3377}, url={http://dx.doi.org/10.1051/cocv/2012011}, DOI={10.1051/cocv/2012011}, number={2}, journal={ESAIM: Control, Optimisation and Calculus of Variations}, author={Altarovici, Albert and Bokanowski, Olivier and Zidani, Hasnaa}, year={2012}, month=Jun, pages={337-357} }

@article{Fisac_robotics, title={A General Safety Framework for Learning-Based Control in Uncertain Robotic Systems}, volume={64}, ISSN={2334-3303}, url={http://dx.doi.org/10.1109/TAC.2018.2876389}, DOI={10.1109/tac.2018.2876389}, number={7}, journal={IEEE Transactions on Automatic Control}, author={Fisac, Jaime F. and Akametalu, Anayo K. and Zeilinger, Melanie N. and Kaynama, Shahab and Gillula, Jeremy and Tomlin, Claire J.}, year={2019}, month=Jul,pages={2737-2752} }

@inproceedings{Pandya_2025, title={Robots that Learn to Safely Influence via Prediction-Informed Reach-Avoid Dynamic Games}, url={http://dx.doi.org/10.1109/ICRA55743.2025.11128803}, DOI={10.1109/icra55743.2025.11128803}, booktitle={2025 IEEE International Conference on Robotics and Automation (ICRA)}, author={Pandya, Ravi and Liu, Changliu and Bajcsy, Andrea}, year={2025}, month=May, pages={14330-14337} }

@ARTICLE{reach-and-stabilize-avoid,
  author={Li, Boyang and Gong, Zheng and Herbert, Sylvia},
  journal={IEEE Transactions on Automatic Control}, 
  title={Solving Reach- and Stabilize-Avoid Problems Using Discounted Reachability}, 
  year={2026},
  volume={},
  number={},
  pages={1-16},
  doi={10.1109/TAC.2026.3693989}}

@article{FasTrack, title={FaSTrack:A Modular Framework for Real-Time Motion Planning and Guaranteed Safe Tracking}, volume={66}, ISSN={2334-3303}, url={http://dx.doi.org/10.1109/TAC.2021.3059838}, DOI={10.1109/tac.2021.3059838}, number={12}, journal={IEEE Transactions on Automatic Control}, author={Chen, Mo and Herbert, Sylvia L. and Hu, Haimin and Pu, Ye and Fisac, Jaime Fernandez and Bansal, Somil and Han, SooJean and Tomlin, Claire J.}, year={2021}, month=Dec, pages={5861-5876} }

@InProceedings{pmlr-v305-tonkens25a,
  title = 	 {From Space to Time: Enabling Adaptive Safety with Learned Value Functions via Disturbance Recasting},
  author =       {Tonkens, Sander and Shinde, Nikhil Uday and Begzadi\'{c}, Azra and Yip, Michael C. and Cortes, Jorge and Herbert, Sylvia Lee},
  booktitle = 	 {Proceedings of The 9th Conference on Robot Learning},
  pages = 	 {4103-4122},
  year = 	 {2025},
  volume = 	 {305},
  month = 	 {Sep}
}

@article{efficient-state-constrained, title={Efficient Computation of State-Constrained Reachability Problems Using {H}opf–{L}ax Formulae}, volume={68}, ISSN={2334-3303}, url={http://dx.doi.org/10.1109/TAC.2023.3241180}, DOI={10.1109/tac.2023.3241180}, number={11}, journal={IEEE Transactions on Automatic Control}, author={Lee, Donggun and Tomlin, Claire J.}, year={2023}, month=Nov, pages={6481-6495} }

@book{baier2008principles,
  title={Principles of model checking},
  author={Baier, Christel and Katoen, Joost-Pieter and Larsen, Kim Guldstrand},
  volume={735},
  year={2008},
  publisher={MIT press Cambridge}
}

@inbook{STL, title={Monitoring Temporal Properties of Continuous Signals}, ISBN={9783540302063}, ISSN={1611-3349}, url={http://dx.doi.org/10.1007/978-3-540-30206-3_12}, DOI={10.1007/978-3-540-30206-3_12}, booktitle={Formal Techniques, Modelling and Analysis of Timed and Fault-Tolerant Systems}, publisher={Springer Berlin Heidelberg}, author={Maler, Oded and Nickovic, Dejan}, year={2004}, pages={152-166} }

@article{Gammoudi_2023, title={A differential game control problem with state constraints}, volume={13}, ISSN={2156-8499}, url={http://dx.doi.org/10.3934/mcrf.2022008}, DOI={10.3934/mcrf.2022008}, number={2}, journal={Mathematical Control and Related Fields}, publisher={American Institute of Mathematical Sciences (AIMS)}, author={Gammoudi, Nidhal and Zidani, Hasnaa}, year={2023}, pages={554-582} }

@article{Chen_2018, title={{H}amilton–{J}acobi Reachability: Some Recent Theoretical Advances and Applications in Unmanned Airspace Management}, volume={1}, ISSN={2573-5144}, url={http://dx.doi.org/10.1146/annurev-control-060117-104941}, DOI={10.1146/annurev-control-060117-104941}, number={1}, journal={Annual Review of Control, Robotics, and Autonomous Systems}, author={Chen, Mo and Tomlin, Claire J.}, year={2018}, month=May, pages={333-358} }

@book{Folland-Real-Analysis,
author = {Folland, Gerald B},
edition = {2},
publisher = {John Wiley \& Sons, Inc.},
title = {Real Analysis: Modern Techniques and Their Applications},
year = {1999},
}

@article{Ishii-uniqueness-unbounded-1984,
 author = {Hitoshi Ishii},
 title = {Uniqueness of Unbounded Viscosity Solution of {H}amilton-{J}acobi Equations},
 journal = {Indiana University Mathematics Journal},
 volume = {33},
 number = {5},
 pages = {721--748},
 year = {1984},
 doi = {10.1512/iumj.1984.33.33038}
}

@ARTICLE{Fialho-Georgiou-TAC-Worst-Case-Analysis-1999,
  author={Fialho, I.J. and Georgiou, T.T.},
  journal={IEEE Transactions on Automatic Control}, 
  title={Worst case analysis of nonlinear systems}, 
  year={1999},
  volume={44},
  number={6},
  pages={1180-1196},
  doi={10.1109/9.769372}}

@INPROCEEDINGS{deepreach,
  author={Bansal, Somil and Tomlin, Claire J.},
  booktitle={2021 IEEE International Conference on Robotics and Automation (ICRA)}, 
  title={{D}eep{R}each: A Deep Learning Approach to High-Dimensional Reachability}, 
  year={2021},
  volume={},
  number={},
  pages={1817-1824},
  doi={10.1109/ICRA48506.2021.9561949}}

@INPROCEEDINGS{lin-bansal-2023,
  author={Lin, Albert and Bansal, Somil},
  booktitle={2023 IEEE International Conference on Robotics and Automation (ICRA)}, 
  title={Generating Formal Safety Assurances for High-Dimensional Reachability}, 
  year={2023},
  volume={},
  number={},
  pages={10525-10531},
  doi={10.1109/ICRA48891.2023.10160600}}

\end{document}